\documentclass{article}

\usepackage[preprint]{neurips_2026}

\usepackage{amsmath,amsfonts,bm}

\def\eqref#1{equation~\ref{#1}}

\def\1{\bm{1}}

\def\va{{\bm{a}}}

\DeclareMathAlphabet{\mathsfit}{\encodingdefault}{\sfdefault}{m}{sl}
\SetMathAlphabet{\mathsfit}{bold}{\encodingdefault}{\sfdefault}{bx}{n}

\def\gA{{\mathcal{A}}}

\def\gE{{\mathcal{E}}}

\def\gG{{\mathcal{G}}}
\def\gH{{\mathcal{H}}}

\def\gN{{\mathcal{N}}}

\def\gS{{\mathcal{S}}}

\def\sF{{\mathbb{F}}}

\def\sK{{\mathbb{K}}}

\def\sP{{\mathbb{P}}}

\newcommand{\KL}{D_{\mathrm{KL}}}

\usepackage[utf8]{inputenc} 
\usepackage[T1]{fontenc}    
\usepackage{hyperref}       
\usepackage{url}            
\usepackage{booktabs}       
\usepackage{amsfonts}       
\usepackage{nicefrac}       
\usepackage{microtype}      
\usepackage{cleveref}       
\usepackage{graphicx}
\usepackage{wrapfig}
\usepackage{enumitem}
\usepackage{doi}
\usepackage[ruled]{algorithm2e}

\usepackage{amssymb}

\newtheorem{theorem}{Theorem}[section]
\newtheorem{lemma}[theorem]{Lemma}
\newtheorem{proposition}[theorem]{Proposition}

\newtheorem{definition}{Definition}[section]
\newtheorem{assumption}{Assumption}[section]

\newenvironment{proof}{{\noindent\bf Proof.}}{\hfill $\square$\par}

\title{Last-Iterate Convergence of Policy Dynamics in Zero-Sum Networked Separable Markov Games}

\author{%
  Zailin Ma\thanks{Email: mazailin@stu.pku.edu.cn} \\
  School of Mathematical Sciences\\
  Peking University\\
}

\begin{document}
\maketitle

\begin{abstract}
Solving Nash equilibria for general multi-player Markov games is computationally intractable, while two-player zero-sum Markov games admit fast last-iterate policy-optimization methods. Finite-horizon zero-sum networked separable Markov games occupy an important middle ground: they retain global competition structure through pairwise interactions, while preserving computational tractability of Nash equilibria (NE) in the full-information and known-transition setting. Existing algorithms for this class either proceed through equilibrium-collapse arguments for a simplified setting where a single controller determines the transition probability, or backward dynamic programming that relies on equilibrium solvers at each stage. However, the design and analysis of direct policy-update approaches remain inadequate. To address this issue, we propose the entropy-regularized optimistic multiplicative weights update (ER-OMWU), a complementary single-loop policy dynamic that updates players' policies symmetrically and returns an approximate NE in the last iteration. We provide a first last-iterate convergence analysis of policy dynamics in the games of interest: after
\(
\widetilde O\left(1/{\epsilon}\right)
\)
iterations, the returned policy is an \(\epsilon\)-approximate Nash equilibrium. The result preserves the near-linear convergence rate achieved by policy optimization in two-player zero-sum Markov games, but extends the policy-dynamics viewpoint to a more complicated but structured multi-player setting.
\end{abstract}

\section{Introduction}

Multi-agent reinforcement learning studies sequential decision making in environments where agents jointly affect rewards and state transitions. Nash equilibrium is a central solution concept for such strategic systems, and equilibrium-computation methods have played an important role in benchmark problems in artificial intelligence and game playing \citep{bowling2015heads,silver2017mastering,moravvcik2017deepstack,brown2018superhuman,brown2019superhuman,brown2020combining,perolat2022mastering}. The main obstacle is that general multi-player Markov games inherit the computational difficulty of multi-player normal-form games: computing even approximate Nash equilibria is intractable in general \citep{daskalakis2009complexity,chen2009settling,etessami2010complexity,rubinstein2017settling}. This motivates the search for structured competitive Markov games that retain meaningful multi-player interactions while permitting efficient equilibrium computation and learning.

Zero-sum networked separable Markov games provide one such structured class. They model multi-player competition through local pairwise interactions while preserving an aggregate zero-sum structure, extending the zero-sum networked separable games studied in the normal-form setting \citep{cai2011minmax,cai2016zero,ao2022asynchronous}. The Markovian setting admits useful equilibrium structure: in the finite-horizon setting, approximate Nash equilibria can be computed by value iterations which solve the zero-sum networked separable normal-form games at every state and stage \citep{park2023multi,kalogiannis2023zero}. In particular, these procedures achieve near-linear convergence rate and provide a baseline for solving Nash equilibria. However, the paradigm of learning equilibria through policy optimization remains underexplored.

Specifically, when a zero-sum networked separable Markov game has only two players, it degenerates to a two-player zero-sum Markov game. For this case, a complementary line of work shows that policy optimization can achieve fast last-iterate convergence. Recent methods based on optimistic and entropy-regularized updates obtain finite-time equilibrium guarantees in both discounted and episodic settings \citep{wei2021last,zhang2022policy,erez2023regret,cen2022faster,yang2023ot}. In particular, entropy-regularized OMWU attains convergence rate of near-linear dependence on $1/\epsilon$ towards an $\epsilon$-approximate Nash equilibrium \citep{cen2022faster}. These results establish that policy optimization can be both direct and fast in two-player competition, but do not address whether the same behavior persists in the generalized multi-player zero-sum networked separable case.

This paper studies that question in finite-horizon zero-sum networked separable Markov games. The finite-horizon restriction is principled: in the infinite-horizon discounted setting, computing stationary equilibria in this model class is computationally hard in general \citep{park2023multi}. We ask whether a policy-optimization method can attain a last-iterate approximate Nash equilibrium in finite-horizon zero-sum networked separable Markov games with the same near-linear dependence on $1/\epsilon$ as in two-player zero-sum Markov games. 

We answer this question affirmatively. We propose entropy-regularized optimistic multiplicative weights update (ER-OMWU), a single-loop policy dynamics that combines symmetric optimistic policy updates with edge-wise $Q$-tracking. ER-OMWU maintains and updates decomposed value estimates along one evolving policy trajectory. In contrast to value-iteration-based procedures, it does not repeatedly solve an induced normal-form equilibrium problem at every state and stage in a backward manner. This paradigm connects with the commonly-used policy optimization algorithms in multi-agent reinforcement learning more naturally.

We prove that the last iterate of ER-OMWU is an $\epsilon$-approximate Nash equilibrium after
\(
\widetilde O\!\left({N^2H^3\log A}/{\epsilon}\right)
\)
iterations, where $N$ is the number of players, $H$ is the horizon of the game and $A$ is the number of actions of each player. Thus, policy optimization retains the near-linear $1/\epsilon$ accuracy dependence previously available through value-iteration-based methods, while extending fast last-iterate convergence result beyond two-player zero-sum Markov games. Technically, the analysis couples a KL-type contraction for optimistic policy updates with backward edge-wise $Q$-tracking recursions, and then converts the resulting regularized equilibrium error into a Nash-equilibrium gap. Our contributions are summarized as follows.
\begin{itemize}[leftmargin=*]
    \item {We propose a policy optimization algorithm, ER-OMWU, for solving NE in zero-sum networked separable Markov games.} This algorithm maintains edge-wise value estimates and performs symmetric policy updates among players along a single policy trajectory.

    \item We prove that, with the entropy temperature chosen at the target accuracy scale, the last iterate of ER-OMWU is an $\epsilon$-approximate Nash equilibrium after
    \(
        \widetilde O\!\left({N^2H^3\log A}/{\epsilon}\right)
    \)
    iterations. Hence, the convergence rate of policy optimization in the games of interests achieves the same near-linear dependence on $1/\epsilon$ as in the case of two-player zero-sum Markov games, matching the results of the value-iteration-based methods. 

    \item We evaluate ER-OMWU on finite-horizon zero-sum networked separable Markov games and compare it with a value-iteration baseline equipped with inner OMWU stage-game solvers. The experiments show that ER-OMWU reaches comparable terminal Nash-equilibrium gaps under matched update budgets. Moreover, as the entropy temperature is decreased to target smaller equilibrium gaps, the observed convergence iteration follows the near-linear $1/\epsilon$ trend predicted by our theory.
\end{itemize}

\subsection{Related work}

\paragraph{Optimistic policy learning in normal-form games and Markov games.}
Optimistic and regularized learning dynamics are central tools for fast equilibrium computation in games. In normal-form games, early fast rates for two-player zero-sum games \citep{daskalakis2011near,rakhlin2013optimization} were later extended to multi-player general-sum settings through optimistic regularized learning and higher-order smoothness arguments \citep{syrgkanis2015fast,chen2020hedging,daskalakis2021near,anagnostides2022near,anagnostides2022uncoupled}. In Markov games, \citet{wei2021last,zhang2022policy,erez2023regret} extend these ideas to sequential decision making, while \citet{cen2022faster} prove fast last-iterate convergence for entropy-regularized OMWU in two-player zero-sum Markov games and \citet{yang2023ot} obtain an $O(1/T)$ averaged-policy rate. Recent work also gives near-$1/T$ guarantees for correlated or coarse correlated equilibria in full-information general-sum Markov games \citep{cai2024near,mao2024widetilde}, and payoff-based last-iterate learning has been studied in two-player zero-sum stochastic games \citep{chen2024last}. Our analysis follows the entropy-regularized last-iterate viewpoint of \citet{cen2022faster}, but the multi-player networked structure requires edge-wise Bellman tracking rather than a two-player value update. Recent lower bounds for broad non-forgetful learning rules \citep{cai2024fast} further emphasize that such fast last-iterate guarantees rely on specific structure and dynamics.

\paragraph{Zero-sum networked separable Markov games.}
Zero-sum networked separable (or equivalently, polymatrix) games model multi-player competition through pairwise interactions while preserving an aggregate zero-sum structure. Their minmax and equilibrium properties are studied in \citet{cai2011minmax,cai2016zero}, and learning dynamics for normal-form zero-sum networked separable games are analyzed by \citet{ao2022asynchronous,NEURIPS2025_42fe68e7}. The Markovian counterpart has recently been developed by \citet{park2023multi} and \citet{kalogiannis2023zero}, who identify equilibrium-collapse phenomena and give efficient equilibrium-computation procedures based on dynamic programming and induced stage-game solvers. These works establish the tractability of finite-horizon zero-sum networked separable Markov games, whereas our contribution is to show that a single-loop policy dynamics can operate directly on this structure, avoiding repeated equilibrium solver calls at state-stage pairs.

\paragraph{Other structured multi-player Markov games.}
Our work is also related to a broader effort to identify structured multi-player Markov games where Nash-equilibrium learning is tractable. \citet{anagnostides2024optimistic} study optimistic policy gradient for single-controller multi-player Markov games with an equilibrium-collapse property, obtaining convergence to stationary approximate Nash equilibria. \citet{kalogiannis2024learning} analyze adversarial team Markov games, and recent convex Markov-game frameworks provide another route to tractable multi-agent learning \citep{gemp2024convex,kalogiannis2025solving}. These settings are complementary to ours: we focus on finite-horizon zero-sum networked separable Markov games with controller-separable transitions and prove a last-iterate guarantee for an OMWU-style policy dynamics.

\section{Problem formulation}
\label{sec:prelim}

\subsection{Finite-horizon zero-sum networked separable Markov games}
Consider players $\gN=[N]$ with finite action spaces $(\gA_i)_{i\in\gN}$. Let $\gA=\prod_{i=1}^{N}\gA_i$ and $A:=\max_{i\in[N]}|\gA_i|$. We first introduce the normal-form zero-sum networked separable games.
\begin{definition}[Zero-sum networked separable games]
A networked separable normal-form game is specified by an undirected connected graph $\gG=(\gN,\gE_r)$ with no self-loops and with edge-wise utilities $u_{i,j}:\gA_i\times\gA_j\to\mathbb R$. We represent each undirected edge by both ordered pairs, so $(i,j)\in\gE_r$ if and only if $(j,i)\in\gE_r$. With $\gN_i:=\{j:(i,j)\in\gE_r\}$, player $i$ has utility
\[
    u_i(\va):=\sum_{j\in\gN_i}u_{i,j}(a_i,a_j).
\]
The game is zero-sum if $\sum_{i=1}^{N}u_i(\va)=0$ for every $\va\in\gA$. 
\end{definition}
This condition concerns the aggregate utility and does not require each edge game to be zero-sum. Utilities on mixed profiles are defined by expectation under the product distribution.

We next introduce the finite-horizon Markovian setting. Let $H$ be the horizon, let $\gS$ be a finite nonempty state space with $S:=|\gS|$, and let $r_{h,i}:\gS\times\gA\to[-N,N]$ be the rewards and $\sP_h:\gS\times\gA\to\Delta(\gS)$ be the transition kernels for $h\in[H]$. The reward bound will follow from the edge normalization below. A Markov policy of player $i$ is $\pi_i=(\pi_{h,i})_{h\in[H]}$, where $\pi_{h,i}(\cdot|s)\in\Delta(\gA_i)$ for every $s\in\gS$, and a policy profile is $\pi=(\pi_1,\ldots,\pi_N)$. We slightly abuse the notation and denote $\pi_h(s)=\prod_i\pi_{h,i}(\cdot|s)$ when the subscript is $h$. For a function $f$ of the state and joint action, we write $f(s,\pi_h(s))$ for its expectation under $\prod_i\pi_{h,i}(\cdot|s)$. Likewise, $\pi_{h,-i}(s)$ denotes the product distribution of the policies of all players except player $i$, and $f(s,a_i,\pi_{h,-i}(s))$ denotes the corresponding expectation.

Given a policy profile $\pi$, set $V_{H+1,i}^{\pi}\equiv0$ and define recursively
\[
    Q_{h,i}^{\pi}(s,\va)
    :=r_{h,i}(s,\va)+\sP_h(\cdot|s,\va)V_{h+1,i}^{\pi}(\cdot),
    \qquad
    V_{h,i}^{\pi}(s):=Q_{h,i}^{\pi}(s,\pi_h(s)).
\]
Here the product of a transition measure and a state function denotes their finite-state inner product. Backward induction gives
\[
    \|V_{h,i}^{\pi}\|_\infty,\ \|Q_{h,i}^{\pi}\|_\infty
    \leq N(H-h+1)\leq NH.
\]

We adopt the definition of the zero-sum networked separable Markov games that describes the decomposable reward and transition characterization in \citet{park2023multi} and restate it here.
\begin{definition}[Zero-sum networked separable Markov games]
    \label{formaldef}
    Let $\gG=(\gN,\gE_Q)$ be an undirected connected graph without self-loops, with $N\geq3$. As above, $\gE_Q$ is represented by symmetric ordered pairs. Let $\gN_i:=\{j:(i,j)\in\gE_Q\}$ be the set of neighbors of player $i$, and define the controller set by
    \[
        \gN_C:=\{i\in\gN:(i,j)\in\gE_Q
        \text{ for every }j\in\gN\setminus\{i\}\},
    \]
    which represents the set of players that connect to all other players and may be empty. 
    
    A finite-horizon Markov game has networked separable structure $\gG$ when the following conditions hold for all $h\in[H]$ and $s\in\gS$.

    (1) The rewards admit the decomposition
    \[
        r_{h,i}(s,\va)=\sum_{j\in\gN_i}r_{h,i,j}(s,a_i,a_j),
        \qquad i\in\gN,
    \]
    where $r_{h,i,j}:\gS\times\gA_i\times\gA_j\to[-1,1]$. For pairs outside $\gE_Q$, set $r_{h,i,j}\equiv0$.

    (2) The transition kernel has the form
    \[
        \sP_h(\cdot|s,\va)=
        \begin{cases}
            \displaystyle\sum_{j\in\gN_C}w_{h,j}(s)\sP_{h,j}(\cdot|s,a_j),
                & \gN_C\neq\emptyset,\\
            \sP_{h,o}(\cdot|s), & \gN_C=\emptyset.
        \end{cases}
    \]
    In the first case, $w_{h,j}(s)\geq0$ and $\sum_{j\in\gN_C}w_{h,j}(s)=1$, and each $\sP_{h,j}(\cdot|s,a_j)$ is a probability distribution on $\gS$. In the second case, $\sP_{h,o}(\cdot|s)$ is a probability distribution independent of every player's action.

    The game is zero-sum when, additionally,
    \[
        \sum_{i=1}^{N}r_{h,i}(s,\va)=0
        \qquad\text{for all }h\in[H],\ s\in\gS,\ \va\in\gA.
    \]
\end{definition}
Each player has at most $N-1$ neighbors, so the normalized edge rewards imply $|r_{h,i}(s,\va)|\leq N$. We discuss three structural properties of the games of interest here.

\paragraph{Multiple controller.}
For a nonempty controller set, this definition extends the game setting of a single controller, which is previously studied in~\citet{kalogiannis2023zero}, to the setting of multiple controllers. The dynamics first sample a controller with probabilities $\{w_{h,\cdot}(s)\}$ and then follow the sampled player's transition kernel. Putting all weights on a single player recovers the more trivial case. For an empty controller set, the transition is independent of the actions, and Eq.~(\ref{eq:edge-transition}) distributes it evenly over the player's incident edges.

\paragraph{Decomposable transitions.}
When $\gN_C\neq\emptyset$, define
\[
    \sF_{h,j}(\cdot|s,a_j):=
    \begin{cases}
        w_{h,j}(s)\sP_{h,j}(\cdot|s,a_j), & j\in\gN_C,\\
        0, & j\notin\gN_C.
    \end{cases}
\]
For every $(i,j)\in\gE_Q$, define the nonnegative edge transition measure
\begin{equation}
    \label{eq:edge-transition}
    \sK_{h,i,j}(\cdot|s,a_i,a_j):=
    \begin{cases}
        \displaystyle\frac{1}{|\gN_i|}\sF_{h,i}(\cdot|s,a_i)
        +\sF_{h,j}(\cdot|s,a_j), & \gN_C\neq\emptyset,\\
        \displaystyle\frac{1}{|\gN_i|}\sP_{h,o}(\cdot|s),
            & \gN_C=\emptyset.
    \end{cases}
\end{equation}
Connectivity ensures $|\gN_i|\geq1$. When the controller set is nonempty, every player is adjacent to all controllers other than itself. Thus, in both cases,
\begin{equation}
    \label{eq:edge-transition-sum}
    \sum_{j\in\gN_i}\sK_{h,i,j}(\cdot|s,a_i,a_j)
    =\sP_h(\cdot|s,\va).
\end{equation}

\paragraph{Decomposable $Q$-functions.}
For any policy profile $\pi$, define
\begin{equation}
    \label{eq:edge-policy-q}
    Q_{h,i,j}^{\pi}(s,a_i,a_j)
    :=r_{h,i,j}(s,a_i,a_j)
    +\sK_{h,i,j}(\cdot|s,a_i,a_j)V_{h+1,i}^{\pi}(\cdot).
\end{equation}
The transition identity above gives
\[
    Q_{h,i}^{\pi}(s,\va)
    =\sum_{j\in\gN_i}Q_{h,i,j}^{\pi}(s,a_i,a_j).
\]
Moreover, the decomposed transition bound and $\|V_{h+1,i}^{\pi}\|_\infty\leq N(H-h)$ yield the value bound
\begin{equation}
    \label{eq:edge-q-bound}
    \|Q_{h,i,j}^{\pi}\|_\infty
    \leq1+N(H-h)\leq NH.
\end{equation}

\paragraph{Remark on the intractability of the infinite-horizon setting.}
Because computing stationary approximate equilibria in the infinite-horizon discounted zero-sum networked separable Markov games is \texttt{PPAD}-hard in general, as shown by Theorem~1 of~\citet{park2023multi}, we only focus on the analysis for the finite-horizon setting in this work.

\subsection{Nash equilibrium and quantal response equilibrium}
We introduce Nash equilibrium (NE) and quantal response equilibrium (QRE). Following the role of QRE in~\citet{cen2022faster}, we use it as an intermediate target for analyzing convergence to an approximate NE.

For a fixed policy profile $\pi$, define the best-response values of player $i$ by
\[
    V_{h,i}^{\dag,\pi_{-i}}(s):=\sup_{\pi_i}V_{h,i}^{\pi_i,\pi_{-i}}(s),
    \qquad
    Q_{h,i}^{\dag,\pi_{-i}}(s,\va):=\sup_{\pi_i}Q_{h,i}^{\pi_i,\pi_{-i}}(s,\va),
\]
where the supremum is over Markov policies of player $i$. A Nash equilibrium admits no profitable unilateral deviation. We use the following uniform approximate solution concept.
\begin{definition}[$\epsilon$-approximate Nash equilibrium]
    \label{NGdef}
    A Markov policy profile $\pi$ is an $\epsilon$-approximate Nash equilibrium if
    \[
        \textnormal{NE-Gap}(\pi)
        :=\max_{s\in\gS}\max_{i\in[N]}
        \left\{V_{1,i}^{\dag,\pi_{-i}}(s)-V_{1,i}^{\pi}(s)\right\}
        \leq\epsilon.
    \]
\end{definition}

For a finite normal-form game with utilities $(u_i)_{i\in[N]}$ and temperature $\tau>0$, a QRE is a product distribution $\pi^{*,\tau}$ satisfying
\begin{equation}
    \label{qre}
    \pi_i^{*,\tau}(a_i)
    =\frac{\exp(u_i(a_i,\pi_{-i}^{*,\tau})/\tau)}
    {\sum_{a\in\gA_i}\exp(u_i(a,\pi_{-i}^{*,\tau})/\tau)}.
\end{equation}
Equivalently, it is a Nash equilibrium of the entropy-regularized utilities $$u_{i,\tau}(\pi):=u_i(\pi)+\tau\gH(\pi_i),$$ where $\gH(\pi_i):=-\sum_{a\in\gA_i}\pi_i(a)\log\pi_i(a)$ is Shannon entropy~\citet{mertikopoulos2016learning}. Although an NE of a zero-sum networked separable game need not be unique~\citep{cai2016zero}, its positive-temperature QRE is unique. The following property is also contained in Theorem~4.1 of~\citet{leonardos2021exploration}, and a direct proof is given in Appendix~\ref{app:proofs-main}.
\begin{proposition}[Existence and uniqueness of QRE in zero-sum networked separable games]
    \label{prop:qre-well-defined}
    Fix any finite zero-sum networked separable normal-form game with utilities $(u_i)_{i\in[N]}$. For every $\tau>0$, the QRE defined by Eq.~(\ref{qre}) exists, has full support, and is unique.
\end{proposition}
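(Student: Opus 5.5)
Existence follows from Brouwer's fixed-point theorem; full support is immediate from the softmax form; uniqueness comes from strict monotonicity of the regularized pseudo-gradient, which relies on the zero-sum networked separable structure.

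\emph{Existence.} Define the map $\Phi:\prod_i\Delta(\gA_i)\to\prod_i\Delta(\gA_i)$ by
\[
    \Phi_i(\pi)(a_i)\propto\exp\bigl(u_i(a_i,\pi_{-i})/\tau\bigr).
\]
Each $u_i(a_i,\pi_{-i})$ is multilinear, hence continuous, in $\pi$, so $\Phi$ is continuous on a compact convex set. Brouwer's theorem then gives a fixed point, which is a QRE by Eq.~(\ref{qre}).

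\emph{Full support.} Any solution of Eq.~(\ref{qre}) assigns each action mass proportional to $\exp(\cdot)>0$. Hence every QRE has full support. In particular, each QRE lies in the relative interior, where the entropy is differentiable.

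\emph{Uniqueness.} Write $g_i(\pi)\in\mathbb R^{\gA_i}$ for the payoff vector $g_i(\pi)(a_i)=u_i(a_i,\pi_{-i})=\sum_{j\in\gN_i}u_{i,j}(a_i,\pi_j)$. The QRE condition is equivalent to the first-order optimality condition of the strictly concave problem $\max_{\pi_i}\langle g_i(\pi),\pi_i\rangle+\tau\gH(\pi_i)$. Suppose $\pi$ and $\pi'$ are two QREs. Write the variational inequality for each at the other point and add them. Concavity of $\gH$ with its strong concavity ($1$-strongly concave in $\ell_1$ by Pinsker) gives
\[
    \sum_i\langle g_i(\pi)-g_i(\pi'),\,\pi_i-\pi'_i\rangle\;\geq\;\tau\sum_i\|\pi_i-\pi'_i\|_1^2 .
\]

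\emph{Key step and main obstacle.} The key step is to show that the left-hand side equals zero. Since $g_i$ is affine in each $\pi_j$ separately and pairwise, the left side expands as
\[
    \sum_i\bigl[u_i(\pi_i,\pi_{-i})+u_i(\pi'_i,\pi'_{-i})-u_i(\pi_i,\pi'_{-i})-u_i(\pi'_i,\pi_{-i})\bigr].
\]
The first two sums vanish by the zero-sum property. For the cross terms, I would use the standard fact from \citet{cai2016zero}: in a zero-sum polymatrix game, $\sum_i u_i(\sigma_i,\rho_{-i})$ is the same for all profiles. To see this, note that zero-sum holds for all mixed profiles, so $\sum_{(i,j)}[u_{i,j}(a_i,a_j)+u_{j,i}(a_j,a_i)]=0$ identically in the action profile. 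Then a "flip one coordinate" argument shows that $\sum_i u_i(\sigma_i,\rho_{-i})$ is affine and actually constant. Concretely, I would show that $\sum_i u_i(\pi_i,\pi'_{-i})+\sum_i u_i(\pi'_i,\pi_{-i})=0$. The approach is to compare with the profile that mixes $\pi$ and $\pi'$ coordinatewise via an independent random subset, and use the identity $\sum_{i\ne j}[u_{i,j}(a_i,b_j)+u_{j,i}(b_j,a_i)]=$ const in the pairs. This bookkeeping over edges, where the edge games need not be zero-sum individually, is the part I expect to need the most care.

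Once the cross terms are handled, the left-hand side is $0$. This forces $\sum_i\|\pi_i-\pi'_i\|_1^2\le0$, so $\pi=\pi'$.
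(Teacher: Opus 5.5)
Your route is essentially the paper's: Brouwer on the logit map, positivity of the softmax for full support, and uniqueness by pairing the two logit conditions with $\pi_i-\pi'_i$. Your variational-inequality/strong-concavity step is equivalent to the paper's exact identity $\tau\sum_i[\KL(\pi_i\|\pi'_i)+\KL(\pi'_i\|\pi_i)]=\sum_i\langle g_i(\pi)-g_i(\pi'),\pi_i-\pi'_i\rangle$. That identity actually makes Pinsker unnecessary, since strict positivity of the left side for $\pi\ne\pi'$ already suffices. Your coordinatewise random mixing is exactly the argument of the paper's Lemma~\ref{sumzero2}.

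One correction, though: the ``standard fact'' as you phrase it is false. The quantity $\sum_i u_i(\sigma_i,\rho_{-i})$ is \emph{not} constant over pairs of profiles. Take a single edge with $u_{1,2}(a_1,a_2)=A(a_1,a_2)$ and $u_{2,1}(a_2,a_1)=-A(a_1,a_2)$. Then the sum equals $A(\sigma_1,\rho_2)-A(\rho_1,\sigma_2)$, which vanishes at $\sigma=\rho$. It does not vanish, e.g., for matching pennies with $\sigma_1=\rho_1=e_1$, $\rho_2=e_1$, $\sigma_2=e_2$, where it equals $2$. So no ``flip one coordinate'' argument can prove constancy.

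Only the symmetrized identity $\sum_i[u_i(\sigma_i,\rho_{-i})+u_i(\rho_i,\sigma_{-i})]=0$ holds, and fortunately that is all you need. Your mixing idea proves it with no edge-level zero-sum assumption. Set $\rho^\theta_i=(1-\theta)\pi_i+\theta\pi'_i$ and $U=\sum_i u_i$. Bilinearity of each edge term gives $U(\rho^\theta)=(1-\theta)^2U(\pi)+\theta^2U(\pi')+\theta(1-\theta)\sum_i[u_i(\pi_i,\pi'_{-i})+u_i(\pi'_i,\pi_{-i})]$. Since $U$ vanishes at all three product profiles, any $\theta\in(0,1)$ kills the cross sum. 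This closes your key step without further bookkeeping.
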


We define a backward QRE target and its ordinary, unregularized $Q$-functions and value functions. Set $V_{H+1,i}^{*,\tau}\equiv0$. Given the continuation values at stage $h+1$, let
\begin{equation}
    \label{RQ}
    Q_{h,i}^{*,\tau}(s,\va)
    :=r_{h,i}(s,\va)+\sP_h(\cdot|s,\va)V_{h+1,i}^{*,\tau}(\cdot).
\end{equation}
The induction preserves $\sum_iV_{h+1,i}^{*,\tau}\equiv0$, so the stage game $Q_h^{*,\tau}(s)$ is zero-sum; Eq.~(\ref{eq:edge-transition-sum}) also makes it networked separable. Proposition~\ref{prop:qre-well-defined} therefore gives its unique QRE, denoted by $\pi_h^{*,\tau}(s)$. Set
\begin{equation}
    \label{RV}
    V_{h,i}^{*,\tau}(s):=Q_{h,i}^{*,\tau}(s,\pi_h^{*,\tau}(s)).
\end{equation}
Taking expectations preserves the zero-sum value identity and completes the construction. Lemma~\ref{lem:qre-induced-structure} provides a detailed induction and shows that these quantities are the ordinary returns induced by the resulting policy $\pi^{*,\tau}$. Define the corresponding edge targets by
\begin{equation}
    \label{eq:edge-qre-target}
    \begin{aligned}
        Q_{h,i,j}^{*,\tau}(s,a_i,a_j)
        &:=Q_{h,i,j}^{\pi^{*,\tau}}(s,a_i,a_j)\\
        &=r_{h,i,j}(s,a_i,a_j)
        +\sK_{h,i,j}(\cdot|s,a_i,a_j)V_{h+1,i}^{*,\tau}(\cdot).
    \end{aligned}
\end{equation}
Thus $Q_{h,i}^{*,\tau}=\sum_{j\in\gN_i}Q_{h,i,j}^{*,\tau}$ and $\|Q_{h,i,j}^{*,\tau}\|_\infty\leq NH$ by Eq.~(\ref{eq:edge-q-bound}). Entropy is used only to select the current stage's QRE, but it is not added to the value functions in Eqs.~(\ref{RQ})--(\ref{RV}).

For a normal-form game $u=(u_i)_{i\in[N]}$, a product profile $\pi$, and a fixed temperature $\tau>0$, define
\[
    \textnormal{QRE-Gap}(u,\pi)
    :=\max_{i\in[N]}\max_{\pi_i'\in\Delta(\gA_i)}
    \left\{u_{i,\tau}(\pi_i',\pi_{-i})-u_{i,\tau}(\pi_i,\pi_{-i})\right\}.
\]

To facilitate the description of the upper bound of the NE-Gap, we define the stage-wise QRE-Gap of an evolving policy $\bar\pi_h^t$ at stage $h$ as
\begin{equation}
    \textnormal{QRE-Gap}_h(\bar\pi_h^t)
    :=\max_{s\in\gS}\textnormal{QRE-Gap}(Q_h^{*,\tau}(s),\bar\pi_h^t(s)),
\end{equation}
where the $Q$-functions adopt the QRE $\pi_{h^\prime}^{*,\tau}$ for stages $h^\prime>h$, and the gap only measures the value deviation of the current policy $\bar\pi_h^t$ with the regularized utility values at stage $h$.

\section{Main result}
\label{sec:main}


We now present Entropy-Regularized OMWU for solving NE in the games of interest and provide its finite-time last-iterate convergence guarantee. We establish two assumptions on the information revealed to the algorithm and the players:

\begin{assumption}[Full model information]
    \label{assump:full-model-info}
    The edge-wise reward functions $\{r_{h,i,j}:h\in[H],i\in[N],j\in\gN_i\}$ and the decomposed transition component functions $\{\sK_{h,i,j}:h\in[H],i\in[N],j\in\gN_i\}$ are known to all the players in the algorithm.
\end{assumption}
\begin{assumption}[Policy revealing mechanism]
    \label{assump:policy-revealed}
    At each iteration, players are aware of the current policy profile $\bar\pi^t$. Then based on the model information and the policies of other players, they conduct policy updates, release their updated policies $\bar\pi^{t+1}$ simultaneously, and move to the next iteration.
\end{assumption}




\begin{algorithm}
    \caption{Entropy-Regularized OMWU for Zero-Sum Networked Separable Markov Games}
    \label{Algo}
    \KwIn{Learning rate $\eta>0$, regularization parameter $\tau>0$, edge rewards $(r_{h,i,j})$, edge transition components $(\sK_{h,i,j})$.}
    \KwOut{Strategy profile $\bar\pi^T$.}
    \BlankLine
    \SetAlgoNoLine
    \textbf{Initialization:} $Q_{h,i,j}^0=r_{h,i,j}$ for all $h\in[H]$, $i\in[N]$, and $j\in\gN_i$; $V_{H+1,i}^t\equiv0$ for all $i\in[N]$ and $t\geq0$; $\pi_h^0=\bar\pi_h^0$ is the uniform product distribution for all $h\in[H]$.\\
    \For{$t=0,1,\dots,T-1$}{
        Define $Q_{h,i}^t:=\sum_{j\in\gN_i}Q_{h,i,j}^t$ for all $h\in[H]$ and $i\in[N]$.\\
        For all $h\in[H]$, $i\in[N]$, $s\in\gS$, and $a_i\in\gA_i$,
        \begin{equation}
            \label{omwu2}
            \bar\pi_{h,i}^{t+1}(a_i|s)
            \propto
            \pi_{h,i}^{t}(a_i|s)^{1-\eta\tau}
            \exp\!\left(\eta Q_{h,i}^{t}(s,a_i,\bar\pi_{h,-i}^{t})\right).
        \end{equation}

        \For{$h=H,H-1,\dots,1$}{
            For all $s\in\gS$, $i\in[N]$, $j\in\gN_i$, $a_i\in\gA_i$, and $a_j\in\gA_j$,
            \begin{equation}
                T_{h,i,j}^{t+1}(s,a_i,a_j)
                =r_{h,i,j}(s,a_i,a_j)
                +\sK_{h,i,j}(\cdot|s,a_i,a_j)V_{h+1,i}^{t+1}(\cdot),
            \end{equation}
            \begin{equation}
                \label{rule1}
                Q_{h,i,j}^{t+1}(s,a_i,a_j)
                =(1-\eta\tau)Q_{h,i,j}^{t}(s,a_i,a_j)
                +\eta\tau T_{h,i,j}^{t+1}(s,a_i,a_j).
            \end{equation}

            For all $s\in\gS$ and $i\in[N]$,
            \begin{equation}
                \label{rule3}
                V_{h,i}^{t+1}(s)
                =\sum_{j\in\gN_i}
                \bar\pi_{h,i}^{t+1}(\cdot|s)^\top
                Q_{h,i,j}^{t+1}(s,\cdot,\cdot)
                \bar\pi_{h,j}^{t+1}(\cdot|s).
            \end{equation}
        }
        Define $Q_{h,i}^{t+1}:=\sum_{j\in\gN_i}Q_{h,i,j}^{t+1}$ for all $h\in[H]$ and $i\in[N]$.\\
        For all $h\in[H]$, $i\in[N]$, $s\in\gS$, and $a_i\in\gA_i$,
        \begin{equation}
            \label{omwu1}
            \pi_{h,i}^{t+1}(a_i|s)
            \propto
            \pi_{h,i}^{t}(a_i|s)^{1-\eta\tau}
            \exp\!\left(\eta Q_{h,i}^{t+1}(s,a_i,\bar\pi_{h,-i}^{t+1})\right).
        \end{equation}
    }
    \Return{$\bar\pi^T$}
\end{algorithm}

The algorithm proceeds as follows. Each player $i$ maintains two streams of policies, $\bar\pi_{h,i}^t$ and $\pi_{h,i}^t$ at each iteration $t$, where $\bar\pi_{h,i}^t$ is the the player's formal policy and $\pi_{h,i}^t$ is auxiliary for computation. The initialization of the two policies at $t=0$ are uniform distributions. At each iteration $t$, the players first update their optimistic policies $\bar\pi_{h,i}^{t+1}$ using the current $Q$-estimates $Q_{h,i}^t$ and the previous policies $\pi_{h,i}^t$. Then they update the $Q$-estimates $Q_{h,i,j}^{t+1}$ for each edge $(i,j)$ using the new optimistic policies $\bar\pi_{h,i}^{t+1}$ and $\bar\pi_{h,j}^{t+1}$. Finally, they update their auxiliary policies $\pi_{h,i}^{t+1}$ using the new $Q$-estimates $Q_{h,i}^{t+1}$ and the new optimistic policies $\bar\pi_{h,-i}^{t+1}$. The algorithm returns the final optimistic policy profile $\bar\pi^T$ after $T$ iterations.


\begin{theorem}[Last-iterate guarantee for ER-OMWU]
    \label{mainthm}
    Suppose that the game is a finite-horizon zero-sum networked separable Markov game with $A\geq2$. Let Algorithm~\ref{Algo} be run with $\tau>0$ and
    \[
        0<\eta\leq
        \min\left\{\frac{1}{16N^2H},\frac{1}{2\tau}\right\}.
    \]
    Then there is a constant $C_{\mathrm{poly}}=C_{\mathrm{poly}}(N,H,A,\eta,\tau)>0$, independent of $t$, such that for every $t\geq1$,
    \[
        \boxed{
        \textnormal{NE-Gap}(\bar\pi^t)
        \leq
        C_{\mathrm{poly}}(t+H+2)^{2H}(1-\eta\tau)^{t/2}
        +\tau H\log A.
        }
    \]
    In particular, for a target accuracy $\epsilon\in(0,1]$, choose
    \[
        \eta=\frac{1}{16N^2H},
        \qquad
        \tau=\frac{\epsilon}{2H\log A}.
    \]
    The constant constructed in the proof satisfies
    \[
        \log C_{\mathrm{poly}}
        =O\!\left(H\log\frac{NH\log A}{\epsilon}\right).
    \]
    Consequently, Algorithm~\ref{Algo} returns an $\epsilon$-Nash equilibrium after
    \[
        \boxed{
        t=\widetilde O\!\left(\frac{N^2H^3\log A}{\epsilon}\right)
        }
    \]
    iterations, where $\widetilde O(\cdot)$ hides universal constants and logarithmic factors in $N,H,A$, and $1/\epsilon$.
\end{theorem}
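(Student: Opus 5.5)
The proof runs in three steps. First we show stage-wise convergence of the iterates to the backward QRE target. Then we run a backward recursion over stages that couples the edge-wise $Q$-tracking errors with the policy errors. Finally we convert stage-wise QRE gaps into an NE gap.

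\textbf{Step 1: single-stage contraction.} Fix $h$ and $s$, and treat the stage game $Q_h^{*,\tau}(s)$ as the target normal-form zero-sum networked separable game. Its QRE $\pi_h^{*,\tau}(s)$ is unique and has full support by Proposition~\ref{prop:qre-well-defined}. The updates (\ref{omwu2}) and (\ref{omwu1}) are exactly entropy-regularized OMWU, played on the perturbed payoffs $Q_h^t$ rather than on $Q_h^{*,\tau}$. Following \citet{cen2022faster}, we track the potential
\[
\Phi_h^t(s)=\sum_i \KL\big(\pi_{h,i}^{*,\tau}\,\|\,\pi_{h,i}^t\big).
\]
From the log-linear form of the update one obtains a three-point identity. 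We sum it over players and use zero-sum polymatrix structure, namely $\sum_i u_i(\pi_i^*,\bar\pi_{-i})+\sum_i u_i(\bar\pi_i,\pi_{-i}^*)=0$ edge by edge up to the $\sum_i u_i\equiv 0$ cancellation, as in \citet{cai2016zero}. This yields
\[
\Phi^{t+1}\le(1-\eta\tau)\Phi^t-(\text{negative }\bar\pi\text{ terms})+O(\eta)\max_{i,j}\|Q_{h,i,j}^t-Q_{h,i,j}^{*,\tau}\|_\infty .
\]
The condition $\eta\le 1/(16N^2H)$ is what lets us absorb the optimism cross terms $\eta\,\|Q_{h,i}(\cdot,\bar\pi^{t+1}_{-i})-Q_{h,i}(\cdot,\bar\pi^t_{-i})\|$. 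This works because each $Q_{h,i}$ is a sum of at most $N$ edge terms bounded by $NH$, so its Lipschitz constant in $\bar\pi_{-i}$ (measured in $\ell_1$/KL) is $O(N^2H)$. The same argument bounds $\KL(\pi^*\|\bar\pi^{t+1})$ by $\Phi^t$ plus the same error.

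\textbf{Step 2: backward coupling.} Let $e_h^t=\max_{s,i,j}|Q_{h,i,j}^t-Q_{h,i,j}^{*,\tau}|$. Subtracting (\ref{eq:edge-qre-target}) from (\ref{rule1}) gives
\[
e_h^{t+1}\le(1-\eta\tau)e_h^t+\eta\tau\,\max\|V_{h+1}^{t+1}-V_{h+1}^{*,\tau}\|_\infty .
\]
The value error is in turn bounded by $e_{h+1}^{t+1}$ plus $NH$ times the total-variation distance of $\bar\pi_{h+1}^{t+1}$ from the QRE, and Pinsker bounds that distance by $\sqrt{\Phi}$. At $h=H$ we have $e_H^t\equiv 0$, so $\Phi_H^t$ decays like $(1-\eta\tau)^t$. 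Inducting downward, each stage takes a discrete convolution of geometric sequences, so a polynomial factor $(t+H+2)^{2}$ accumulates at each stage, and the square roots from Pinsker produce the rate $(1-\eta\tau)^{t/2}$. The result is
\[
\Phi_h^t+e_h^t\le C_{\mathrm{poly}}(t+H+2)^{2(H-h+1)}(1-\eta\tau)^{t/2}.
\]
The constant $C_{\mathrm{poly}}$ collects the initial KL, which is at most $N\log A$ plus terms in $\log(1/\min\pi^*)=O(NH/\tau)$ via the full-support bound, together with factors of $N$, $H$, and $1/(\eta\tau)$ per stage. This gives $\log C_{\mathrm{poly}}=O(H\log(NH\log A/\epsilon))$. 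This step is the main obstacle: the argument must keep the square-root loss from Pinsker from degrading the rate beyond $(1-\eta\tau)^{t/2}$. I would handle it by running the induction on $\sqrt{\Phi}$ and $e$ jointly rather than on $\Phi$ alone.

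\textbf{Step 3: NE gap.} A performance-difference telescoping over stages, applied to a best response of player $i$ against $\bar\pi^t_{-i}$, gives
\[
\textnormal{NE-Gap}(\bar\pi^t)\le\sum_h\big(\textnormal{QRE-Gap}_h(\bar\pi_h^t)+\tau\log A\big)+(\text{value mismatch}),
\]
where the value mismatch comes from replacing $V^{*,\tau}$ by the true values under $\bar\pi^t$. The entropy term contributes at most $\tau\log A$ per stage, for a total of $\tau H\log A$. The QRE gap is controlled by the KL distance to the QRE through smoothness of the regularized utility, so it is bounded by the Step 2 quantities, and the value mismatch likewise telescopes into Step 2 errors. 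Combining these gives the boxed bound. For the rate, set $\tau=\epsilon/(2H\log A)$, so the bias term is $\epsilon/2$. Choosing $t$ so that $(1-\eta\tau)^{t/2}$ beats $C_{\mathrm{poly}}(t+H+2)^{2H}$ requires $t=\widetilde O\big(H\log C_{\mathrm{poly}}/(\eta\tau)\big)$, which equals $\widetilde O(N^2H^3\log A/\epsilon)$.
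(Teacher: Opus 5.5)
Your architecture matches the paper's: the NE gap is bounded by stage-wise QRE gaps, a value mismatch and $\tau H\log A$; there is a KL contraction for the optimistic pair at each $(h,s)$, a backward $Q$-tracking recursion, and an induction over stages. The gap is in Step 1, whose recursion is too weak for Step 2 to close. You bound the payoff-error contribution linearly, $\Phi^{t+1}\le(1-\eta\tau)\Phi^t+O(\eta)\,e^t$. Suppose $e_h^t\approx(1-\eta\tau)^{t/2}$. Convolving against $(1-\eta\tau)^k$ then gives only $\Phi_h^t\approx(1-\eta\tau)^{t/2}$, and likewise for $\KL(\pi_h^{*,\tau}\|\bar\pi_h^t)$, so the square roots decay only like $(1-\eta\tau)^{t/4}$. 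That square root is exactly what drives $e_{h-1}$ through the value mismatch $NH\sqrt{2N\,\KL(\pi_h^{*,\tau}\|\bar\pi_h^t)}$. The exponent therefore halves at every stage: you end with decay like $(1-\eta\tau)^{t/2^{H}}$ and an iteration count exponential in $H$. Running the induction on $\sqrt\Phi$ and $e$ jointly does not repair this, because $\sqrt{\beta\Phi+c\eta e}\le\sqrt\beta\sqrt\Phi+\sqrt{c\eta e}$ still produces $\sqrt e$. For the same reason, your hypothesis $\Phi_h^t+e_h^t\le C(1-\eta\tau)^{t/2}$ does not reproduce itself at stage $h-1$; it has to be stated for $\sqrt{\Phi_h^t}$.

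The missing idea is to keep the error terms bilinear and absorb them into the negative terms of the three-point identity, so that the error enters quadratically. Write $\delta_h^t$ for the aggregate $Q$-error (your $e_h^t$ up to a factor $N$). Then
\[
\eta\sum_i\big\langle (Q_{h,i}^{t+1}-Q_{h,i}^{*,\tau})(s,\cdot,\bar\pi_{h,-i}^{t+1}),\,\bar\pi_{h,i}^{t+1}-\pi_{h,i}^{*,\tau}\big\rangle
\le\eta\delta_h^{t+1}\sqrt{2N\KL(\bar\pi_h^{t+1}\|\pi_h^{*,\tau})}
\le\frac{\eta\tau}{4}\KL(\bar\pi_h^{t+1}\|\pi_h^{*,\tau})+\frac{2\eta N}{\tau}(\delta_h^{t+1})^2 .
\]
The first term is absorbed by the $-\eta\tau\KL(\bar\pi_h^{t+1}\|\pi_h^{*,\tau})$ term of the identity. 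Likewise, the error part of the optimism correction, $\eta(\delta_h^t+\delta_h^{t+1})\sum_i\|\bar\pi_{h,i}^{t+1}-\pi_{h,i}^{t+1}\|_1$, is absorbed by $-\KL(\pi_h^{t+1}\|\bar\pi_h^{t+1})$. You also need to augment $\Phi$ to $L_h^t:=\max_s\{\KL(\pi_h^{*,\tau}\|\pi_h^t)+(1-4\eta N^2H)\KL(\pi_h^t\|\bar\pi_h^t)\}$. The extra term absorbs the $\eta N^2H\KL(\pi_h^t\|\bar\pi_h^t)$ that the optimism cross term leaves behind. This gives $L_h^{t+1}\le\beta L_h^t+\Gamma_\eta(\delta_h^t+2\delta_h^{t+1})^2$ with $\Gamma_\eta=O(\eta N^2(1+1/\tau))$. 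Hence $\sqrt{L_h^{t+1}}\le\sqrt\beta\sqrt{L_h^t}+\sqrt{\Gamma_\eta}(\delta_h^t+2\delta_h^{t+1})$, which is linear in the error. The induction $\delta_h^t+\sqrt{L_h^t}+\sqrt{\bar L_h^t}\le C_h(t+H-h+2)^{H-h}(1-\eta\tau)^{t/2}$, with $\bar L_h^t:=\max_s\KL(\pi_h^{*,\tau}\|\bar\pi_h^t)$, then closes with one polynomial factor per stage.

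The same retained term is what Step 3 needs. The per-player QRE gap is at most $\tau\KL(\bar\pi_i\|\pi_i^*)+2\|Q_{h,i}^{*,\tau}(s,\cdot,\bar\pi_{-i})-Q_{h,i}^{*,\tau}(s,\cdot,\pi_{-i}^*)\|_\infty$, so you need the reverse KL. The paper gets it from $-\eta\tau\KL(\bar\pi^{t+1}\|\pi^*)$ at the cost of a factor $1/\eta$. Converting the forward KL through $1/\min\pi^*$ instead could cost up to $Ae^{2NH/\tau}$, which is exponential in $1/\epsilon$. Finally, the iteration count is $\widetilde O\big((\log C_{\mathrm{poly}}+H\log t)/(\eta\tau)\big)$; your $H\log C_{\mathrm{poly}}/(\eta\tau)$ would give $H^4$ rather than $H^3$.
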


\paragraph{Proof sketch of the main theorem.}
Lemma~\ref{decomp} bounds the Nash-equilibrium gap by stage-wise QRE gaps, policy divergence, and the entropy bias. Lemma~\ref{keygoal} then controls policy divergence and $Q$-tracking error all together through the coupled recursions in Lemmas~\ref{iter1} and~\ref{iter2}. Solving these recursions backward over the horizon yields exponentially decaying errors with polynomial constants. Finally, Lemma~\ref{gregapest} derives the QRE-gap bounds from these tracking bounds.


\paragraph{Auxiliary notations.}
For every $h\in[H]$ and $t\geq0$, define
\[
    \bar L_h^t:=\max_{s\in\gS}
    \KL(\pi_h^{*,\tau}(s)||\bar\pi_h^t(s)),
\]
\[
    \delta_h^t:=
    \max_{s\in\gS}\max_{i\in[N]}\max_{\va\in\gA}
    \left|Q_{h,i}^t(s,\va)-Q_{h,i}^{*,\tau}(s,\va)\right|.
\]
$\bar L_h^t$ measures the policy divergence, 
and $\delta_h^t$ measures the estimate error of the aggregate $Q$-functions. We can first decompose the NE-gap into three terms.

\begin{lemma}[Decomposition of NE-Gap]
    \label{decomp}
    For every $t\geq0$,
    \[
    \begin{aligned}
        \textnormal{NE-Gap}(\bar\pi^t)
        \leq{}&
        \sum_{h=1}^H\textnormal{QRE-Gap}_h(\bar\pi_h^t)
        +2NH\sqrt{2N}
        \sum_{h=1}^H\sum_{\ell=h+1}^H\sqrt{\bar L_{\ell}^t}
        +\tau H\log A.
    \end{aligned}
    \]
\end{lemma}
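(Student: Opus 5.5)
I will prove Lemma~\ref{decomp} by a performance-difference-type telescoping over stages, comparing both $V^{\dag,\bar\pi^t_{-i}}_{1,i}$ and $V^{\bar\pi^t}_{1,i}$ with the backward QRE values $V^{*,\tau}_{1,i}$. Fix $s$ and $i$, and write $\bar\pi=\bar\pi^t$. The plan has two halves: an upper bound for the best-response value, $V^{\dag,\bar\pi_{-i}}_{h,i}$, and a lower bound for the on-policy value, $V^{\bar\pi}_{h,i}$.

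\textbf{Step 1 (one-stage deviation against the QRE target).} Define $D_h:=\max_s\{V^{\dag,\bar\pi_{-i}}_{h,i}(s)-V^{*,\tau}_{h,i}(s)\}$ and $E_h:=\max_s\{V^{*,\tau}_{h,i}(s)-V^{\bar\pi}_{h,i}(s)\}$. The NE-Gap at $(s,i)$ is at most $D_1+E_1$.

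For $D_h$, the best response at stage $h$ picks $\pi_i'$. Then
\[
V^{\dag}_{h,i}(s)=Q^{*,\tau}_{h,i}(s,\pi_i',\bar\pi_{h,-i}(s))+\sP_h(\cdot|s,\pi'_i,\bar\pi_{h,-i})\,(V^{\dag}_{h+1,i}-V^{*,\tau}_{h+1,i}).
\]
I add and subtract the entropy terms $\tau\gH(\pi_i')$ and $\tau\gH(\bar\pi_{h,i})$ and compare with $Q^{*,\tau}_{h,i}(s,\bar\pi_h(s))$, which gives the $\textnormal{QRE-Gap}_h$ term plus at most $\tau\log A$. I also need to compare $Q^{*,\tau}_{h,i}(s,\bar\pi_h(s))$ with $V^{*,\tau}_{h,i}(s)=Q^{*,\tau}_{h,i}(s,\pi^{*,\tau}_h(s))$, and that difference is not controlled by $\bar L_h$. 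I therefore do not split the QRE-Gap this way.

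Instead I will run the recursion on $V^{\dag}_{h,i}-V^{\bar\pi}_{h,i}$ directly, with the continuation values replaced by $V^{*,\tau}$:
\[
V^{\dag}_{h,i}-V^{\bar\pi}_{h,i}=\big[Q^{*,\tau}_{h,i}(\pi_i',\bar\pi_{-i})-Q^{*,\tau}_{h,i}(\bar\pi)\big]+\sP(\pi_i',\bar\pi_{-i})(V^\dag_{h+1}-V^{*,\tau}_{h+1})-\sP(\bar\pi)(V^{\bar\pi}_{h+1}-V^{*,\tau}_{h+1}).
\]
The bracket is at most $\textnormal{QRE-Gap}_h+\tau\log A$. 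The remaining two terms are rewritten as $\sP(\pi'_i,\bar\pi_{-i})(V^\dag_{h+1}-V^{\bar\pi}_{h+1})$ plus a cross term $[\sP(\pi_i',\bar\pi_{-i})-\sP(\bar\pi)](V^{\bar\pi}_{h+1}-V^{*,\tau}_{h+1})$. Unrolling the first piece over $h$ produces $\sum_h\textnormal{QRE-Gap}_h+\tau H\log A$.

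\textbf{Step 2 (the cross terms, the main obstacle).} The cross term is bounded by $2\|V^{\bar\pi}_{h+1,i}-V^{*,\tau}_{h+1,i}\|_\infty$, so it remains to bound this norm by $NH\sqrt{2N}\sum_{\ell>h}\sqrt{\bar L_\ell}$. I will do this by the simulation lemma: since $V^{*,\tau}$ are the ordinary returns of $\pi^{*,\tau}$ (Lemma~\ref{lem:qre-induced-structure}),
\[
V^{\bar\pi}_{\ell,i}-V^{*,\tau}_{\ell,i}=\mathbb{E}\big[Q^{*,\tau}_{\ell,i}(s,\bar\pi_\ell)-Q^{*,\tau}_{\ell,i}(s,\pi^{*,\tau}_\ell)\big]+\text{(next stage)}.
\]
Each stage contributes at most $\|Q^{*,\tau}_{\ell,i}\|_\infty\cdot\|\bar\pi_\ell(s)-\pi^{*,\tau}_\ell(s)\|_1\le NH\sqrt{2\KL(\pi^{*,\tau}_\ell\|\bar\pi_\ell)}$ by Pinsker. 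The KL divergence of the product distributions is the sum of per-player KLs, and it is bounded by $\bar L_\ell$; the $\sqrt N$ factor in the statement can absorb any per-player slack. Summing over $\ell=h+1,\dots,H$, and with the factor $2$ from the cross term, gives exactly $2NH\sqrt{2N}\sum_{\ell>h}\sqrt{\bar L_\ell}$.

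\textbf{Step 3 (assembly).} Summing the Step~1 contributions and the Step~2 cross terms over $h=1,\dots,H$, using $\sum_{a}\sP(\cdot)=1$ for the nonnegative transition averages, and then maximizing over $s$ and $i$ yields the claimed inequality. The delicate point is bookkeeping: the best-response deviation must be charged only to the current-stage QRE gap, while all continuation mismatch is routed through $\|V^{\bar\pi}-V^{*,\tau}\|_\infty$, so that the sum over $\ell$ starts at $h+1$.
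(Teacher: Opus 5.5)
Your proposal is correct and is essentially the paper's argument. Your one-step Bellman recursion on $V^{\dag}_{h,i}-V^{\bar\pi}_{h,i}$ with $Q^{*,\tau}$ inserted is the paper's performance-difference step: your cross term equals $\langle \pi_i'-\bar\pi_{h,i}^t,(Q_{h,i}^{\bar\pi^t}-Q_{h,i}^{*,\tau})(s,\cdot,\bar\pi_{h,-i}^t)\rangle$ and plays the role of the paper's $2E_{h,i}^t$ term, and your simulation-lemma bound is Lemma~\ref{lem:q-policy-perturb}. The only difference is that you apply Pinsker to the joint product distribution rather than player by player, which gives the slightly sharper constant $\sqrt{2}$ in place of $\sqrt{2N}$ and therefore implies the stated inequality.
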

We then bound the first two terms with the following two lemmas, respectively.
\begin{lemma}[Tracking bounds]
    \label{keygoal}
    Under the conditions of Theorem~\ref{mainthm}, there exists a constant $C= O(N^3H\log A/\tau)$ and $C_h \leq C^{H-h}$ such that for every $h\in[H]$ and $t\geq0$,
    \[
        \sqrt{\bar L_h^t}
        \leq
        C_h(t+H-h+2)^{H-h}(1-\eta\tau)^{t/2}.
    \]
\end{lemma}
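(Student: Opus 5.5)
The proof is a backward induction on $h$ from $H$ down to $1$, coupling two recursions: a KL-contraction recursion for the policy divergence (Lemma~\ref{iter1}) and a $Q$-tracking recursion for $\delta_h^t$ (Lemma~\ref{iter2}). The argument follows the entropy-regularized OMWU analysis of \citet{cen2022faster}, specialized to the stage game $Q_h^{*,\tau}(s)$, which is zero-sum and networked separable.

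\textbf{Step 1: one-step policy contraction.} At a fixed state $s$ and stage $h$, set $L_h^t:=\max_s\KL(\pi_h^{*,\tau}(s)\|\pi_h^t(s))$ for the auxiliary stream. From the log-linear forms of Eqs.~(\ref{omwu2}) and~(\ref{omwu1}) and the QRE fixed-point identity Eq.~(\ref{qre}), write $\log\pi^{t+1}_{h,i}-\log\pi^{*,\tau}_{h,i}$ as $(1-\eta\tau)(\log\pi^t_{h,i}-\log\pi^{*,\tau}_{h,i})$ plus $\eta$ times the payoff difference. Then sum over players the three-point identity against $\pi^{*,\tau}_h(s)$.

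The key cross term is $\sum_i\langle \pi^{*,\tau}_{h,i}-\bar\pi^{t+1}_{h,i},\,Q^{*,\tau}_{h,i}(s,\cdot,\bar\pi^{t+1}_{h,-i})\rangle$. The zero-sum networked separable structure (in the spirit of \citet{cai2016zero}) makes it vanish: $\sum_i u_i(\pi^*_i,\bar\pi_{-i})-\sum_iu_i(\bar\pi)=\sum_i u_i(\bar\pi_i,\pi^*_{-i})$-type identities hold edge by edge, and summing gives zero. This mirrors the two-player cancellation.

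Replacing $Q^{*,\tau}$ by $Q^t$, $Q^{t+1}$ leaves errors of order $\eta N\delta_h^t$ times $\ell_1$ distances. The optimistic mismatch between $\bar\pi^{t+1}$ and $\pi^{t+1}$ is absorbed by the step-size condition $\eta\le 1/(16N^2H)$, using $|Q_{h,i,j}|\le NH$ and at most $N$ neighbors, together with Pinsker's inequality. The target bound is
\[
L_h^{t+1}+c\,\bar L_h^{t+1}\le (1-\eta\tau)L_h^t + O(\eta N)\,\delta_h^t ,
\]
with $\bar L_h^{t+1}\lesssim L_h^t+\eta N\delta_h^t$ controlled similarly.

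\textbf{Step 2: $Q$-tracking recursion.} From Eq.~(\ref{rule1}),
\[
Q^{t+1}_{h,i,j}-Q^{*,\tau}_{h,i,j}=(1-\eta\tau)(Q^t_{h,i,j}-Q^{*,\tau}_{h,i,j})+\eta\tau\,\sK_{h,i,j}(V^{t+1}_{h+1,i}-V^{*,\tau}_{h+1,i}).
\]
Summing over $j$ and using Eq.~(\ref{eq:edge-transition-sum}) gives
\[
\delta_h^{t+1}\le(1-\eta\tau)\delta_h^t+\eta\tau\|V^{t+1}_{h+1}-V^{*,\tau}_{h+1}\|_\infty .
\]
The value error is bounded by $\delta_{h+1}^{t+1}+2N^2H\,\|\bar\pi^{t+1}_{h+1}-\pi^{*,\tau}_{h+1}\|_1$, which is at most $\delta_{h+1}^{t+1}+O(N^2H\sqrt{N\bar L_{h+1}^{t+1}})$ by Pinsker's inequality. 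At $h=H$ the continuation term vanishes, so $\delta_H^t\le(1-\eta\tau)^t\cdot 2NH$ from initialization.

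\textbf{Step 3: unroll backward.} Induct on the claim that $\delta_h^t$ and $\sqrt{L_h^t}$ are both at most $C_h(t+H-h+2)^{H-h}(1-\eta\tau)^{t/2}$. Unrolling a recursion $x^{t+1}\le(1-\eta\tau)x^t+b\,(t+k)^m\rho^{t}$, where $\rho=(1-\eta\tau)^{1/2}$ bounds the square-rooted quantities, produces an extra sum of $t$ geometric-type terms. This raises the polynomial degree by one and multiplies the constant by $O(b/(\eta\tau))$.

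The initial divergences satisfy $L_h^0\le\log A$ and $\log(1/\min_a\pi^{*,\tau})=O(NH/\tau)$. Together with $b=O(N^3H\cdot\eta)$ this yields a per-stage factor $C=O(N^3H\log A/\tau)$, hence $C_h\le C^{H-h}$. Taking square roots of the KL bound then gives the stated bound on $\sqrt{\bar L_h^t}$.

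\textbf{Main obstacle.} Step 3 mixes a KL recursion, which is linear in $\delta$, with a $\delta$ recursion driven by $\sqrt{\bar L}$. Keeping exactly the rate $(1-\eta\tau)^{t/2}$, with only one added power of $t$ per stage, requires care. The approach is to first bound the square-rooted sequence $\sqrt{L_h^t}$ via $\sqrt{a+b}\le\sqrt a+\sqrt b$, accepting the half rate throughout so that both recursions contract at the same rate $\rho$.
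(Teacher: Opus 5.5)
\textbf{Main gap: the error in your policy recursion is linear in $\delta$.} This breaks the rate you claim in Step 3. Your Step 1 target is linear in the $Q$-error, $L_h^{t+1}\le(1-\eta\tau)L_h^t+O(\eta N)\,\delta_h^t$. Taking square roots then gives $\sqrt{L_h^{t+1}}\le\alpha\sqrt{L_h^t}+O(\sqrt{\eta N\delta_h^t})$ with $\alpha=\sqrt{1-\eta\tau}$, so the forcing term is $\sqrt{\delta_h^t}$, not $\delta_h^t$. Equivalently, unrolling the unsquared recursion with $\delta_h^t\lesssim(1-\eta\tau)^{t/2}$ only gives $L_h^t\lesssim(1-\eta\tau)^{t/2}$, i.e.\ $\sqrt{L_h^t}\lesssim(1-\eta\tau)^{t/4}$. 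Step 2 must feed $\sqrt{\bar L_{h+1}}$ (via Pinsker) into $\delta_h$, so every backward stage halves the exponent. You end with $(1-\eta\tau)^{t/2^{H-h+1}}$, not $(1-\eta\tau)^{t/2}$. The induction hypothesis in Step 3 therefore does not propagate, and the iteration count would become exponential in $H$.

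The missing idea is to make the error quadratic before taking roots:
\[
\begin{aligned}
&\eta\sum_i\bigl\langle(Q_{h,i}^{t+1}-Q_{h,i}^{*,\tau})(s,\cdot,\bar\pi_{h,-i}^{t+1}),\,\bar\pi_{h,i}^{t+1}-\pi_{h,i}^{*,\tau}\bigr\rangle\\
&\qquad\le\eta\,\delta_h^{t+1}\sqrt{2N\,\KL(\bar\pi_h^{t+1}\|\pi_h^{*,\tau})}
\le\frac{\eta\tau}{4}\KL(\bar\pi_h^{t+1}\|\pi_h^{*,\tau})+\frac{2\eta N}{\tau}(\delta_h^{t+1})^2 .
\end{aligned}
\]
You then absorb the first piece into the term $-\eta\tau\KL(\bar\pi_h^{t+1}\|\pi_h^{*,\tau})$. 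The three-point identity produces that term because $(1-\eta\tau)+\eta\tau=1$; this is exactly where entropy regularization is used. The optimistic-mismatch error is handled the same way against $-\KL(\pi_h^{t+1}\|\bar\pi_h^{t+1})$. The result is $L_h^{t+1}\le(1-\eta\tau)L_h^t+\Gamma_\eta(\delta_h^t+2\delta_h^{t+1})^2$ with $\Gamma_\eta=O(\eta N^2(1+1/\tau))$. Only then does $\sqrt{a+b}\le\sqrt a+\sqrt b$ give a driver linear in $\delta$ at the common rate $\alpha$. This is also where the $1/\tau$ in $C$ enters, not from an unrolling factor $b/(\eta\tau)$.

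\textbf{Second gap: your potential is too small.} Bounding the optimistic correction $\langle\log\bar\pi_h^{t+1}-\log\pi_h^{t+1},\bar\pi_h^{t+1}-\pi_h^{t+1}\rangle$ through the edge structure and $\|Q_{h,i,j}^{*,\tau}\|_\infty\le NH$ leaves $\eta N^2H\,\KL(\pi_h^t\|\bar\pi_h^t)$ on the right. That term must be charged to the $-\KL(\pi_h^{t}\|\bar\pi_h^{t})$ produced at the previous step. Hence the Lyapunov function has to be $\KL(\pi_h^{*,\tau}\|\pi_h^t)+(1-4\eta N^2H)\KL(\pi_h^t\|\bar\pi_h^t)$. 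With $L_h^t=\max_s\KL(\pi_h^{*,\tau}\|\pi_h^t)$, and $c\bar L_h^{t+1}$ on the left with nothing matching it at time $t$ on the right, the recursion does not telescope as written. Routing $\|\pi_h^t-\bar\pi_h^t\|_1$ through $\pi_h^{*,\tau}$ instead adds $O(\eta N^2H)$ to the contraction factor, which swamps $\eta\tau$. The bound $\bar L_h^{t+1}\le4L_h^t+\Gamma_\eta(\delta_h^t+2\delta_h^{t+1})^2$ is a separate consequence, not part of the potential.

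\textbf{Smaller points.}
\begin{itemize}
\item The cancellation is not edge by edge, since edge games need not be zero-sum. The term you single out does not vanish on its own. What vanishes is the doubly differenced term $\sum_iQ_{h,i}^{*,\tau}(s,\bar\pi_{h,i}^{t+1}-\pi_{h,i}^{*,\tau},\bar\pi_{h,-i}^{t+1}-\pi_{h,-i}^{*,\tau})$, by the global identity of Lemma~\ref{sumzero2}.
\item In Step 3, the extra power of $t$ per stage comes only from the $\sqrt{L}$ recursion, which has rate $\alpha$ with a driver at rate $\alpha$. The $\delta$ recursion has rate $1-\eta\tau$ with a driver at rate $\alpha$; it costs a factor $\eta\tau/(1-\alpha)\le2$ and no extra power of $t$.
\item The initial divergence satisfies $L_h^0\le N\log A$, not $\log A$.
\item In fact $\delta_H^t=0$ for all $t$, because the initialization is exact at stage $H$.
\end{itemize}
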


\begin{lemma}[QRE-gap estimate]
    \label{gregapest}
    Under the conditions of Theorem~\ref{mainthm}, there exists a constant $C= O(N^4H^2\log A/\tau)$ and $G_h \leq C^{2(H-h)}$ such that for every $h\in[H]$ and $t\geq 0$,
    \[
        \textnormal{QRE-Gap}_h(\bar\pi_h^t)
        \leq
        G_h(t+H-h+2)^{2(H-h)}(1-\eta\tau)^t.
    \]
    In particular, if $\eta=\Theta(1/(N^2H))$, then for every $\varepsilon\in(0,1]$,
    \(
        \max_{h\in[H]}\textnormal{QRE-Gap}_h(\bar\pi_h^t)
        \leq\varepsilon
    \)
    after
    \(
        t=\widetilde O\!\left(
            \frac{N^2H^2}{\tau}
            \log\frac{1}{\varepsilon}
        \right)
    \)
    iterations. Here $\widetilde O(\cdot)$ hides additional logarithmic factors in $N,H,A,\tau^{-1}$.
\end{lemma}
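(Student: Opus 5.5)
The plan is to reduce the stage-wise QRE-gap to a product of two quantities, each decaying like $(1-\eta\tau)^{t/2}$, so that the product gives the claimed $(1-\eta\tau)^t$ rate and the squared polynomial factor $(t+H-h+2)^{2(H-h)}$.

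\textbf{Step 1: closed form of the gap.} Fix $h$, $s$ and $i$. Write $u_i:=Q_{h,i}^{*,\tau}(s)$ and let $\beta_i\propto\exp(u_i(\cdot,\bar\pi_{h,-i}^t(s))/\tau)$ be the regularized best response to $\bar\pi^t_{h,-i}(s)$. A standard Gibbs variational computation gives
\[
\max_{\pi_i'}u_{i,\tau}(\pi_i',\bar\pi^t_{-i})-u_{i,\tau}(\bar\pi^t)=\tau\,\KL(\bar\pi^t_{h,i}(s)\,\|\,\beta_i).
\]
I then use the symmetrized bound $\KL(p\|q)\le\sum_a(p-q)(a)\log\frac{p(a)}{q(a)}\le\|p-q\|_1\cdot\tfrac12\operatorname{osc}(\log p-\log q)$, where $\operatorname{osc}(f):=\max_a f(a)-\min_a f(a)$. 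This reduces the lemma to two estimates: an $\ell_1$ distance between $\bar\pi^t_{h,i}(s)$ and $\beta_i$, and an oscillation bound on the log-ratio. I would split both through the QRE $\pi^{*,\tau}_{h,i}(s)\propto\exp(u_i(\cdot,\pi^{*,\tau}_{-i})/\tau)$. The difference of logits satisfies
\[
\operatorname{osc}\bigl(\log\pi^{*,\tau}_{h,i}-\log\beta_i\bigr)\le\frac{2}{\tau}\max_{a_i}\bigl|u_i(a_i,\pi^{*,\tau}_{-i})-u_i(a_i,\bar\pi^t_{-i})\bigr|.
\]
By the edge decomposition, this is at most $\frac{2}{\tau}\sum_{j\in\gN_i}2NH\,\|\pi^{*,\tau}_{h,j}-\bar\pi^t_{h,j}\|_{\mathrm{TV}}\lesssim\frac{N^2H}{\tau}\sqrt{\bar L_h^t}$ via Pinsker. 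The same bound, combined with Lipschitzness of softmax, controls $\|\pi^{*,\tau}_{h,i}-\beta_i\|_1$. Pinsker also gives $\|\bar\pi^t_{h,i}-\pi^{*,\tau}_{h,i}\|_1\le\sqrt{2\bar L_h^t}$.

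\textbf{Step 2: $\ell_\infty$ log-distance of the iterates.} The remaining piece is $\operatorname{osc}(\log\bar\pi^t_{h,i}-\log\pi^{*,\tau}_{h,i})$, which KL does not control. Following \citet{cen2022faster}, I would take logs in the updates (\ref{omwu2}) and (\ref{omwu1}). Using $\log\pi^{*,\tau}_{h,i}=u_i(\cdot,\pi^{*,\tau}_{-i})/\tau+\mathrm{const}$, the updates give, modulo constants,
\[
\log\bar\pi^{t+1}_{h,i}-\log\pi^{*,\tau}_{h,i}=(1-\eta\tau)\bigl(\log\pi^{t}_{h,i}-\log\pi^{*,\tau}_{h,i}\bigr)+\eta\bigl(Q^t_{h,i}(\cdot,\bar\pi^t_{-i})-u_i(\cdot,\pi^{*,\tau}_{-i})\bigr).
\]
An analogous recursion holds for $\pi^{t+1}$. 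The perturbation term is bounded in sup-norm by $\eta\bigl(\delta_h^t+O(N^2H)\sqrt{\bar L_h^t}\bigr)$. Unrolling the $(1-\eta\tau)$-contraction and inserting the bound of Lemma~\ref{keygoal}, together with the companion bound on $\delta_h^t$ of the same form produced by Lemmas~\ref{iter1}--\ref{iter2} in its proof, yields
\[
\operatorname{osc}\bigl(\log\bar\pi^t_{h,i}-\log\pi^{*,\tau}_{h,i}\bigr)\lesssim C'\,(t+H-h+2)^{H-h}(1-\eta\tau)^{t/2}.
\]
The convolution sum $\sum_k(1-\eta\tau)^{t-k}(1-\eta\tau)^{k/2}\mathrm{poly}(k)$ costs only an extra factor of order $1/(\eta\tau)$ times the polynomial. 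The uniform initialization gives a bounded starting oscillation, with a $\log A$ dependence in the constant.

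\textbf{Step 3: assemble and read off the iteration count.} Multiplying the $\ell_1$ bound by the oscillation bound and taking the maximum over $s$ and $i$ gives $\textnormal{QRE-Gap}_h\le\tau\cdot C_1C_2\,(t+H-h+2)^{2(H-h)}(1-\eta\tau)^t$. Taking $G_h=\tau C_1C_2$ and tracking constants against $C_h\le C^{H-h}$ gives $G_h\le C^{2(H-h)}$ with $C=O(N^4H^2\log A/\tau)$. For the iteration count, with $\eta=\Theta(1/(N^2H))$ we have $(1-\eta\tau)^t\le e^{-\eta\tau t}$. Requiring this to beat $G_h(t+H)^{2H}/\varepsilon$ gives $t=\widetilde O\bigl(\frac{N^2H^2}{\tau}\log\frac1\varepsilon\bigr)$, since the logarithm of the polynomial prefactor contributes $O(H\log(\cdots))$.

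\textbf{Main obstacle.} I expect Step 2 to be the hardest, specifically obtaining a decaying $\delta_h^t$ bound in the right form. If the proof of Lemma~\ref{keygoal} only controls $\bar L_h^t$ and edge-wise $Q$-errors at later stages, I would rederive the bound on $\delta_h^t$ from (\ref{rule1}). The recursion $\delta_h^{t+1}\le(1-\eta\tau)\delta_h^t+\eta\tau\,N\sup\|V^{t+1}_{h+1}-V^{*,\tau}_{h+1}\|_\infty$ is driven by the stage-$(h+1)$ errors, which Lemma~\ref{keygoal} already controls. This driving term is what produces the extra polynomial degree per stage.
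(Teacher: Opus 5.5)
Your proof is correct, but it is organized differently from the paper's.

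\textbf{The paper's route.} The paper never isolates a single player. It bounds the maximum over players by the sum of the nonnegative maximized advantages. It then uses the global zero-sum property to replace $\sum_i u_i(\bar\pi)$ by $\sum_i u_i(\pi^*)$ and splits the result into three terms. Lemma~\ref{sumzero2} and the QRE logit condition make the second term exactly $\tau\KL(\bar\pi\|\pi^*)$ and the third exactly $-\tau\KL(\nu^\dagger\|\pi^*)$. The third term cancels the deviation part of the Young bound on the first. The result is $\textnormal{QRE-Gap}_h\le\max_s\bigl[\tau\KL(\bar\pi_h^t\|\pi_h^{*,\tau})+\frac{N^4H^2}{\tau}\KL(\pi_h^{*,\tau}\|\bar\pi_h^t)\bigr]$. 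This is closed using existing machinery only. The forward KL is $\bar L_h^t$. The reverse KL comes from the Step~5 bound of Lemma~\ref{iter1}, $\frac{\eta\tau}{2}\KL(\bar\pi_h^{t}\|\pi_h^{*,\tau})\le(1-\eta\tau)L_h^{t-1}+\Gamma_\eta(\delta_h^{t-1}+2\delta_h^t)^2$, at the price of a factor $1/\eta$.

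\textbf{Your route.} You use the exact per-player identity $\tau\KL(\bar\pi^t_{h,i}\|\beta_i)$, with $\beta_i$ your regularized best response. You pass to $\frac{\tau}{2}\|\cdot\|_1\operatorname{osc}(\cdot)$. You then supply the missing sup-norm control with a new contraction recursion on $\log\bar\pi^t_{h,i}-\log\pi^{*,\tau}_{h,i}$. Its driving term $\delta_h^t+NH\sqrt{2N}\,\bar X_h^t$ is exactly what Lemma~\ref{keygoal2} controls; that strengthened tracking lemma already includes $\delta_h^t$. So the obstacle you anticipated does not arise.

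\textbf{What each buys.} Your route needs no zero-sum cancellation at this stage; the zero-sum structure enters only inside the tracking lemmas. It also bounds each player separately and covers $t=0$ directly, whereas the paper's argument starts at $t\ge1$. As a byproduct it gives log-sup-norm convergence of the iterates. The paper's route avoids any new recursion. For $\eta=\Theta(1/(N^2H))$, both constants have the form $\mathrm{poly}(N,H,1/\tau)\,C_h^2$; yours is roughly $N^3H^2C_h^2/\tau$. The iteration count is therefore the same.

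\textbf{A minor slip.} The initial oscillation is $\operatorname{osc}(\log\pi^{*,\tau}_{h,i}(s))\le 2NH/\tau$ by the logit form, with no $\log A$ factor.
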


\paragraph{Proof of Theorem~\ref{mainthm}.}
Write $\beta=1-\eta\tau\in[1/2,1)$. By Lemmas~\ref{decomp}, \ref{keygoal}, and~\ref{gregapest}, for every $t\geq1$,
\[
\begin{aligned}
    \textnormal{NE-Gap}(\bar\pi^t)
    \leq{}&
    \sum_{h=1}^H G_h(t+H-h+2)^{2(H-h)}\beta^t
    \\
    &+2NH\sqrt{2N}
    \sum_{\ell=2}^H(\ell-1)C_\ell
    (t+H-\ell+2)^{H-\ell}\beta^{t/2}
    +\tau H\log A.
\end{aligned}
\]
Using $\beta^t\leq\beta^{t/2}$ and bounding both polynomial factors by $(t+H+2)^{2H}$, define
\[
    C_{\mathrm{poly}}
    :=\sum_{h=1}^H G_h
    +2NH\sqrt{2N}\sum_{\ell=2}^H(\ell-1)C_\ell,
\]
where the second summation is zero when $H=1$. This gives
\[
    \textnormal{NE-Gap}(\bar\pi^t)
    \leq C_{\mathrm{poly}}(t+H+2)^{2H}\beta^{t/2}
    +\tau H\log A.
\]
For $\eta=1/(16N^2H)$ and $\tau=\epsilon/(2H\log A)$, the entropy bias is \[\tau H\log A=\frac{\epsilon}{2}.\]
According to Lemmas~\ref{keygoal} and~\ref{gregapest}, there exists a constant $C=O(N^4H^2\log A/\tau)$ such that $C_h,G_h \leq C^{2(H-h)}$. Hence,
\(
    C_{\mathrm{poly}}\leq O(N^2H^3C^{2H} ),
\)
indicating that 
\[
    \log C_{\mathrm{poly}} = \widetilde O\left(H\log\frac{NH\log A}{\epsilon}\right).
\]
Since $\beta^{t/2}=(1-\eta\tau)^{t/2}\leq\exp(-\eta\tau t/2)$, in order for $C_{\mathrm{poly}}(t+H+2)^{2H}\exp(-\eta\tau t/2)=\epsilon/2$, it suffices to choose $t$ such that
\[
\begin{split}
    t
    &= \frac{2}{\eta\tau}\left(\log 2C_{\mathrm{poly}}-\log \epsilon
        +2H\log(t+H+2)\right)
    \\
    &=\frac{64N^2H^2\log A}{\epsilon}\left(\log 2C_{\mathrm{poly}}-\log \epsilon
        +2H\log(t+H+2)\right)
\end{split}
\]
For a sufficiently large universal constant, this holds with
\[
    t=\widetilde O\!\left(\frac{N^2H^3\log A}{\epsilon}\right).
\] \hfill $\square$

\paragraph{Discussions.}


Theorem~\ref{mainthm} establishes a last-iterate convergence rate of $\widetilde O(1/\epsilon)$ for finding an $\epsilon$-approximate Nash equilibrium, which is a near-optimal result. This matches the convergence result of the value-iteration methods studied in \citep{park2023multi}. It also demonstrates that the near-optimal convergence rate of OMWU obtained in two-player zero-sum Markov games remains preserved in zero-sum networked separable Markov games, which can be seen as a generalized multi-player setting of the former. This can also be reflected in the convergence rate through the dependence on other parameters. The result in \citep{cen2022faster} shows that in the two-player setting, the convergence rate for learning an approximate NE is
\(
    t=\widetilde O(H^3/\epsilon).
\)
Our result matches this in terms of the exponent of $H$. The difference is that, since the game is a multi-player game, a new dependence term $N^2$ is introduced.


\section{Numerical experiments}
\label{sec:experiments}

\begin{figure}[t]
    \centering
    \includegraphics[width=\linewidth]{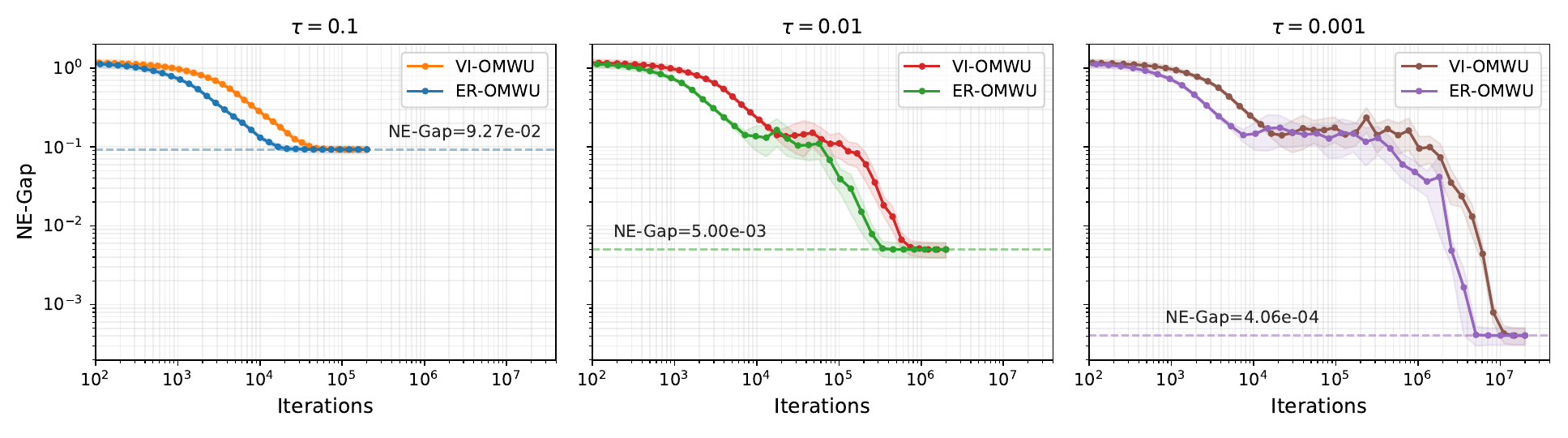}
    \caption{Convergence of the first-stage Nash-equilibrium gap for
    $\tau\in\{0.1,0.01,0.001\}$. Curves show means over 10 sampled instances on
    log-log axes, with VI-OMWU plotted using the matched update-count
    convention.}
    \label{fig:negap}
\end{figure}

We present a numerical study illustrating two algorithmic consequences
of the theory. First, under the same entropy temperature $\tau$, ER-OMWU
converges to essentially the same NE-gap as a value-iteration-based method.
Second, as the entropy temperature decreases, the observed terminal NE-gap
decreases in the same qualitative way as the entropy-bias term in
Theorem~\ref{mainthm}. The experiments are run on random complete zero-sum
networked separable Markov games: the interaction graph is complete, the number
of players is $N=3$, the horizon is $H=5$, the number of states is $S=2$, and
each player has $A=2$ actions at every state. We use 10 random seeds to generate
10 independent game instances; all reported statistics are averaged over these
instances, with confidence intervals shown in the figures.

We compare ER-OMWU with the value-iteration baseline VI-OMWU proposed in \citep{park2023multi}. VI-OMWU performs
backward dynamic programming and invokes an inner OMWU solver for each induced
state-stage normal-form game. Both methods are evaluated on a geometric grid with nominal size 50. The
ER-OMWU learning-rate multiplier is $0.0625$, so in this experiment
\(
    \eta_{\mathrm{ER}}
    =
    \frac{0.0625}{N^2H}
    =
    \frac{1}{720}.
\)
We evaluate the methods over the temperature grid
\(
    \tau\in\{0.5,0.1,0.05,0.01,0.005,0.001\}.
\)
For $\tau\in\{0.1,0.01,0.001\}$, ER-OMWU is run for $2\cdot10^5$,
$2\cdot10^6$, and $2\cdot10^7$ iterations, respectively; the complementary
temperatures $\{0.5,0.05,0.005\}$ use the same budgets. Since VI-OMWU solves
$H S=10$ state-stage games per backward pass, the plots use a matched
update-count convention: ER-OMWU is plotted against its number of single-loop
policy updates, while VI-OMWU is plotted against the aggregate number of inner
solver updates across all state-stage games. Under this convention, the
VI-OMWU inner-loop budgets for $\tau\in\{0.1,0.01,0.001\}$ are $2\cdot10^4$,
$2\cdot10^5$, and $2\cdot10^6$, respectively. The y-axis reports the exact
first-stage Markov Nash-equilibrium gap, maximized over players and initial
states.

\begin{wrapfigure}{r}{0.425\textwidth}
    \centering
    \includegraphics[width=\linewidth]{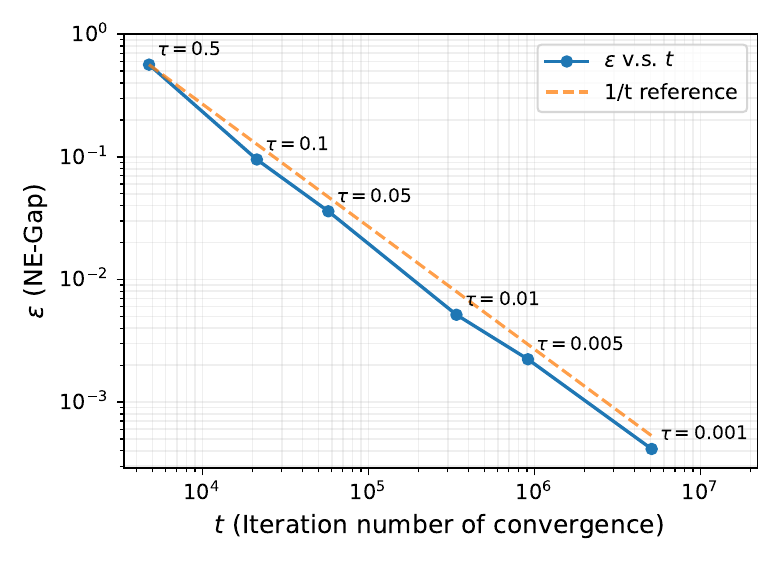}
    \caption{NE-Gap vs. Convergence iteration for ER-OMWU. The reference line has slope
    $1/T$.}
    \label{fig:linear}
\end{wrapfigure}

\Cref{fig:negap} shows the evolution of the NE-gap for the three representative
temperatures. Across all panels, ER-OMWU decreases the last-iterate NE-gap and
reaches essentially the same terminal accuracy as VI-OMWU under the matched
update-count convention. This supports the main algorithmic point: on these
instances, replacing repeated stage-game equilibrium solves with edge-wise
value tracking and a single policy trajectory preserves the relevant accuracy
behavior. The effect of the entropy temperature is also visible. Smaller
$\tau$ leads to a lower final NE-gap floor, consistent with the entropy-bias
term in Theorem~\ref{mainthm}, while requiring more updates before the curve
stabilizes. The results for complementary temperatures are reported in \Cref{fig:negap-extra-tau} in
Appendix~\ref{app:supplementary-experiments}.

\Cref{fig:linear} examines the near-linear accuracy dependence more directly.
For each $\tau$, we define the convergence iteration as the first evaluation
point at which the mean NE-gap reaches its terminal plateau, estimated from the
tail of the curve with a $5\%$ relative tolerance. The figure plots this
convergence iteration against the corresponding terminal NE-gap over all six
temperatures. After choosing
$\tau$ at the desired accuracy scale, the single-loop ER-OMWU dynamics displays
the near-linear accuracy trend predicted by Theorem~\ref{mainthm}, while
avoiding the stage-game equilibrium-solver calls used by the VI baseline.

\section{Conclusion}

We studied last-iterate equilibrium learning in finite-horizon zero-sum networked separable Markov games and proposed ER-OMWU, a single-loop method that combines symmetric optimistic policy updates with edge-wise Bellman tracking. In the known-model setting, the method finds an $\epsilon$-approximate Nash equilibrium in $\widetilde O(N^2H^3\log A/\epsilon)$ iterations without solving equilibrium problems at individual state-stage pairs. This result establishes that direct policy optimization can retain near-linear accuracy dependence in structured multi-player competition, extending the last-iterate perspective beyond two-player zero-sum games. A natural next step is to develop analogous guarantees with unknown transitions and sampled feedback, where policy learning and value estimation must be controlled jointly. More broadly, it remains to understand how much communication and structural information fast last-iterate learning requires, and which richer interaction and transition models admit similar guarantees.

\newpage

\bibliographystyle{plainnat}
\bibliography{references}  

@inproceedings{NEURIPS2025_42fe68e7,
 author = {Cai, Yang and Luo, Haipeng and Wei, Chen-Yu and Zheng, Weiqiang},
 booktitle = {Advances in Neural Information Processing Systems},
 pages = {46937--46967},
 title = {From Average-Iterate to Last-Iterate Convergence in Games: A Reduction and Its Applications},
 volume = {38, Main Conference},
 year = {2025}
}

@inproceedings{cai2011minmax,
  title={On minmax theorems for multiplayer games},
  author={Cai, Yang and Daskalakis, Constantinos},
  booktitle={Proceedings of the twenty-second annual ACM-SIAM symposium on Discrete algorithms},
  pages={217--234},
  year={2011},
  organization={SIAM}
}

@article{leonardos2021exploration,
  title={Exploration-exploitation in multi-agent competition: convergence with bounded rationality},
  author={Leonardos, Stefanos and Piliouras, Georgios and Spendlove, Kelly},
  journal={Advances in Neural Information Processing Systems},
  volume={34},
  pages={26318--26331},
  year={2021}
}

@article{cen2022faster,
  title={Faster Last-iterate Convergence of Policy Optimization in Zero-Sum Markov Games},
  author={Shicong Cen and Yuejie Chi and Simon Shaolei Du and Lin Xiao},
  journal={The Eleventh International Conference on Learning Representations },
  year={2023}
}

@inproceedings{erez2023regret,
  title={Regret minimization and convergence to equilibria in general-sum markov games},
  author={Erez, Liad and Lancewicki, Tal and Sherman, Uri and Koren, Tomer and Mansour, Yishay},
  booktitle={International Conference on Machine Learning},
  pages={9343--9373},
  year={2023},
  organization={PMLR}
}

@article{zhang2022policy,
  title={Policy optimization for markov games: Unified framework and faster convergence},
  author={Zhang, Runyu and Liu, Qinghua and Wang, Huan and Xiong, Caiming and Li, Na and Bai, Yu},
  journal={Advances in Neural Information Processing Systems},
  volume={35},
  pages={21886--21899},
  year={2022}
}

@inproceedings{wei2021last,
  title={Last-iterate convergence of decentralized optimistic gradient descent/ascent in infinite-horizon competitive Markov games},
  author={Wei, Chen-Yu and Lee, Chung-Wei and Zhang, Mengxiao and Luo, Haipeng},
  booktitle={Conference on learning theory},
  pages={4259--4299},
  year={2021},
  organization={PMLR}
}

@article{anagnostides2022uncoupled,
  title={Uncoupled Learning Dynamics with  O (log T)  Swap Regret in Multiplayer Games},
  author={Anagnostides, Ioannis and Farina, Gabriele and Kroer, Christian and Lee, Chung-Wei and Luo, Haipeng and Sandholm, Tuomas},
  journal={Advances in Neural Information Processing Systems},
  volume={35},
  pages={3292--3304},
  year={2022}
}

@inproceedings{anagnostides2022near,
  title={Near-optimal no-regret learning for correlated equilibria in multi-player general-sum games},
  author={Anagnostides, Ioannis and Daskalakis, Constantinos and Farina, Gabriele and Fishelson, Maxwell and Golowich, Noah and Sandholm, Tuomas},
  booktitle={Proceedings of the 54th Annual ACM SIGACT Symposium on Theory of Computing},
  pages={736--749},
  year={2022}
}

@article{daskalakis2021near,
  title={Near-optimal no-regret learning in general games},
  author={Daskalakis, Constantinos and Fishelson, Maxwell and Golowich, Noah},
  journal={Advances in Neural Information Processing Systems},
  volume={34},
  pages={27604--27616},
  year={2021}
}

@article{chen2020hedging,
  title={Hedging in games: Faster convergence of external and swap regrets},
  author={Chen, Xi and Peng, Binghui},
  journal={Advances in Neural Information Processing Systems},
  volume={33},
  pages={18990--18999},
  year={2020}
}

@article{syrgkanis2015fast,
  title={Fast convergence of regularized learning in games},
  author={Syrgkanis, Vasilis and Agarwal, Alekh and Luo, Haipeng and Schapire, Robert E},
  journal={Advances in Neural Information Processing Systems},
  volume={28},
  year={2015}
}

@article{rakhlin2013optimization,
  title={Optimization, learning, and games with predictable sequences},
  author={Rakhlin, Sasha and Sridharan, Karthik},
  journal={Advances in Neural Information Processing Systems},
  volume={26},
  year={2013}
}

@inproceedings{daskalakis2011near,
  title={Near-optimal no-regret algorithms for zero-sum games},
  author={Daskalakis, Constantinos and Deckelbaum, Alan and Kim, Anthony},
  booktitle={Proceedings of the twenty-second annual ACM-SIAM symposium on Discrete Algorithms},
  pages={235--254},
  year={2011},
  organization={SIAM}
}

@article{rubinstein2017settling,
  title={Settling the complexity of computing approximate two-player Nash equilibria},
  author={Rubinstein, Aviad},
  journal={ACM SIGecom Exchanges},
  volume={15},
  number={2},
  pages={45--49},
  year={2017},
  publisher={ACM New York, NY, USA}
}

@article{etessami2010complexity,
  title={On the complexity of Nash equilibria and other fixed points},
  author={Etessami, Kousha and Yannakakis, Mihalis},
  journal={SIAM Journal on Computing},
  volume={39},
  number={6},
  pages={2531--2597},
  year={2010},
  publisher={SIAM}
}

@article{chen2009settling,
  title={Settling the complexity of computing two-player Nash equilibria},
  author={Chen, Xi and Deng, Xiaotie and Teng, Shang-Hua},
  journal={Journal of the ACM (JACM)},
  volume={56},
  number={3},
  pages={1--57},
  year={2009},
  publisher={ACM New York, NY, USA}
}

@article{daskalakis2009complexity,
  title={The complexity of computing a Nash equilibrium},
  author={Daskalakis, Constantinos and Goldberg, Paul W and Papadimitriou, Christos H},
  journal={Communications of the ACM},
  volume={52},
  number={2},
  pages={89--97},
  year={2009},
  publisher={ACM New York, NY, USA}
}

@article{perolat2022mastering,
  title={Mastering the game of Stratego with model-free multiagent reinforcement learning},
  author={Perolat, Julien and De Vylder, Bart and Hennes, Daniel and Tarassov, Eugene and Strub, Florian and de Boer, Vincent and Muller, Paul and Connor, Jerome T and Burch, Neil and Anthony, Thomas and others},
  journal={Science},
  volume={378},
  number={6623},
  pages={990--996},
  year={2022},
  publisher={American Association for the Advancement of Science}
}

@article{brown2020combining,
  title={Combining deep reinforcement learning and search for imperfect-information games},
  author={Brown, Noam and Bakhtin, Anton and Lerer, Adam and Gong, Qucheng},
  journal={Advances in Neural Information Processing Systems},
  volume={33},
  pages={17057--17069},
  year={2020}
}

@article{brown2019superhuman,
  title={Superhuman AI for multiplayer poker},
  author={Brown, Noam and Sandholm, Tuomas},
  journal={Science},
  volume={365},
  number={6456},
  pages={885--890},
  year={2019},
  publisher={American Association for the Advancement of Science}
}

@article{brown2018superhuman,
  title={Superhuman AI for heads-up no-limit poker: Libratus beats top professionals},
  author={Brown, Noam and Sandholm, Tuomas},
  journal={Science},
  volume={359},
  number={6374},
  pages={418--424},
  year={2018},
  publisher={American Association for the Advancement of Science}
}

@article{moravvcik2017deepstack,
  title={Deepstack: Expert-level artificial intelligence in heads-up no-limit poker},
  author={Morav{\v{c}}{\'\i}k, Matej and Schmid, Martin and Burch, Neil and Lis{\`y}, Viliam and Morrill, Dustin and Bard, Nolan and Davis, Trevor and Waugh, Kevin and Johanson, Michael and Bowling, Michael},
  journal={Science},
  volume={356},
  number={6337},
  pages={508--513},
  year={2017},
  publisher={American Association for the Advancement of Science}
}

@article{bowling2015heads,
  title={Heads-up limit hold’em poker is solved},
  author={Bowling, Michael and Burch, Neil and Johanson, Michael and Tammelin, Oskari},
  journal={Science},
  volume={347},
  number={6218},
  pages={145--149},
  year={2015},
  publisher={American Association for the Advancement of Science}
}

@article{silver2017mastering,
  title={Mastering the game of go without human knowledge},
  author={Silver, David and Schrittwieser, Julian and Simonyan, Karen and Antonoglou, Ioannis and Huang, Aja and Guez, Arthur and Hubert, Thomas and Baker, Lucas and Lai, Matthew and Bolton, Adrian and others},
  journal={nature},
  volume={550},
  number={7676},
  pages={354--359},
  year={2017},
  publisher={Nature Publishing Group}
}

@article{mertikopoulos2016learning,
  title={Learning in games via reinforcement and regularization},
  author={Mertikopoulos, Panayotis and Sandholm, William H},
  journal={Mathematics of Operations Research},
  volume={41},
  number={4},
  pages={1297--1324},
  year={2016},
  publisher={INFORMS}
}

@article{park2023multi,
  title={Multi-Player Zero-Sum Markov Games with Networked Separable Interactions},
  author={Chanwoo Park and Kaiqing Zhang and Asuman E. Ozdaglar},
  journal={Thirty-seventh Conference on Neural Information Processing Systems},
  year={2023}
}

@article{kalogiannis2023zero,
  title={Zero-sum Polymatrix Markov Games: Equilibrium Collapse and Efficient Computation of Nash Equilibria},
  author={Kalogiannis, Fivos and Panageas, Ioannis},
  journal={arXiv preprint arXiv:2305.14329},
  year={2023}
}

@article{ao2022asynchronous,
  title={Asynchronous gradient play in zero-sum multi-agent games},
  author={Ao, Ruicheng and Cen, Shicong and Chi, Yuejie},
  journal={arXiv preprint arXiv:2211.08980},
  year={2022}
}

@article{cai2016zero,
  title={Zero-sum polymatrix games: A generalization of minmax},
  author={Cai, Yang and Candogan, Ozan and Daskalakis, Constantinos and Papadimitriou, Christos},
  journal={Mathematics of Operations Research},
  volume={41},
  number={2},
  pages={648--655},
  year={2016},
  publisher={INFORMS}
}

@inproceedings{yang2023ot,
title={\$O(T{\textasciicircum}\{-1\})\$ Convergence of Optimistic-Follow-the-Regularized-Leader in Two-Player Zero-Sum Markov Games },
author={Yuepeng Yang and Cong Ma},
booktitle={The Eleventh International Conference on Learning Representations },
year={2023},
url={https://openreview.net/forum?id=VWqiPBB_EM}
}

@inproceedings{anagnostides2024optimistic,
  title={Optimistic policy gradient in multi-player markov games with a single controller: Convergence beyond the minty property},
  author={Anagnostides, Ioannis and Panageas, Ioannis and Farina, Gabriele and Sandholm, Tuomas},
  booktitle={Proceedings of the AAAI Conference on Artificial Intelligence},
  volume={38},
  note={Issue 9},
  pages={9451--9459},
  year={2024}
}

@inproceedings{cai2024near,
  title={Near-optimal policy optimization for correlated equilibrium in general-sum markov games},
  author={Cai, Yang and Luo, Haipeng and Wei, Chen-Yu and Zheng, Weiqiang},
  booktitle={International Conference on Artificial Intelligence and Statistics},
  pages={3889--3897},
  year={2024},
  organization={PMLR}
}

@inproceedings{mao2024widetilde,
  title={{{$\widetilde O(T^{-1})$}} Convergence to (coarse) correlated equilibria in full-information general-sum Markov games},
  author={Mao, Weichao and Qiu, Haoran and Wang, Chen and Franke, Hubertus and Kalbarczyk, Zbigniew and Ba{\c{s}}ar, Tamer},
  booktitle={6th Annual Learning for Dynamics \& Control Conference},
  pages={361--374},
  year={2024},
  organization={PMLR}
}

@article{cai2024fast,
  title={Fast last-iterate convergence of learning in games requires forgetful algorithms},
  author={Cai, Yang and Farina, Gabriele and Grand-Cl{\'e}ment, Julien and Kroer, Christian and Lee, Chung-Wei and Luo, Haipeng and Zheng, Weiqiang},
  journal={Advances in Neural Information Processing Systems},
  volume={37},
  pages={23406--23434},
  year={2024}
}

@article{chen2024last,
  title={Last-iterate convergence of payoff-based independent learning in zero-sum stochastic games},
  author={Chen, Zaiwei and Zhang, Kaiqing and Mazumdar, Eric and Ozdaglar, Asuman and Wierman, Adam},
  journal={arXiv preprint arXiv:2409.01447},
  year={2024}
}

@article{kalogiannis2024learning,
  title={Learning equilibria in adversarial team markov games: A nonconvex-hidden-concave min-max optimization problem},
  author={Kalogiannis, Fivos and Yan, Jingming and Panageas, Ioannis},
  journal={Advances in Neural Information Processing Systems},
  volume={37},
  pages={92832--92890},
  year={2024}
}

@article{gemp2024convex,
  title={Convex markov games: A new frontier for multi-agent reinforcement learning},
  author={Gemp, Ian and Haupt, Andreas and Marris, Luke and Liu, Siqi and Piliouras, Georgios},
  journal={arXiv preprint arXiv:2410.16600},
  year={2024}
}

@article{kalogiannis2025solving,
  title={Solving zero-sum convex markov games},
  author={Kalogiannis, Fivos and Vlatakis-Gkaragkounis, Emmanouil-Vasileios and Gemp, Ian and Piliouras, Georgios},
  journal={arXiv preprint arXiv:2506.16120},
  year={2025}
}

\newpage
\appendix

\tableofcontents

\newpage

\section{Proofs for the preliminaries}
\label{app:proofs-main}

\subsection{Proof of Proposition~\ref{prop:qre-well-defined}}
\begin{proof}
    Let $\Pi:=\prod_{i=1}^{N}\Delta(\gA_i)$ and define the payoff vector $v_i(p):=(u_i(a_i,p_{-i}))_{a_i\in\gA_i}$ for $p\in\Pi$. The logit-response map $B:\Pi\to\Pi$ has coordinates
    \[
        B_i(p)(a_i):=
        \frac{\exp(v_i(p)(a_i)/\tau)}
        {\sum_{a\in\gA_i}\exp(v_i(p)(a)/\tau)}.
    \]
    The utilities are finite and continuous in mixed profiles, and $\tau>0$. Thus $B$ is continuous and every coordinate of $B(p)$ is strictly positive. The product simplex $\Pi$ is nonempty, compact, and convex, so Brouwer's fixed-point theorem gives $p=B(p)$. This is a full-support QRE.

    To prove uniqueness, let $p,q\in\Pi$ be two QREs. Their logit equations imply
    \[
        \tau(\log p_i-\log q_i)
        =v_i(p)-v_i(q)+c_i\mathbf1
    \]
    for scalars $c_i$, where logarithms are taken coordinatewise. Taking the inner product with $p_i-q_i$ removes $c_i\mathbf1$, because both $p_i$ and $q_i$ have unit mass. Summing over players yields
    \[
        \tau\sum_{i=1}^{N}
        \left[\KL(p_i\|q_i)+\KL(q_i\|p_i)\right]
        =\sum_{i=1}^{N}
        \langle p_i-q_i,v_i(p)-v_i(q)\rangle.
    \]
    All KL divergences are finite, since both QREs have full support. The right-hand side equals
    \[
    \begin{aligned}
        &\sum_i u_i(p)+\sum_i u_i(q)\\
        &\qquad-\sum_i\left[u_i(p_i,q_{-i})+u_i(q_i,p_{-i})\right]
        =0.
    \end{aligned}
    \]
    The first two sums vanish by global zero-sum structure, and the cross sum vanishes by the zero-sum networked separable identity in Lemma~\ref{sumzero2}. Nonnegativity of KL divergence and $\tau>0$ now imply $p_i=q_i$ for every $i$. This proves uniqueness.
\end{proof}

\subsection{Well-definedness and zero-sum structure of the QRE-induced returns}

\begin{lemma}
    \label{lem:qre-induced-structure}
    For every $\tau>0$, the backward construction in Eqs.~(\ref{RQ})--(\ref{RV}) is well defined and determines a unique policy profile $\pi^{*,\tau}$. For every $h\in[H]$ and $i\in[N]$,
    \[
        Q_{h,i}^{*,\tau}=Q_{h,i}^{\pi^{*,\tau}},
        \qquad
        V_{h,i}^{*,\tau}=V_{h,i}^{\pi^{*,\tau}}.
    \]
    Moreover, for every $h\in[H]$, $s\in\gS$, and $\va\in\gA$,
    \[
        \sum_{i=1}^{N}Q_{h,i}^{*,\tau}(s,\va)=0,
        \qquad
        \sum_{i=1}^{N}V_{h,i}^{*,\tau}(s)=0.
    \]
    Each stage game $Q_h^{*,\tau}(s)$ is zero-sum and networked separable, with the edge targets in Eq.~(\ref{eq:edge-qre-target}). These conclusions hold for both empty and nonempty controller sets.
\end{lemma}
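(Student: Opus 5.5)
The proof is a backward induction on $h$ from $H+1$ down to $1$. The induction hypothesis at level $h+1$ has three parts: $V_{h+1,i}^{*,\tau}$ is a well-defined bounded function; $\sum_i V_{h+1,i}^{*,\tau}\equiv0$; and $V_{h+1,i}^{*,\tau}=V_{h+1,i}^{\pi^{*,\tau}}$, where $\pi^{*,\tau}$ uses the stage policies $(\pi_{\ell}^{*,\tau})_{\ell>h}$ already constructed. Since the value at level $h+1$ depends only on those later stages, this is meaningful. The base case $h+1=H+1$ is immediate because all the $V_{H+1,i}$ vanish.

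For the inductive step, fix $s$. First, $Q_{h,i}^{*,\tau}(s,\cdot)$ is defined by Eq.~(\ref{RQ}). Using Eq.~(\ref{eq:edge-transition-sum}) and the reward decomposition, it equals $\sum_{j\in\gN_i}Q_{h,i,j}^{*,\tau}(s,a_i,a_j)$ with the edge targets of Eq.~(\ref{eq:edge-qre-target}). Each edge term depends only on $(a_i,a_j)$, because $\sK_{h,i,j}$ does; this needs checking in both controller cases. When $\gN_C\neq\emptyset$, $\sF_{h,j}$ vanishes for $j\notin\gN_C$. Every player is adjacent to every controller other than itself. Hence, if $i\in\gN_C$, the term $\frac1{|\gN_i|}\sF_{h,i}$ summed over $|\gN_i|$ neighbors recovers $\sF_{h,i}$, and the $\sF_{h,j}$ terms over $j\in\gN_i$ cover exactly $\gN_C\setminus\{i\}$. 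When $\gN_C=\emptyset$, the sum is trivially $\sP_{h,o}$. So the stage game is networked separable.

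Next, the zero-sum property. We have
\[
\sum_i Q_{h,i}^{*,\tau}(s,\va)=\sum_i r_{h,i}(s,\va)+\sP_h(\cdot|s,\va)\sum_i V_{h+1,i}^{*,\tau}(\cdot)=0,
\]
using the zero-sum rewards and the induction hypothesis. The key point is that $\sP_h$ is common to all players, so the sum can be pulled inside. Proposition~\ref{prop:qre-well-defined} then yields a unique full-support QRE $\pi_h^{*,\tau}(s)$, so the stage-$h$ policy is uniquely determined. Taking the expectation of the zero-sum identity under $\pi_h^{*,\tau}(s)$ gives $\sum_iV_{h,i}^{*,\tau}(s)=0$.

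Finally, identify the returns with the policy's. By the induction hypothesis, Eq.~(\ref{RQ}) coincides with the recursive definition of $Q_{h,i}^{\pi^{*,\tau}}$, which only involves $V_{h+1,i}^{\pi^{*,\tau}}$. Hence $Q_{h,i}^{*,\tau}=Q_{h,i}^{\pi^{*,\tau}}$, and by Eq.~(\ref{RV}) also $V_{h,i}^{*,\tau}=V_{h,i}^{\pi^{*,\tau}}$. Boundedness follows from the earlier bound $N(H-h+1)$. Uniqueness of the whole profile $\pi^{*,\tau}$ follows because each stage's QRE is unique given the uniquely determined continuation values.

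The only delicate step is the transition decomposition check in the nonempty-controller case, specifically making sure no controller's component is double-counted or omitted. I expect this to be routine bookkeeping from the definition of $\gN_C$.
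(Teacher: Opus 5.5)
Your proposal is correct and essentially matches the paper's proof: a backward induction that uses Eq.~(\ref{eq:edge-transition-sum}) and the zero-sum rewards to make each stage game zero-sum and networked separable, then invokes Proposition~\ref{prop:qre-well-defined} for the unique QRE. The only difference is cosmetic: you fold the identification $V_{h+1,i}^{*,\tau}=V_{h+1,i}^{\pi^{*,\tau}}$ into the same induction (legitimate, since it depends only on stages after $h$), whereas the paper first builds the whole profile from provisional edge utilities and then runs a second induction to identify the constructed values with the ordinary returns.
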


\begin{proof}
    We first establish well-definedness and zero-sum structure simultaneously by backward induction, before assuming a policy at the current stage. At stage $H+1$, the uniquely specified terminal values $V_{H+1,i}^{*,\tau}\equiv0$ have zero sum. Suppose that the policies at stages $h+1,\ldots,H$ and the values $V_{h+1,i}^{*,\tau}$ have already been uniquely constructed, with $\sum_iV_{h+1,i}^{*,\tau}\equiv0$.

    Equation~(\ref{RQ}) uniquely defines $Q_{h,i}^{*,\tau}$. For every $(s,\va)$,
    \[
    \begin{aligned}
        \sum_{i=1}^{N}Q_{h,i}^{*,\tau}(s,\va)
        &=\sum_{i=1}^{N}r_{h,i}(s,\va)
        +\sP_h(\cdot|s,\va)
        \sum_{i=1}^{N}V_{h+1,i}^{*,\tau}(\cdot)\\
        &=0.
    \end{aligned}
    \]
    Independently of any choice of the stage-$h$ policy, define the provisional edge utilities
    \[
        \widetilde Q_{h,i,j}(s,a_i,a_j)
        :=r_{h,i,j}(s,a_i,a_j)
        +\sK_{h,i,j}(\cdot|s,a_i,a_j)V_{h+1,i}^{*,\tau}(\cdot).
    \]
    Equation~(\ref{eq:edge-transition-sum}) gives
    \[
        Q_{h,i}^{*,\tau}(s,\va)
        =\sum_{j\in\gN_i}\widetilde Q_{h,i,j}(s,a_i,a_j).
    \]
    Hence $Q_h^{*,\tau}(s)$ is a finite zero-sum networked separable normal-form game. Proposition~\ref{prop:qre-well-defined} now applies and supplies its unique full-support QRE $\pi_h^{*,\tau}(s)$. Equation~(\ref{RV}) uniquely defines $V_{h,i}^{*,\tau}(s)$, and taking expectation under the common product distribution $\pi_h^{*,\tau}(s)$ gives
    \[
        \sum_iV_{h,i}^{*,\tau}(s)
        =\sum_iQ_{h,i}^{*,\tau}(s,\pi_h^{*,\tau}(s))=0.
    \]
    This closes the induction and constructs the entire policy $\pi^{*,\tau}$ without circularly assuming its existence.

    With this policy now fixed, a second backward induction identifies the constructed values with its ordinary returns. At $H+1$, both terminal values are zero. If $V_{h+1,i}^{*,\tau}=V_{h+1,i}^{\pi^{*,\tau}}$, then
    \[
    \begin{aligned}
        Q_{h,i}^{*,\tau}(s,\va)
        &=r_{h,i}(s,\va)+\sP_h(\cdot|s,\va)
        V_{h+1,i}^{\pi^{*,\tau}}(\cdot)\\
        &=Q_{h,i}^{\pi^{*,\tau}}(s,\va).
    \end{aligned}
    \]
    Taking expectation under $\pi_h^{*,\tau}(s)$ proves $V_{h,i}^{*,\tau}=V_{h,i}^{\pi^{*,\tau}}$. The same equality of continuation values shows that each provisional edge utility is $Q_{h,i,j}^{\pi^{*,\tau}}$, exactly the edge target defined in Eq.~(\ref{eq:edge-qre-target}). This establishes all claimed identities.
\end{proof}

\section{Proof of Lemma~\ref{decomp}}
\begin{proof}
    Fix $t\geq0$. For every $h\in[H]$, $i\in[N]$, and $s\in\gS$, define the one-step deviation advantage
    \[
        a_{h,i}^t(s)
        :=
        \max_{\nu_i\in\Delta(\gA_i)}
        \left\langle
            \nu_i-\bar\pi_{h,i}^t(\cdot|s),
            Q_{h,i}^{\bar\pi^t}
            (s,\cdot,\bar\pi_{h,-i}^t(s))
        \right\rangle.
    \]
    This quantity is nonnegative because $\nu_i=\bar\pi_{h,i}^t(\cdot|s)$ is feasible.

    Fix a player $i$, an initial state $s_1$, and a Markov deviation policy $\mu_i$. Applying the Bellman equations to
    \[
        W_{h,i}(s)
        :=V_{h,i}^{\mu_i,\bar\pi_{-i}^t}(s)
        -V_{h,i}^{\bar\pi^t}(s)
    \]
    gives
    \[
    \begin{aligned}
        W_{h,i}(s)
        &=
        \left\langle
            \mu_{h,i}(\cdot|s)-\bar\pi_{h,i}^t(\cdot|s),
            Q_{h,i}^{\bar\pi^t}
            (s,\cdot,\bar\pi_{h,-i}^t(s))
        \right\rangle
        \\
        &\quad+
        \sP_h(\cdot|s,\mu_{h,i}(s),\bar\pi_{h,-i}^t(s))
        W_{h+1,i}(\cdot).
    \end{aligned}
    \]
    Since $W_{H+1,i}\equiv0$, iterating this identity yields
    \[
    \begin{aligned}
        &V_{1,i}^{\mu_i,\bar\pi_{-i}^t}(s_1)
        -V_{1,i}^{\bar\pi^t}(s_1)
        \\
        &=
        \sum_{h=1}^{H}
        \mathbb E_{\mu_i,\bar\pi_{-i}^t}
        \left[
            \left\langle
                \mu_{h,i}(\cdot|s_h)-\bar\pi_{h,i}^t(\cdot|s_h),
                Q_{h,i}^{\bar\pi^t}
                (s_h,\cdot,\bar\pi_{h,-i}^t(s_h))
            \right\rangle
            \,\middle|\,s_1
        \right]
        \\
        &\leq
        \sum_{h=1}^{H}\max_{s\in\gS}a_{h,i}^t(s).
    \end{aligned}
    \]
    The fixed policies of the other players induce a finite-horizon MDP for player $i$, so a Markov best response attains the supremum over deviation policies. Taking that supremum and then the maxima over $i$ and $s_1$ gives
    \begin{equation}
        \label{eq:ne-gap-stage-max}
        \textnormal{NE-Gap}(\bar\pi^t)
        \leq
        \max_{i\in[N]}\sum_{h=1}^{H}\max_{s\in\gS}a_{h,i}^t(s)
        \leq
        \sum_{h=1}^{H}\max_{i\in[N]}\max_{s\in\gS}a_{h,i}^t(s).
    \end{equation}

    For each $h,i$, set
    \[
        E_{h,i}^t
        :=\left\|Q_{h,i}^{\bar\pi^t}-Q_{h,i}^{*,\tau}\right\|_\infty.
    \]
    For any $s\in\gS$ and $\nu_i\in\Delta(\gA_i)$, the bound
    $\|\nu_i-\bar\pi_{h,i}^t(\cdot|s)\|_1\leq2$ implies
    \[
    \begin{aligned}
        &\left\langle
            \nu_i-\bar\pi_{h,i}^t(\cdot|s),
            Q_{h,i}^{\bar\pi^t}(s,\cdot,\bar\pi_{h,-i}^t(s))
        \right\rangle
        \\
        &\leq
        \left\langle
            \nu_i-\bar\pi_{h,i}^t(\cdot|s),
            Q_{h,i}^{*,\tau}(s,\cdot,\bar\pi_{h,-i}^t(s))
        \right\rangle
        +2E_{h,i}^t
        \\
        &\leq
        \textnormal{QRE-Gap}_h(\bar\pi_h^t)
        +\tau\gH(\bar\pi_{h,i}^t(\cdot|s))
        -\tau\gH(\nu_i)
        +2E_{h,i}^t
        \\
        &\leq
        \textnormal{QRE-Gap}_h(\bar\pi_h^t)
        +\tau\log A+2E_{h,i}^t.
    \end{aligned}
    \]
    Here the second inequality is the definition of the stage-wise QRE gap after adding and subtracting the entropy terms, and the last inequality uses
    $0\leq\gH(\nu_i)$ and $\gH(\bar\pi_{h,i}^t(\cdot|s))\leq\log A$.
    Maximizing over the deviation and substituting into \eqref{eq:ne-gap-stage-max} yields
    \[
        \textnormal{NE-Gap}(\bar\pi^t)
        \leq
        \sum_{h=1}^{H}\textnormal{QRE-Gap}_h(\bar\pi_h^t)
        +2\sum_{h=1}^{H}\max_{i\in[N]}E_{h,i}^t
        +\tau H\log A.
    \]

    By Lemma~\ref{lem:qre-induced-structure},
    $Q^{*,\tau}=Q^{\pi^{*,\tau}}$. Lemma~\ref{lem:q-policy-perturb}, applied to $\mu=\bar\pi^t$ and $\pi=\pi^{*,\tau}$, therefore gives uniformly in $i$,
    \[
        E_{h,i}^t
        \leq NH\sqrt{2N}
        \sum_{\ell=h+1}^{H}\bar X_\ell^t.
    \]
    Combining the last two displays proves
    \[
    \begin{aligned}
        \textnormal{NE-Gap}(\bar\pi^t)
        \leq{}&
        \sum_{h=1}^{H}\textnormal{QRE-Gap}_h(\bar\pi_h^t)
        \\
        &+2NH\sqrt{2N}
        \sum_{h=1}^{H}\sum_{\ell=h+1}^{H}\bar X_\ell^t
        +\tau H\log A.
    \end{aligned}
    \]
    The double sum is empty when $H=1$.
\end{proof}

\section{Proof of the tracking bounds}
\label{app:tracking-bounds}

The purpose of this section is to prove Lemma~\ref{keygoal}. The argument has two stages. First, we establish two elementary recursions. Lemma~\ref{iter1} shows that the KL-type policy-divergence variables contract under the entropy-regularized OMWU update, up to an additive term depending on the current $Q$-function error. Lemma~\ref{iter2} shows that the $Q$-function error at stage $h$ is controlled by the next-stage $Q$-function error and the next-stage policy perturbation. Second, we solve these two coupled recursions by backward induction over the horizon, yielding the exponential bounds stated in Lemma~\ref{keygoal}.

We first define the key variables to estimate:
\[
    \bar L_h^t:=\max_{s\in\gS}
    \KL(\pi_h^{*,\tau}(s)||\bar\pi_h^t(s)),
\]
\[
    L_h^t:=\max_{s\in\gS}
    \left\{
        \KL(\pi_h^{*,\tau}(s)||\pi_h^t(s))
        +(1-4\eta N^2H)\KL(\pi_h^t(s)||\bar\pi_h^t(s))
    \right\},
\]
\[
    \bar X_h^t:=\sqrt{\bar L_h^t},\qquad X_h^t:=\sqrt{L_h^t},
\]
\[
    \delta_h^t:=
    \max_{s\in\gS}\max_{i\in[N]}\max_{\va\in\gA}
    \left|Q_{h,i}^t(s,\va)-Q_{h,i}^{*,\tau}(s,\va)\right|.
\]

Then we introduce some auxiliary notations.
\[
    \beta:=1-\eta\tau,
    \qquad
    \alpha:=\sqrt{\beta},
\]
\[
    \Gamma_\eta:=
    256N^2\eta\left(1+1/\tau\right),
    \qquad
    \omega_\eta:=\sqrt{\Gamma_\eta}.
\]

\begin{lemma}[Policy-divergence recursion]
    \label{iter1}
    Under the conditions of Theorem~\ref{mainthm}, for every $h\in[H]$ and $t\geq0$,
    \[
        L_h^{t+1}
        \leq
        \beta L_h^t
        +
        \Gamma_\eta
        \left(
            \delta_h^t+2\delta_h^{t+1}
        \right)^2,
        \qquad
        \bar L_h^{t+1}
        \leq
        4L_h^t
        +
        \Gamma_\eta
        \left(
            \delta_h^t+2\delta_h^{t+1}
        \right)^2.
    \]
    Moreover, for every $s\in\gS$,
    \[
        \frac{\eta\tau}{2}
        \KL(\bar\pi_h^{t+1}(s)||\pi_h^{*,\tau}(s))
        \leq
        \beta L_h^t
        +
        \Gamma_\eta
        \left(
            \delta_h^t+2\delta_h^{t+1}
        \right)^2.
    \]
    Consequently,
    \[
        X_h^{t+1}
        \leq
        \alpha X_h^t
        +
        \omega_\eta
        \left(
            \delta_h^t+2\delta_h^{t+1}
        \right),
        \qquad
        \bar X_h^{t+1}
        \leq
        2X_h^t
        +
        \omega_\eta
        \left(
            \delta_h^t+2\delta_h^{t+1}
        \right).
    \]
\end{lemma}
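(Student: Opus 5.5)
The plan is to fix a stage $h$ and a state $s$ and run the single-state entropy-regularized OMWU analysis of \citet{cen2022faster}. The multi-player networked structure enters through the zero-sum networked separable identity of Lemma~\ref{sumzero2}, the same identity used in the uniqueness part of Proposition~\ref{prop:qre-well-defined}. All constants are then maximized over $s$.

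\textbf{Step 1: log-linear identities.} I would write the updates as
\[
    \log\pi_{h,i}^{t+1}=\beta\log\pi_{h,i}^t+\eta Q_{h,i}^{t+1}(s,\cdot,\bar\pi_{h,-i}^{t+1})+c,
    \qquad
    \log\bar\pi_{h,i}^{t+1}=\beta\log\pi_{h,i}^t+\eta Q_{h,i}^{t}(s,\cdot,\bar\pi_{h,-i}^{t})+c',
\]
and the QRE condition as $\tau\log\pi_{h,i}^{*,\tau}=Q_{h,i}^{*,\tau}(s,\cdot,\pi_{h,-i}^{*,\tau})+c''$.

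Taking inner products of $\log\pi_{h,i}^{t+1}-\beta\log\pi_{h,i}^t-\eta\tau\log\pi_{h,i}^{*,\tau}$ with $\pi_{h,i}^{*,\tau}-\bar\pi_{h,i}^{t+1}$, and using the three-point identity for KL, gives the following per player:
\[
    \KL(\pi^*_i\|\pi^{t+1}_i)
    =\beta\KL(\pi^*_i\|\pi^t_i)
    -\beta\KL(\bar\pi^{t+1}_i\|\pi^t_i)
    -\KL(\pi^{t+1}_i\|\bar\pi^{t+1}_i)
    -\eta\tau\KL(\bar\pi^{t+1}_i\|\pi^*_i)
    +\eta\,\mathrm{(gap)}_i+\eta\,\mathrm{(stab)}_i.
\]
Here the two remaining terms are
\[
    \mathrm{(gap)}_i=\langle\bar\pi^{t+1}_i-\pi^*_i,\,Q^{t+1}_{h,i}(\cdot,\bar\pi^{t+1}_{-i})-\tau\log\pi^*_i\rangle,
    \qquad
    \mathrm{(stab)}_i=\langle\pi^{t+1}_i-\bar\pi^{t+1}_i,\,Q^{t+1}_{h,i}(\cdot,\bar\pi^{t+1}_{-i})-Q^t_{h,i}(\cdot,\bar\pi^t_{-i})\rangle.
\]

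\textbf{Step 2: the gap term.} Replace $Q^{t+1}_{h,i}$ by $Q^{*,\tau}_{h,i}$, which costs at most $2\delta_h^{t+1}\|\bar\pi^{t+1}_i-\pi^*_i\|_1$. After this replacement, $\tau\log\pi^*_i$ cancels against $Q^{*,\tau}_{h,i}(\cdot,\pi^*_{-i})$ up to a constant, and a constant vanishes against a difference of distributions. The remaining sum is
\[
    \sum_i\langle\bar\pi_i-\pi^*_i,\,Q^*_i(\cdot,\bar\pi_{-i})-Q^*_i(\cdot,\pi^*_{-i})\rangle .
\]
It equals zero by the global zero-sum property together with Lemma~\ref{sumzero2}; this works because the stage game $Q^{*,\tau}_h(s)$ is zero-sum networked separable by Lemma~\ref{lem:qre-induced-structure}. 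The error $2\eta\delta^{t+1}\sum_i\|\bar\pi_i-\pi^*_i\|_1$ is handled with Pinsker and Young. Half of the term $\eta\tau\KL(\bar\pi^{t+1}\|\pi^*)$ absorbs it, leaving $O(N\eta(\delta^{t+1})^2/\tau)$. The other half, $\frac{\eta\tau}{2}\KL(\bar\pi^{t+1}\|\pi^*)$, is kept on the left-hand side, which yields the ``Moreover'' claim.

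\textbf{Step 3: the stability term (main obstacle).} I bound
\[
    \|Q^{t+1}_{h,i}(\cdot,\bar\pi^{t+1}_{-i})-Q^t_{h,i}(\cdot,\bar\pi^t_{-i})\|_\infty
    \le\delta^t+\delta^{t+1}+\sum_{j\in\gN_i}\|Q^{*,\tau}_{h,i,j}\|_\infty\|\bar\pi^{t+1}_j-\bar\pi^t_j\|_1 .
\]
The last sum is at most $NH\sum_j\|\bar\pi^{t+1}_j-\bar\pi^t_j\|_1$ by edge separability and Eq.~(\ref{eq:edge-q-bound}). I then split
\[
    \|\bar\pi^{t+1}_j-\bar\pi^t_j\|_1
    \le\|\bar\pi^{t+1}_j-\pi^t_j\|_1+\|\pi^t_j-\bar\pi^t_j\|_1 ,
\]
and apply Pinsker and Young to $\|\pi^{t+1}_i-\bar\pi^{t+1}_i\|_1\cdot(\cdots)$. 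Summing over $i$ produces terms of size $\eta N^2H$ times $\KL(\pi^{t+1}\|\bar\pi^{t+1})$, $\KL(\bar\pi^{t+1}\|\pi^t)$ and $\KL(\pi^t\|\bar\pi^t)$.

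With $\eta\le1/(16N^2H)$, the first two terms are absorbed by the negative KL terms of Step 1. The third is charged to the $(1-4\eta N^2H)\KL(\pi^t\|\bar\pi^t)$ part of $L_h^t$, and $\beta\ge1/2$ makes the coefficients match. The squared $\delta$'s collect into $\Gamma_\eta(\delta^t+2\delta^{t+1})^2$. Getting these coefficients to close exactly, so that the carried term reappears as $(1-4\eta N^2H)\KL(\pi^{t+1}\|\bar\pi^{t+1})$ with factor $\le\beta$, is the delicate bookkeeping step. Taking $\max_s$ then gives the $L$-recursion.

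\textbf{Step 4: the $\bar L$ bound and square roots.} I would write
\[
    \KL(\pi^*\|\bar\pi^{t+1})=\KL(\pi^*\|\pi^t)+\langle\pi^*,\log\pi^t-\log\bar\pi^{t+1}\rangle ,
\]
and expand the log-ratio via the $\bar\pi^{t+1}$ update. It equals $\eta\tau\log\pi^t$ minus $\eta Q^t_{h,i}$, up to normalization, which compares with $\eta\tau\log\pi^*$ through the QRE equation. Repeating the three-point argument with $\pi^t$ in place of $\pi^{t+1}$ and using crude constants should give $\bar L^{t+1}\le4L^t+\Gamma_\eta(\cdots)^2$. The recursions for $X_h^{t+1}$ and $\bar X_h^{t+1}$ then follow from $\sqrt{a+b}\le\sqrt a+\sqrt b$, with $\sqrt\beta=\alpha$ and $\sqrt4=2$.
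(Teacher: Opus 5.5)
Your Steps 1--3 follow the paper's proof of the $L_h^{t+1}$ recursion and the reverse-KL bound essentially line by line. Your $\eta\,(\mathrm{gap})_i$ and $\eta\,(\mathrm{stab})_i$ are the paper's $\mathcal E_1$ and optimistic-correction term. The cross term is killed by the same application of Lemma~\ref{sumzero2}, $\bar\pi^{t+1}_{h,j}-\bar\pi^t_{h,j}$ is split through $\pi^t_{h,j}$ in the same way, and the coefficients close via the same inequality $\eta N^2H\le\beta(1-4\eta N^2H)$. Two small remarks: the replacement cost is really $\delta_h^{t+1}\|\cdot\|_1$ rather than $2\delta_h^{t+1}\|\cdot\|_1$, which is harmless; and keeping $\frac{\eta\tau}{2}\KL(\bar\pi^{t+1}\|\pi^*)$ on the left simply merges the paper's Steps 4 and 5.

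The genuinely different piece is your Step 4. The paper passes through $\pi_h^{t+1}$. It writes $\KL(\pi^*\|\bar\pi^{t+1})=\KL(\pi^*\|\pi^{t+1})-\KL(\bar\pi^{t+1}\|\pi^{t+1})-\langle\pi^*-\bar\pi^{t+1},\log\bar\pi^{t+1}-\log\pi^{t+1}\rangle$, bounds the inner product via Eq.~(\ref{optupdate}), and then substitutes the already-established inequality (\ref{tmpeq}). You instead treat $\bar\pi_h^{t+1}$ as a regularized MWU step from $\pi_h^t$ with the stale payoff $Q^t_{h,i}(s,\cdot,\bar\pi^t_{h,-i})$, and run the three-point identity on that step directly. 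This is self-contained and even yields $\bar L_h^{t+1}\le 2\beta L_h^t+O(\eta)(\delta_h^t)^2$, but as written it is under-specified in two places.

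First, your display pairs against $\pi^*$ alone, so the normalizers of $\bar\pi^{t+1}$ do not cancel. Pair $\log\bar\pi^{t+1}-\beta\log\pi^t-\eta\tau\log\pi^*$ with $\bar\pi^{t+1}-\pi^*$ instead, which gives
\[
\KL(\pi^*\|\bar\pi^{t+1})=\beta\KL(\pi^*\|\pi^t)-\beta\KL(\bar\pi^{t+1}\|\pi^t)-\eta\tau\KL(\bar\pi^{t+1}\|\pi^*)+\eta\sum_{i}\langle\bar\pi^{t+1}_i-\pi^*_i,\,Q^t_{h,i}(s,\cdot,\bar\pi^t_{h,-i})-Q^{*,\tau}_{h,i}(s,\cdot,\pi^*_{-i})\rangle .
\]
Second, the analogue of your gap term no longer vanishes, because the opponents now sit at $\bar\pi^t_{h,-i}$ rather than $\bar\pi^{t+1}_{h,-i}$. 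After applying Lemma~\ref{sumzero2} against $\bar\pi^{t+1}$ you are left with $\eta\sum_i\sum_{j\in\gN_i}(\bar\pi^{t+1}_i-\pi^*_i)^\top Q^{*,\tau}_{h,i,j}(s,\cdot,\cdot)(\bar\pi^t_j-\bar\pi^{t+1}_j)$. Treat this as in your Step 3. The resulting $\eta N^2H\|\bar\pi^{t+1}_i-\pi^*_i\|_1^2$ must be absorbed into the left-hand side $\KL(\pi^*\|\bar\pi^{t+1})$, whose coefficient stays at least $1-3\eta N^2H\ge 1/2$. It cannot go into $\eta\tau\KL(\bar\pi^{t+1}\|\pi^*)$, since that would require $\tau\gtrsim N^2H$.

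The paper's route avoids this second cross-term analysis by reusing (\ref{tmpeq}). The price is carrying both $\delta_h^t$ and $\delta_h^{t+1}$, and ending with the cruder factor $4$.
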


\begin{lemma}[$Q$-tracking recursion]
    \label{iter2}
    Under the conditions of Theorem~\ref{mainthm}, for all $h\in[H]$ and $t\geq0$,
    \[
        \delta_h^t
        \leq
        \eta\tau
        \sum_{k=0}^{t-1}
        \beta^k
        \left(
            \delta_{h+1}^{t-k}
            +
            NH\sqrt{2N}\,\bar X_{h+1}^{t-k}
        \right)
        +
        NH\beta^t,
    \]
    where $\delta_{H+1}^t=0$ and $\bar X_{H+1}^t=0$ by convention.
\end{lemma}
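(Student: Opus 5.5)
The plan is to unroll the edge-wise update \eqref{rule1} at the level of individual edges. Then sum over $j\in\gN_i$, so that the edge transition measures reassemble into $\sP_h$ through Eq.~(\ref{eq:edge-transition-sum}). Finally, split the next-stage value error into a $Q$-tracking part and a policy-perturbation part. The case $h=H$ is covered automatically, since $V_{H+1,i}^t\equiv V_{H+1,i}^{*,\tau}\equiv0$.

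\textbf{Step 1: unroll the edge errors.} For each edge $(i,j)\in\gE_Q$, set $e_{h,i,j}^t:=Q_{h,i,j}^t-Q_{h,i,j}^{*,\tau}$. By Eq.~(\ref{eq:edge-qre-target}) and the definition of $T_{h,i,j}^{t+1}$,
\[
T_{h,i,j}^{t+1}-Q_{h,i,j}^{*,\tau}=\sK_{h,i,j}(\cdot|s,a_i,a_j)\bigl(V_{h+1,i}^{t+1}-V_{h+1,i}^{*,\tau}\bigr)(\cdot).
\]
Hence \eqref{rule1} gives $e_{h,i,j}^{t+1}=\beta e_{h,i,j}^t+\eta\tau\,\sK_{h,i,j}(V_{h+1,i}^{t+1}-V_{h+1,i}^{*,\tau})$. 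Unrolling,
\[
e_{h,i,j}^t=\beta^t e_{h,i,j}^0+\eta\tau\sum_{k=0}^{t-1}\beta^k\,\sK_{h,i,j}\bigl(V_{h+1,i}^{t-k}-V_{h+1,i}^{*,\tau}\bigr).
\]
Summing over $j\in\gN_i$ and using Eq.~(\ref{eq:edge-transition-sum}) yields
\[
Q_{h,i}^t(s,\va)-Q_{h,i}^{*,\tau}(s,\va)=\beta^t\bigl(r_{h,i}-Q_{h,i}^{*,\tau}\bigr)(s,\va)+\eta\tau\sum_{k=0}^{t-1}\beta^k\,\sP_h(\cdot|s,\va)\bigl(V_{h+1,i}^{t-k}-V_{h+1,i}^{*,\tau}\bigr).
\]

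\textbf{Step 2: bound the two terms.} The initial term equals $-\beta^t\,\sP_h V_{h+1,i}^{*,\tau}$, which is at most $N(H-h)\beta^t\le NH\beta^t$ in absolute value. Because $\sP_h(\cdot|s,\va)$ is a probability measure, each summand is bounded by $\|V_{h+1,i}^{t-k}-V_{h+1,i}^{*,\tau}\|_\infty$. So it suffices to show
\[
\|V_{h+1,i}^{t'}-V_{h+1,i}^{*,\tau}\|_\infty\le\delta_{h+1}^{t'}+NH\sqrt{2N}\,\bar X_{h+1}^{t'}\qquad\text{for } t'\ge1.
\]

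\textbf{Step 3: decompose the value error.} In the backward loop, stage $h+1$ is updated before stage $h$ within the same iteration. Therefore $V_{h+1,i}^{t'}$ is formed from $Q_{h+1,i,j}^{t'}$ and the current $\bar\pi_{h+1}^{t'}$. Since an edge function's bilinear form under $\bar\pi_{h+1,i}^{t'}\otimes\bar\pi_{h+1,j}^{t'}$ equals its expectation under the full product, \eqref{rule3} gives $V_{h+1,i}^{t'}(s)=Q_{h+1,i}^{t'}(s,\bar\pi_{h+1}^{t'}(s))$. Likewise $V_{h+1,i}^{*,\tau}(s)=Q_{h+1,i}^{*,\tau}(s,\pi_{h+1}^{*,\tau}(s))$. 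Add and subtract $Q_{h+1,i}^{*,\tau}(s,\bar\pi_{h+1}^{t'}(s))$:
\begin{itemize}[leftmargin=*]
\item The first difference is at most $\delta_{h+1}^{t'}$.
\item The second is at most $\|Q_{h+1,i}^{*,\tau}\|_\infty\,\|\bar\pi_{h+1}^{t'}(s)-\pi_{h+1}^{*,\tau}(s)\|_1\le NH\sum_{i'}\|\bar\pi_{h+1,i'}^{t'}-\pi_{h+1,i'}^{*,\tau}\|_1$.
\end{itemize}
For the second term, apply Pinsker's inequality per player, then Cauchy--Schwarz, then additivity of KL over product measures:
\[
\sum_{i'}\sqrt{2\KL_{i'}}\le\sqrt{2N\sum_{i'}\KL_{i'}}=\sqrt{2N\,\KL(\pi_{h+1}^{*,\tau}(s)\|\bar\pi_{h+1}^{t'}(s))}\le\sqrt{2N}\,\bar X_{h+1}^{t'}.
\]
Combining Steps 1--3 and taking the maximum over $s,i,\va$ gives the claimed bound on $\delta_h^t$.

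The only delicate points are bookkeeping. One is checking that the backward ordering makes $V_{h+1}^{t-k}$ use the same-iteration $Q_{h+1}^{t-k}$ and $\bar\pi_{h+1}^{t-k}$, so the indices match the statement's $\delta_{h+1}^{t-k}$ and $\bar X_{h+1}^{t-k}$. The other is the Pinsker/Cauchy--Schwarz step, which is what produces the factor $\sqrt{2N}$. I expect the remaining steps to be routine.
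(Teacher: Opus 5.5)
Your proposal is correct and follows essentially the same argument as the paper. You unroll the $\beta$-weighted $Q$-update linearly and reassemble $\sK_{h,i,j}$ into $\sP_h$, bound the initial term by $\beta^t\|V_{h+1,i}^{*,\tau}\|_\infty$, and add and subtract $Q_{h+1,i}^{*,\tau}(s,\bar\pi_{h+1}^{t'}(s))$ before applying Pinsker and Cauchy--Schwarz, all as the paper does. The one cosmetic difference is that you unroll the per-edge error recursion directly, while the paper first aggregates over $j\in\gN_i$, proves a closed form for $Q_{h,i}^t$ by induction, and rewrites $Q_{h,i}^{*,\tau}$ via $\eta\tau\sum_{k<t}\beta^k=1-\beta^t$ before subtracting.
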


In the rest of this section, we will first show how to obtain Lemma~\ref{keygoal} using these two lemmas, and then provide the proofs of these two lemmas.

\subsection{Proof of Lemma~\ref{keygoal}}

We state and prove a stronger version of Lemma~\ref{keygoal} here.
\begin{lemma}[Tracking bounds (stronger version)]
    \label{keygoal2}
    Under the conditions of Theorem~\ref{mainthm}, there exists a constant $C= O(N^3H\log A/\tau)$ and $C_h \leq C^{H-h}$ such that for every $h\in[H]$ and $t\geq0$,
    \[
        \delta_h^t+\sqrt{L_h^t}+
        \sqrt{\bar L_h^t}
        \leq
        C_h(t+H-h+2)^{H-h}(1-\eta\tau)^{t/2}.
    \]
\end{lemma}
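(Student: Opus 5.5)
We argue by backward induction on $h$, from $h=H$ down to $h=1$, using only the coupled recursions of Lemmas~\ref{iter1} and~\ref{iter2}. We maintain the hypothesis
\[
    \delta_{h+1}^t+X_{h+1}^t+\bar X_{h+1}^t\leq C_{h+1}(t+H-h+1)^{H-h-1}\alpha^t
    \quad\text{for all } t\geq0 ,
\]
where $\alpha=\sqrt\beta$ and $C_{H+1}=0$, consistent with the conventions $\delta_{H+1}=\bar X_{H+1}=0$.

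\textbf{Step 1: the $Q$-error at stage $h$.} Substitute the hypothesis into Lemma~\ref{iter2}. Each summand is at most $(1+NH\sqrt{2N})C_{h+1}(t-k+H-h+1)^{H-h-1}\alpha^{t-k}$, weighted by $\eta\tau\beta^k$. Since $\beta^k\alpha^{t-k}\le\alpha^{t}$ and there are at most $t$ terms, the sum is at most $\eta\tau\, t(1+NH\sqrt{2N})C_{h+1}(t+H-h+1)^{H-h-1}\alpha^t$. We deliberately give up the geometric gain here (crudely bounding $\beta^k\alpha^{-k}\le1$), because it is this step that produces one extra polynomial power per stage. Adding the initial term $NH\beta^t\le NH\alpha^t$ gives
\[
    \delta_h^t\le D_h (t+H-h+2)^{H-h}\alpha^t,
    \qquad
    D_h:=NH+\eta\tau(1+NH\sqrt{2N})C_{h+1}.
\]
For $h=H$ this reduces to $\delta_H^t\le NH\beta^t$.

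\textbf{Step 2: the policy divergences at stage $h$.} Unroll the first recursion of Lemma~\ref{iter1}, $X_h^{t+1}\le\alpha X_h^t+\omega_\eta(\delta_h^t+2\delta_h^{t+1})$, to obtain
\[
    X_h^t\le\alpha^tX_h^0+\omega_\eta\sum_{s=0}^{t-1}\alpha^{t-1-s}\bigl(\delta_h^s+2\delta_h^{s+1}\bigr).
\]
Insert the bound from Step 1. Each summand is at most $3\omega_\eta D_h(t+H-h+2)^{H-h}\alpha^{t}/\alpha$, so the sum contributes a factor $t$. This is a second gain of one power of $t$, so we must absorb it. Here the constraint $\eta\le 1/(16N^2H)$ together with $\eta\le 1/(2\tau)$ helps: $\omega_\eta$ is small, and $1/\alpha\le\sqrt2$. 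The cleanest fix is to present the Step~1 bound with exponent $H-h$ and let Step~2 raise the exponent to $H-h+1$. Alternatively, if we want to keep exponent $H-h$ as stated, we absorb $t\alpha^{t}$ by trading against the slack in the rate. I expect this exponent bookkeeping to be the main obstacle. My first attempt would be to verify that the statement's $(t+H-h+2)^{H-h}$ accommodates both gains by re-indexing the induction. If that fails, I would allow $C_h$ to absorb a factor $2/(\eta\tau)$ by using $\sum_s\beta^{(t-1-s)/2}\alpha^{s}\le t\alpha^{t-1}$ only once per stage, rather than twice.

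\textbf{Step 3: initial value and the bar variable.} The initial divergence satisfies $X_h^0\le\sqrt{2\max_s \KL(\pi^{*,\tau}_h(s)\|\mathrm{unif})}\le\sqrt{2N\log A}$, since the KL divergence to the uniform product distribution is at most $N\log A$. For the bar variable, use $\bar X_h^{t+1}\le 2X_h^t+\omega_\eta(\delta_h^t+2\delta_h^{t+1})$, together with $\alpha^{-1}\le\sqrt2$ and $\bar X_h^0\le\sqrt{N\log A}$. Summing the three bounds gives
\[
    C_h\lesssim \frac{1}{\eta\tau}\bigl(1+NH\sqrt{2N}\bigr)C_{h+1}+NH+\sqrt{N\log A}.
\]
With $\eta=\Theta(1/(N^2H))$, the factor $\eta\tau(1+NH\sqrt{2N})/(\eta\tau)$ and the remaining quantities combine into a single multiplier $C=O(N^3H\log A/\tau)$, which yields $C_h\le C^{H-h}$, matching the claim.
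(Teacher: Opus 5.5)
\textbf{There is a genuine gap in the exponent bookkeeping, and it comes from your Step~1, not Step~2.} You deliberately discard the geometric gain in the $Q$-tracking sum, but that gain is exactly what the exponent $H-h$ requires. Because $\beta=\alpha^2$, you have $\beta^k\alpha^{t-k}=\alpha^{t+k}$. After bounding $(t-k+H-h+1)^{H-h-1}\le(t+H-h+1)^{H-h-1}$, this gives
\[
\eta\tau\sum_{k=0}^{t-1}\beta^k\alpha^{t-k}\le\eta\tau\,\alpha^t\sum_{k\ge 0}\alpha^k=\frac{\eta\tau}{1-\alpha}\,\alpha^t=(1+\alpha)\,\alpha^t\le 2\alpha^t .
\]
Hence $\delta_h^t\le D_h(t+H-h+1)^{H-h-1}\alpha^t$ with $D_h=NH+2(1+NH\sqrt{2N})C_{h+1}$. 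This keeps the same exponent as at stage $h+1$ and introduces no $1/(\eta\tau)$.

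The one extra power of $t$ per stage must then come from Step~2, where it is unavoidable. The $X$-recursion contracts at rate $\alpha$, which is exactly the decay rate of its forcing $\delta_h^m+2\delta_h^{m+1}$. So the convolution $\sum_{m<t}\alpha^{t-1-m}\alpha^m(\cdots)^{H-h-1}$ yields exactly one factor $t$, bringing the exponent to $H-h$.

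As written, your scheme gains two powers of $t$ per stage, so it proves at best an exponent of about $2(H-h)$. Neither proposed repair works. There is no slack in the rate to trade, because the target rate $\alpha^t$ is the contraction factor of the $X$-recursion itself. Your fallback again uses the crude bound $\sum_s\alpha^{t-1-s}\alpha^s\le t\alpha^{t-1}$, so it cannot remove the second factor of $t$.

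Two related points. First, the base case: in fact $\delta_H^t=0$ exactly, since $Q_{H,i,j}^0=r_{H,i,j}$ and the stage-$H$ target is $T_{H,i,j}^{t+1}=r_{H,i,j}$ (because $V_{H+1}^{t+1}\equiv0$). With only your bound $\delta_H^t\le NH\beta^t$ and the crude summation of Step~2, you would already get $X_H^t\lesssim t\alpha^t$, which contradicts the exponent $0$ claimed at $h=H$.

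Second, the constant: your Step~3 recursion has a per-stage multiplier of order $(1+NH\sqrt{2N})/(\eta\tau)$. At $\eta=\Theta(1/(N^2H))$ this is $\Theta(N^{7/2}H^2/\tau)$, and it is unbounded as $\eta\to0$. So it does not give $C=O(N^3H\log A/\tau)$ under the stated step-size conditions, and your closing sentence conflates this with the $\eta\tau$-weighted $D_h$ from Step~1. With the geometric form of Step~1, the recursion becomes $C_h=O(\sqrt{N\log A})+(1+36\omega_\eta)\bigl(NH+2(1+NH\sqrt{2N})C_{h+1}\bigr)$. Its multiplier involves only $\omega_\eta$ and $NH\sqrt{N}$, which is what the stated bound on $C$ requires.
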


\begin{proof}
    We introduce the auxiliary notation.
    \[
    q_h:=H-h.
    \]
    We prove the bound by backward induction on $h$.

    \paragraph{Case of $h=H$.}
    For $h=H$, because value functions of stage $H+1$ is zero, one has that for every $t\geq 0$,
    \[
        Q_{H,i}^t=Q_{H,i}^{*,\tau}=r_{H,i}.
    \]
    Hence for all $t\geq 0$, we have
    \[
        \delta_H^t=0.
    \]

    By Lemma~\ref{iter1},
    \[
        X_H^{t+1}
        \leq
        \alpha X_H^t+
        \omega_\eta \left(\delta_H^t+2\delta_H^{t+1}\right)=\alpha X_H^t.
    \]
    Let \[K_0:=\max_{h\in[H]}\max\{L_h^0,\bar L_h^0\}\leq O(N\log A),\] we have $X_H^0\leq\sqrt{K_0}$. Using this initial bound and the recursive relation \(
        X_H^{t+1}
        \leq \alpha X_H^t
    \), we obtain for every $t\geq0$,
    \[
        X_H^t\leq X_H^0 \alpha^t \leq \sqrt{K_0}\alpha^t.
    \]
    Again by Lemma~\ref{iter1},
    \[
        \bar X_H^{t}
        \leq
        2X_H^{t-1}+
        \omega_\eta\left(\delta_H^{t-1}+2\delta_H^{t}\right)=2X_H^{t-1}.
    \]
    By the condition of Theorem~\ref{mainthm} such that $\eta\leq 1/(2\tau)$, we have that $\eta\tau\leq1/2$. Therefore, we have
    \[
        \alpha^{-1}
        =
        (1-\eta\tau)^{-1/2}
        \leq
        \sqrt2.
    \]
    Combining these two inequalities with the estimate on $X_H^t$, for $t\geq1$,
    \[
        \bar X_H^{t}
        \leq
        2X_H^{t-1}
        \leq
        2\sqrt{K_0}\alpha^{t-1}
        \leq
        2\sqrt{2K_0}\alpha^t.
    \]
    The case $t=0$ is covered by $\bar X_H^0\leq\sqrt{K_0}$. Hence, by taking a constant $C_H$ larger than $(1 +2\sqrt{2})\sqrt{K_0}$, e.g.,
    \[
        C_H:=5\sqrt{K_0},
    \]
    we have
    \[
        \delta_H^t+X_H^t+\bar X_H^t
        \leq
        C_H\alpha^t
        =
        C_H(t+q_H+2)^{q_H}\alpha^t,
    \]
    where the equality holds because $q_H=0$.
    Thus the lemma holds at $h=H$ and for all $t\geq 0$.

    \paragraph{Case of $h<H$.}
    Now fix $h\leq H-1$ and suppose that the lemma holds at level $h+1$.
    In this part, we temporarily set
    \[
        q:=q_h=H-h.
    \]
    Then
    \[
        q\geq1,
        \qquad
        q_{h+1}=q-1.
    \]

    \paragraph{(1) Bounding $\delta_h^t$.}
    For the current $h$, we consider $\delta_h^t$ for every $t\geq 0$ simultaneously.

    Plugging the induction hypothesis
    \[
        \delta_{h+1}^t+X_{h+1}^t+\bar X_{h+1}^t
        \leq
        C_{h+1}\alpha^t
        =
        C_{h+1}(t+q_{h+1}+2)^{q_{h+1}}\alpha^t
    \]
    into the result of Lemma~\ref{iter2} at level $h$, we have
    \[
    \begin{aligned}
        \delta_h^t
        &\leq
        \eta\tau
        \sum_{k=0}^{t-1}
        \beta^k
        \left(
            \delta_{h+1}^{t-k}
            +
            NH\sqrt{2N}\,\bar X_{h+1}^{t-k}
        \right)
        +
        NH\beta^t
        \\
        &\leq
        \eta\tau
        \sum_{k=0}^{t-1}
        \beta^k
            (1+
            NH\sqrt{2N})(\delta_{h+1}^{t-k}
            +\bar X_{h+1}^{t-k})
        +
        NH\beta^t
        \\
        &\leq
        \eta\tau
        (1+NH\sqrt{2N})C_{h+1}
        \sum_{k=0}^{t-1}
        \beta^k
        (t-k+(q-1)+2)^{q-1}
        \alpha^{t-k}
        +
        NH\beta^t 
        \\
        &\leq
        \eta\tau
        (1+NH\sqrt{2N})C_{h+1}
        \sum_{k=0}^{t-1}
        \beta^k
        (t+q+1)^{q-1}
        \alpha^{t-k}
        +
        NH\alpha^t
        \\
        &=
        (1+NH\sqrt{2N})C_{h+1}(t+q+2)^{q-1}\cdot \eta\tau
        \sum_{k=0}^{t-1}
        \beta^k
        \alpha^{t-k}
        +
        NH\alpha^t.
    \end{aligned}
    \]
    where for the last term we used \(\beta = \alpha^2 \leq \alpha\).

    Since
    \[
        \beta=\alpha^2,
    \]
    we have
    \[
        \beta^k\alpha^{t-k}
        =
        \alpha^t\alpha^k.
    \]
    Hence,
    \[
        \eta\tau
        \sum_{k=0}^{t-1}
        \beta^k
        \alpha^{t-k}=\eta\tau\alpha^t
        \sum_{k=0}^{t-1}\alpha^k
        \leq
        \eta\tau\alpha^t
        \sum_{k=0}^{\infty}\alpha^k
        =
        \frac{\eta\tau\alpha^t}{1-\alpha}
        =
        (1+\alpha)\alpha^t
        \leq
        2\alpha^t,
    \]
    where we used $1-\eta\tau = \alpha^2$ and $\alpha \leq 1$ in the last two steps.

    Therefore, by defining 
    \[
        D_h:=NH+2(1+NH\sqrt{2N})C_{h+1},
    \]
    we obtain
    \[
    \begin{aligned}
        \delta_h^t
        &\leq
        \left[
            NH
            +
            2(1+NH\sqrt{2N})C_{h+1}
        \right]
        (t+q+2)^{q-1}\alpha^t
        \\
        &=
        D_h(t+q_h+2)^{q_h-1}\alpha^t.
    \end{aligned}
    \]
    This proves the bound for the first term $\delta_h^t$ of every $t$ at level $h$.

    \paragraph{(2) Bounding $X_h^t$.}
    We next prove the bound for $X_h^t$. Expanding the recursion
    \[
        X_h^{t+1}
        \leq
        \alpha X_h^t
        +
        \omega_\eta
        \left(
            \delta_h^t+2\delta_h^{t+1}
        \right)
    \]
    in Lemma~\ref{iter1} gives
    \begin{equation}
        \label{spanx}
        X_h^t
        \leq
        \alpha^tX_h^0
        +
        \omega_\eta
        \sum_{m=0}^{t-1}
        \alpha^{t-1-m}
        \left(
            \delta_h^m+2\delta_h^{m+1}
        \right).
    \end{equation}
    Using the newly proved bound on $\delta_h^t$, for every $m\geq0$,
    \[
        \delta_h^m
        \leq
        D_h(m+q+2)^{q-1}\alpha^m\leq
        D_h(m+q+3)^{q-1}\alpha^m,
    \]
    and
    \[
        \delta_h^{m+1}
        \leq
        D_h(m+q+3)^{q-1}\alpha^{m+1}
        \leq
        D_h(m+q+3)^{q-1}\alpha^m.
    \]
    Hence
    \[
        \delta_h^m+2\delta_h^{m+1}
        \leq
        3D_h(m+q+3)^{q-1}\alpha^m.
    \]
    Putting this into Eq.~(\ref{spanx}) gives
    \[
    \begin{aligned}
        X_h^t
        &\leq
        \sqrt{K_0}\alpha^t
        +
        3\omega_\eta D_h
        \sum_{m=0}^{t-1}
        \alpha^{t-1-m}
        (m+q+3)^{q-1}
        \alpha^m
        \\
        &=
        \sqrt{K_0}\alpha^t
        +
        3\omega_\eta D_h
        \alpha^{t-1}
        \sum_{m=0}^{t-1}
        (m+q+3)^{q-1}.
    \end{aligned}
    \]
    Since $m\leq t-1$ implies
    \[
        m+q+3\leq t+q+2,
    \]
    we have
    \[
        \sum_{m=0}^{t-1}
        (m+q+3)^{q-1}
        \leq
        t(t+q+2)^{q-1}
        \leq
        (t+q+2)^q.
    \]
    Also $\alpha^{-1}\leq\sqrt2\leq2$. Thus
    \[
    \begin{aligned}
        X_h^t
        &\leq
        \sqrt{K_0}\alpha^t
        +
        3\omega_\eta D_h
        \alpha^{t-1}
        (t+q+2)^q
        \\
        &\leq
        \left[
            \sqrt{K_0}
            +
            6\omega_\eta D_h
        \right]
        (t+q+2)^q\alpha^t.
    \end{aligned}
    \]
    Define
    \[
        A_h:=
        \sqrt{K_0}
        +
        6\omega_\eta D_h.
    \]
    Then
    \[
        X_h^t
        \leq
        A_h(t+q_h+2)^{q_h}\alpha^t.
    \]

    \paragraph{(3) Bounding $\bar X_h^t$.}
    Finally, we prove the bound for $\bar X_h^t$. For $t\geq1$, Lemma~\ref{iter1} gives
    \[
    \begin{aligned}
        \bar X_h^t
        &\leq
        2X_h^{t-1}
        +
        \omega_\eta
        \left(
            \delta_h^{t-1}+2\delta_h^t
        \right).
    \end{aligned}
    \]
    The first term is bounded as
    \[
        2X_h^{t-1}
        \leq
        2A_h(t+q+1)^q\alpha^{t-1}
        \leq
        4A_h(t+q+2)^q\alpha^t,
    \]
    where we used $\alpha^{-1}\leq\sqrt2\leq2$.
    For the second term, using the bound on $\delta_h^t$,
    \[
    \begin{aligned}
        \delta_h^{t-1}+2\delta_h^t
        &\leq
        D_h(t+q+1)^{q-1}\alpha^{t-1}
        +
        2D_h(t+q+2)^{q-1}\alpha^t
        \\
        &\leq
        3D_h(t+q+2)^{q-1}\alpha^{t-1}
        \\
        &\leq
        6D_h(t+q+2)^q\alpha^t.
    \end{aligned}
    \]
    Hence
    \[
        \bar X_h^t
        \leq
        \left[
            4A_h
            +
            6\omega_\eta D_h
        \right]
        (t+q+2)^q\alpha^t.
    \]
    The case $t=0$ is covered by $\bar X_h^0\leq\sqrt{K_0}$ and the definition of $A_h$.

    \paragraph{(4) Integrating the bounds.}

    Define a constant 
    \[
        C_h=
        5\sqrt{K_0}+(1+36\omega_\eta) D_h.
    \]

    Combining the three bounds, for all $t\geq0$, we have
    \[
    \begin{aligned}
        \delta_h^t+X_h^t+\bar X_h^t
        &\leq
        \left[
            D_h
            +
            A_h
            +
            4A_h
            +
            6\omega_\eta D_h
        \right]
        (t+q+2)^q\alpha^t
        \\
        &=
        [
            5\sqrt{K_0}+(1+36\omega_\eta) D_h
        ]
        (t+q+2)^q\alpha^t
        \\
        &=
        C_h(t+q+2)^q\alpha^t.
    \end{aligned}
    \]
    Since $q=q_h$, the induction step is complete. 
    
    By backward induction, we finally obtain the result that for every $h\in[H]$ and $t\geq0$,
    \[
        \delta_h^t+\sqrt{L_h^t}+
        \sqrt{\bar L_h^t}
        \leq
        C_h(t+H-h+2)^{H-h}(1-\eta\tau)^{t/2}.
    \]

    \paragraph{Bounding the constant $C_h$.}
    Based on the former deduction, we have
    \[
        C_H=5\sqrt{K_0},\quad D_H=0,
    \]
    Then for $h<H$, we have
    \[
        D_h:=NH+2(1+NH\sqrt{2N})C_{h+1},
    \]
    \[
        C_h=
        5\sqrt{K_0}+(1+36\omega_\eta) D_h.
    \]
    The two equations give
    \[
    \begin{split}
        C_h
        &=
        5\sqrt{K_0}+(1+36\omega_\eta)(NH+2(1+NH\sqrt{2N})C_{h+1})
        \\
        &=5\sqrt{K_0}+(1+36\omega_\eta)NH+2(1+36\omega_\eta)(1+NH\sqrt{2N})C_{h+1}.
    \end{split}
    \]
    Recall that 
    \[
    \omega_\eta=\sqrt{256N^2\eta(1+1/\tau)}=16N\sqrt{\eta(1+1/\tau)}\leq O(N/\tau).
    \]
    and \(K_0=N\log A\). Hence, by taking a sufficiently large $C$,we have
    \[
        1+C_h 
        \leq C N^3H\log A/\tau \cdot (1+ C_{h+1})
    \]
    By backward induction, we have
    \[
        C_h= O(N^3H\log A/\tau)^{H-h}.
    \]
\end{proof}

\subsection{Proof of Lemma~\ref{iter1}}

\begin{proof}
    \paragraph{Auxiliary notations.}
    Fix $h\in[H]$ and $s\in\gS$. Throughout the proof, all policies are evaluated at the same pair $(h,s)$ unless explicitly stated otherwise. To simplify notation, we write
    \[
        \pi^*:=\pi_h^{*,\tau}(s),
        \qquad
        \pi^t:=\pi_h^t(s),
        \qquad
        \bar\pi^t:=\bar\pi_h^t(s),
        \qquad
        \bar\pi^{t+1}:=\bar\pi_h^{t+1}(s).
    \]
    KL divergences between product policies are understood as sums over players. For example,
    \[
        \KL(\pi^*||\pi^t)
        =
        \sum_{i=1}^{N}
        \KL(\pi_i^*||\pi_i^t).
    \]
    We also define
    \[
        \Delta_h^t
        :=
        \delta_h^t+2\delta_h^{t+1}.
    \]
    \paragraph{Roadmap of this proof.}
    We first derive a one-step identity that decomposes \(\KL(\pi_h^{*,\tau}||\pi_h^{t+1})(s)\) into several terms for the convenience of further analysis.
    \[
        \boxed{
        \begin{aligned}
            &
            \KL(\pi_h^{*,\tau}||\pi_h^{t+1})(s)
            \\
            &=
            \beta
            \KL(\pi_h^{*,\tau}||\pi_h^t)(s)
            -
            \beta
            \KL(\bar\pi_h^{t+1}||\pi_h^t)(s)
            -
            \KL(\pi_h^{t+1}||\bar\pi_h^{t+1})(s)
            \\
            &\quad
            -
            \eta\tau
            \KL(\bar\pi_h^{t+1}||\pi_h^{*,\tau})(s)
            +
            \left\langle
                \log\bar\pi_h^{t+1}
                -
                \log\pi_h^{t+1},
                \bar\pi_h^{t+1}
                -
                \pi_h^{t+1}
            \right\rangle(s)
            +
            \mathcal E_1 .
        \end{aligned}
        }
    \]

    Then in step 2 and step 3, we provide two bounds on \(\mathcal E_1\) and \(\left\langle
    \log\bar\pi_h^{t+1}
    -
    \log\pi_h^{t+1},
    \bar\pi_h^{t+1}
    -
    \pi_h^{t+1}
    \right\rangle\), respectively. From step 4 to 7, we will combine these bounds with the one-step identity to formally derive the policy-divergence recursions. We will box the results of each step for better readability.

    \paragraph{Step 1: a basic one-step identity.}
    To start with, from the OMWU update and the QRE first-order condition, for every player $i\in[N]$,
    \[
        \log \pi_{h,i}^{t+1}(s)
        \overset{\boldsymbol{1}}{=}
        \beta \log \pi_{h,i}^{t}(s)
        +
        \eta Q_{h,i}^{t+1}(s,\cdot,\bar\pi_{h,-i}^{t+1}),
    \]
    and
    \[
        \eta\tau \log \pi_{h,i}^{*,\tau}(s)
        \overset{\boldsymbol{1}}{=}
        \eta Q_{h,i}^{*,\tau}(s,\cdot,\pi_{h,-i}^{*,\tau}),
    \]
    where
    \[
        \beta=1-\eta\tau,
    \]
    and $x\overset{\boldsymbol{1}}{=}y$ means that $x-y$ is a constant vector.

    Subtracting the QRE condition from the OMWU update and pairing with
    \[
        \bar\pi_{h,i}^{t+1}(s)-\pi_{h,i}^{*,\tau}(s)
    \]
    gives
    \[
    \begin{aligned}
        &
        \left\langle
            \log \pi_{h,i}^{t+1}
            -
            \beta\log \pi_{h,i}^{t}
            -
            \eta\tau\log \pi_{h,i}^{*,\tau},
            \bar\pi_{h,i}^{t+1}
            -
            \pi_{h,i}^{*,\tau}
        \right\rangle(s)
        \\
        &=
        \eta
        \left\langle
            Q_{h,i}^{t+1}(s,\cdot,\bar\pi_{h,-i}^{t+1})
            -
            Q_{h,i}^{*,\tau}(s,\cdot,\pi_{h,-i}^{*,\tau}),
            \bar\pi_{h,i}^{t+1}
            -
            \pi_{h,i}^{*,\tau}
        \right\rangle(s).
    \end{aligned}
    \]
    Decompose the right-hand side as
    \[
    \begin{aligned}
        &
        \eta
        \left\langle
            Q_{h,i}^{t+1}(s,\cdot,\bar\pi_{h,-i}^{t+1})
            -
            Q_{h,i}^{*,\tau}(s,\cdot,\pi_{h,-i}^{*,\tau}),
            \bar\pi_{h,i}^{t+1}
            -
            \pi_{h,i}^{*,\tau}
        \right\rangle
        \\
        &=
        \eta
        \left\langle
            \left(Q_{h,i}^{t+1}-Q_{h,i}^{*,\tau}\right)
            (s,\cdot,\bar\pi_{h,-i}^{t+1}),
            \bar\pi_{h,i}^{t+1}
            -
            \pi_{h,i}^{*,\tau}
        \right\rangle
        \\
        &\quad+
        \eta
        Q_{h,i}^{*,\tau}
        \left(
            s,
            \bar\pi_{h,i}^{t+1}
            -
            \pi_{h,i}^{*,\tau},
            \bar\pi_{h,-i}^{t+1}
            -
            \pi_{h,-i}^{*,\tau}
        \right).
    \end{aligned}
    \]
    By Lemma~\ref{lem:qre-induced-structure}, $Q_h^{*,\tau}(s)$ is a zero-sum networked separable normal-form game. Therefore Lemma~\ref{sumzero2} gives
    \[
        \sum_{i=1}^{N}
        Q_{h,i}^{*,\tau}
        \left(
            s,
            \bar\pi_{h,i}^{t+1}
            -
            \pi_{h,i}^{*,\tau},
            \bar\pi_{h,-i}^{t+1}
            -
            \pi_{h,-i}^{*,\tau}
        \right)
        =
        0.
    \]
    Therefore, after summing over all players,
    \[
    \begin{aligned}
        &
        \left\langle
            \log \pi_h^{t+1}
            -
            \beta\log \pi_h^t
            -
            \eta\tau\log \pi_h^{*,\tau},
            \bar\pi_h^{t+1}
            -
            \pi_h^{*,\tau}
        \right\rangle(s)
        \\
        &=
        \eta
        \sum_{i=1}^{N}
        \left\langle
            \left(Q_{h,i}^{t+1}-Q_{h,i}^{*,\tau}\right)
            (s,\cdot,\bar\pi_{h,-i}^{t+1}),
            \bar\pi_{h,i}^{t+1}
            -
            \pi_{h,i}^{*,\tau}
        \right\rangle(s).
    \end{aligned}
    \]

We denote the right-hand side by
\[
\begin{aligned}
    \mathcal E_1
    &:=
    \eta
    \sum_{i=1}^{N}
    \left\langle
        \left(
            Q_{h,i}^{t+1}
            -
            Q_{h,i}^{*,\tau}
        \right)
        (s,\cdot,\bar\pi_{h,-i}^{t+1}),
        \bar\pi_{h,i}^{t+1}(s)
        -
        \pi_{h,i}^{*,\tau}(s)
    \right\rangle .
\end{aligned}
\]

It remains to rewrite the left-hand side in terms of KL divergences. For notational clarity, fix $h,t,s$ and write
\[
    x=\pi_h^t,\qquad
    y=\pi_h^{t+1},\qquad
    z=\bar\pi_h^{t+1},\qquad
    p=\pi_h^{*,\tau}.
\]
Define
\[
    \Phi
    :=
    \log y-\beta\log x-\eta\tau\log p .
\]
Since $\beta+\eta\tau=1$, we first compute
\[
\begin{aligned}
    \langle \Phi,z\rangle(s)
    &=
    \left\langle
        \log y-\beta\log x-\eta\tau\log p,
        z
    \right\rangle(s)
    \\
    &=
    \left\langle
        \log z-\beta\log x-\eta\tau\log p,
        z
    \right\rangle(s)
    +
    \left\langle
        \log y-\log z,
        z
    \right\rangle(s)
    \\
    &=
    \left[
        \beta\KL(z||x)
        +
        \eta\tau\KL(z||p)
    \right](s)
    +
    \left\langle
        \log y-\log z,
        z
    \right\rangle(s).
\end{aligned}
\]
The last inner product is further decomposed as
\[
\begin{aligned}
    \left\langle
        \log y-\log z,
        z
    \right\rangle(s)
    &=
    \left\langle
        \log y-\log z,
        y
    \right\rangle(s)
    +
    \left\langle
        \log y-\log z,
        z-y
    \right\rangle(s)
    \\
    &=
    \KL(y||z)(s)
    -
    \left\langle
        \log z-\log y,
        z-y
    \right\rangle(s).
\end{aligned}
\]
Therefore,
\[
\begin{aligned}
    \langle \Phi,z\rangle(s)
    =
    \left[
        \beta\KL(z||x)
        +
        \eta\tau\KL(z||p)
        +
        \KL(y||z)
    \right](s)
    -
    \left\langle
        \log z-\log y,
        z-y
    \right\rangle(s).
\end{aligned}
\]
Next,
\[
\begin{aligned}
    -\langle \Phi,p\rangle(s)
    &=
    -\left\langle
        \log y-\beta\log x-\eta\tau\log p,
        p
    \right\rangle(s)
    \\
    &=
    \left[
        \KL(p||y)
        -
        \beta\KL(p||x)
    \right](s).
\end{aligned}
\]
Combining the two displayed identities gives
\[
\begin{aligned}
    \left\langle
        \Phi,
        z-p
    \right\rangle(s)
    &=
    \langle\Phi,z\rangle(s)-\langle\Phi,p\rangle(s)
    \\
    &=
    \left[
        \KL(p||y)
        -
        \beta\KL(p||x)
        +
        \beta\KL(z||x)
        +
        \KL(y||z)
        +
        \eta\tau\KL(z||p)
    \right](s)
    \\
    &\quad
    -
    \left\langle
        \log z-\log y,
        z-y
    \right\rangle(s).
\end{aligned}
\]
Substituting back
$x=\pi_h^t$,
$y=\pi_h^{t+1}$,
$z=\bar\pi_h^{t+1}$,
and
$p=\pi_h^{*,\tau}$,
and using
$\langle\Phi,z-p\rangle(s)=\mathcal E_1$,
we obtain
\[
    \begin{aligned}
        &
        \KL(\pi_h^{*,\tau}||\pi_h^{t+1})(s)
        \\
        &=
        \beta
        \KL(\pi_h^{*,\tau}||\pi_h^t)(s)
        -
        \beta
        \KL(\bar\pi_h^{t+1}||\pi_h^t)(s)
        -
        \KL(\pi_h^{t+1}||\bar\pi_h^{t+1})(s)
        \\
        &\quad
        -
        \eta\tau
        \KL(\bar\pi_h^{t+1}||\pi_h^{*,\tau})(s)
        +
        \left\langle
            \log\bar\pi_h^{t+1}
            -
            \log\pi_h^{t+1},
            \bar\pi_h^{t+1}
            -
            \pi_h^{t+1}
        \right\rangle(s)
        +
        \mathcal E_1 .
    \end{aligned}
\]

    \paragraph{Step 2: a quadratic bound on $\mathcal E_1$ term.}
    We show that
    \[
    \boxed{
    \begin{aligned}
        \mathcal E_1
        &\leq
        \frac{\eta\tau}{4}
        \KL(
            \bar\pi_h^{t+1}
            ||
            \pi_h^{*,\tau}
        )
        +
        \frac{2\eta N}{\tau}
        (\delta_h^{t+1})^2.
    \end{aligned}
    }
    \]
    By the definition
    $$\delta_h^{t+1}=
    \max_{s\in\gS}
    \max_{i\in[N]}
    \max_{\va\in\gA}
    \left|
        Q_{h,i}^{t+1}(s,\va)-Q_{h,i}^{*,\tau}(s,\va)
    \right|,$$ we have
    \[
    \begin{aligned}
        \mathcal E_1&=
    \eta
    \sum_{i=1}^{N}
    \left\langle
        \left(
            Q_{h,i}^{t+1}
            -
            Q_{h,i}^{*,\tau}
        \right)
        (s,\cdot,\bar\pi_{h,-i}^{t+1}),
        \bar\pi_{h,i}^{t+1}(s)
        -
        \pi_{h,i}^{*,\tau}(s)
    \right\rangle\\
        &\leq
        \eta
        \sum_{i=1}^{N}\delta_h^{t+1}
        \left\|
            \pi_{h,i}^{*,\tau}(\cdot|s)
            -
            \bar\pi_{h,i}^{t+1}(\cdot|s)
        \right\|_1\\
        &=
        \eta\delta_h^{t+1}
        \sum_{i=1}^{N}
        \left\|
            \pi_{h,i}^{*,\tau}(\cdot|s)
            -
            \bar\pi_{h,i}^{t+1}(\cdot|s)
        \right\|_1.
    \end{aligned}
    \]
    Applying Cauchy-Schwarz inequality and Pinsker's inequality in Lemma~\ref{lem:pinsker},
    \[
    \begin{aligned}
        \sum_{i=1}^{N}
        \left\|
            \pi_{h,i}^{*,\tau}
            -
            \bar\pi_{h,i}^{t+1}
        \right\|_1
        &\leq
        \sqrt{
            N
            \sum_{i=1}^{N}
            \left\|
                \pi_{h,i}^{*,\tau}
                -
                \bar\pi_{h,i}^{t+1}
            \right\|_1^2
        }
        \\
        &\leq
        \sqrt{
            2N
            \KL(
                \bar\pi_h^{t+1}
                ||
                \pi_h^{*,\tau}
            )
        }.
    \end{aligned}
    \]
    Hence, by Young's inequality,
    \[
    \begin{aligned}
        \mathcal E_1
        &\leq
        \eta\delta_h^{t+1}
        \sqrt{
            2N
            \KL(
                \bar\pi_h^{t+1}
                ||
                \pi_h^{*,\tau}
            )
        }
        \\
        &\leq
        \frac{\eta\tau}{4}
        \KL(
            \bar\pi_h^{t+1}
            ||
            \pi_h^{*,\tau}
        )
        +
        \frac{2\eta N}{\tau}
        (\delta_h^{t+1})^2.
    \end{aligned}
    \]

    \paragraph{Step 3: controlling the optimistic correction term.}
    In this step, we prove that
    \[
    \boxed{
    \begin{aligned}
        &
        \left\langle
            \log\bar\pi_h^{t+1}
            -
            \log\pi_h^{t+1},
            \bar\pi_h^{t+1}
            -
            \pi_h^{t+1}
        \right\rangle(s)
        \\
        &\leq
        \eta N^2H
        \KL(\pi_h^t||\bar\pi_h^t)(s)
        +
        \eta N^2H
        \KL(\bar\pi_h^{t+1}||\pi_h^t)(s)
        \\
        &\quad+
        3\eta N^2H
        \KL(\pi_h^{t+1}||\bar\pi_h^{t+1})(s)+
        4\eta N^2
        (\delta_h^t+\delta_h^{t+1})^2.
    \end{aligned}
    }
    \]
    From the optimistic update in the algorithm,
    \begin{equation}
        \label{optupdate}
        \log \bar\pi_{h,i}^{t+1}(s)
        -
        \log \pi_{h,i}^{t+1}(s)
        \overset{\boldsymbol{1}}{=}
        \eta Q_{h,i}^{t}
        (s,\cdot,\bar\pi_{h,-i}^{t})
        -
        \eta Q_{h,i}^{t+1}
        (s,\cdot,\bar\pi_{h,-i}^{t+1}).
    \end{equation}
    Therefore,
    \[
    \begin{aligned}
        &
        \left\langle
            \log\bar\pi_h^{t+1}
            -
            \log\pi_h^{t+1},
            \bar\pi_h^{t+1}
            -
            \pi_h^{t+1}
        \right\rangle(s)
        \\
        &=
        \eta
        \sum_{i=1}^{N}
        \left\langle
            \bar\pi_{h,i}^{t+1}
            -
            \pi_{h,i}^{t+1},
            Q_{h,i}^{t}
            (s,\cdot,\bar\pi_{h,-i}^{t})
            -
            Q_{h,i}^{t+1}
            (s,\cdot,\bar\pi_{h,-i}^{t+1})
        \right\rangle(s).
    \end{aligned}
    \]
    Insert and subtract $Q_{h,i}^{*,\tau}$ to decompose this term into
    \[
    \begin{aligned}
        &
        \left\langle
            \log\bar\pi_h^{t+1}
            -
            \log\pi_h^{t+1},
            \bar\pi_h^{t+1}
            -
            \pi_h^{t+1}
        \right\rangle(s)
        \\
        &=
        \eta
        \sum_{i=1}^{N}
        \left\langle
            \bar\pi_{h,i}^{t+1}
            -
            \pi_{h,i}^{t+1},
            Q_{h,i}^{*,\tau}
            (s,\cdot,\bar\pi_{h,-i}^{t})
            -
            Q_{h,i}^{*,\tau}
            (s,\cdot,\bar\pi_{h,-i}^{t+1})
        \right\rangle(s)
        \\
        &\quad+
        \mathcal E_2,
    \end{aligned}
    \]
    where
    \[
    \begin{aligned}
        \mathcal E_2
        &:=
        \eta
        \sum_{i=1}^{N}
        \left\langle
            \bar\pi_{h,i}^{t+1}
            -
            \pi_{h,i}^{t+1},
            \left(Q_{h,i}^{t}-Q_{h,i}^{*,\tau}\right)
            (s,\cdot,\bar\pi_{h,-i}^{t})
        \right\rangle
        \\
        &\quad+
        \eta
        \sum_{i=1}^{N}
        \left\langle
            \bar\pi_{h,i}^{t+1}
            -
            \pi_{h,i}^{t+1},
            \left(Q_{h,i}^{*,\tau}-Q_{h,i}^{t+1}\right)
            (s,\cdot,\bar\pi_{h,-i}^{t+1})
        \right\rangle .
    \end{aligned}
    \]
    By the definitions of $\delta_h^t$ and $\delta_h^{t+1}$,
    \[
    \begin{aligned}
        \mathcal E_2
        &\leq
        \eta(\delta_h^t+\delta_h^{t+1})
        \sum_{i=1}^{N}
        \left\|
            \bar\pi_{h,i}^{t+1}
            -
            \pi_{h,i}^{t+1}
        \right\|_1.
    \end{aligned}
    \]
    Again by Cauchy-Schwarz, Pinsker, and Young's inequality,
    \[
    \begin{aligned}
        \mathcal E_2
        &\leq
        \eta(\delta_h^t+\delta_h^{t+1})
        \sqrt{
            2N
            \KL(
                \pi_h^{t+1}
                ||
                \bar\pi_h^{t+1}
            )
        }
        \\
        &\leq \eta N
        \KL(
            \pi_h^{t+1}
            ||
            \bar\pi_h^{t+1}
        )
        +
        \frac{1}{2}\eta 
        (\delta_h^t+\delta_h^{t+1})^2\\
        &\leq
        \eta N^2H
        \KL(
            \pi_h^{t+1}
            ||
            \bar\pi_h^{t+1}
        )
        +
        4\eta N^2
        (\delta_h^t+\delta_h^{t+1})^2.
    \end{aligned}
    \]

    It remains to control the term involving $Q_h^{*,\tau}$. By Lemma~\ref{lem:qre-induced-structure}, $Q_h^{*,\tau}(s)$ is a networked separable normal-form game; moreover $\|Q_{h,i,j}^{*,\tau}\|_\infty\leq NH$ by Eqs.~(\ref{eq:edge-q-bound}) and~(\ref{eq:edge-qre-target}), in both controller cases. Hence
    \[
    \begin{aligned}
        &
        \sum_{i=1}^{N}
        \left\langle
            \bar\pi_{h,i}^{t+1}
            -
            \pi_{h,i}^{t+1},
            Q_{h,i}^{*,\tau}
            (s,\cdot,\bar\pi_{h,-i}^{t})
            -
            Q_{h,i}^{*,\tau}
            (s,\cdot,\bar\pi_{h,-i}^{t+1})
        \right\rangle
        \\
        &=
        \sum_{i=1}^{N}
        \sum_{j\in\gN_i}
        \left(
            \bar\pi_{h,i}^{t+1}
            -
            \pi_{h,i}^{t+1}
        \right)^\top
        Q_{h,i,j}^{*,\tau}(s,\cdot,\cdot)
        \left(
            \bar\pi_{h,j}^{t}
            -
            \bar\pi_{h,j}^{t+1}
        \right)
        \\
        &=
        \sum_{i=1}^{N}
        \sum_{j\in\gN_i}
        \left(
            \bar\pi_{h,i}^{t+1}
            -
            \pi_{h,i}^{t+1}
        \right)^\top
        Q_{h,i,j}^{*,\tau}(s,\cdot,\cdot)
        \left(
            \bar\pi_{h,j}^{t}
            -
            \pi_{h,j}^{t}
        \right)
        \\
        &\quad+
        \sum_{i=1}^{N}
        \sum_{j\in\gN_i}
        \left(
            \bar\pi_{h,i}^{t+1}
            -
            \pi_{h,i}^{t+1}
        \right)^\top
        Q_{h,i,j}^{*,\tau}(s,\cdot,\cdot)
        \left(
            \pi_{h,j}^{t}
            -
            \bar\pi_{h,j}^{t+1}
        \right).
    \end{aligned}
    \]
    Using the fact that $\|Q_{h,i,j}^{*,\tau}\|_\infty\leq NH$ for all $i,j$, we have that
    \[
        \left(
            \bar\pi_{h,i}^{t+1}
            -
            \pi_{h,i}^{t+1}
        \right)^\top
        Q_{h,i,j}^{*,\tau}(s,\cdot,\cdot)
        \left(
            \bar\pi_{h,j}^{t}
            -
            \pi_{h,j}^{t}
        \right)\leq \|
            \bar\pi_{h,i}^{t+1}
            -
            \pi_{h,i}^{t+1}
        \|_1
        \|Q_{h,i,j}^{*,\tau}(s,\cdot,\cdot)\|_{\infty}
        \|
            \bar\pi_{h,j}^{t}
            -
            \pi_{h,j}^{t}
        \|_1
    \]
    Hence, one can bound the following expression as
    \[
    \begin{aligned}
        &
        \sum_{i=1}^{N}
        \left\langle
            \bar\pi_{h,i}^{t+1}
            -
            \pi_{h,i}^{t+1},
            Q_{h,i}^{*,\tau}
            (s,\cdot,\bar\pi_{h,-i}^{t})
            -
            Q_{h,i}^{*,\tau}
            (s,\cdot,\bar\pi_{h,-i}^{t+1})
        \right\rangle
        \\
        &\leq
        NH
        \sum_{i=1}^{N}
        \sum_{j=1}^{N}
        \left\|
            \bar\pi_{h,i}^{t+1}
            -
            \pi_{h,i}^{t+1}
        \right\|_1
        \left\|
            \bar\pi_{h,j}^{t}
            -
            \pi_{h,j}^{t}
        \right\|_1
        \\
        &\quad+
        NH
        \sum_{i=1}^{N}
        \sum_{j=1}^{N}
        \left\|
            \bar\pi_{h,i}^{t+1}
            -
            \pi_{h,i}^{t+1}
        \right\|_1
        \left\|
            \pi_{h,j}^{t}
            -
            \bar\pi_{h,j}^{t+1}
        \right\|_1
        \\
        &\leq
        NH
        \sum_{i=1}^{N}
        \sum_{j=1}^{N}
        \frac12
        \Bigg[
            2
            \left\|
                \bar\pi_{h,i}^{t+1}
                -
                \pi_{h,i}^{t+1}
            \right\|_1^2
            +
            \left\|
                \bar\pi_{h,j}^{t}
                -
                \pi_{h,j}^{t}
            \right\|_1^2
            \\
        &\hspace{8em}
            +
            \left\|
                \pi_{h,j}^{t}
                -
                \bar\pi_{h,j}^{t+1}
            \right\|_1^2
        \Bigg],
    \end{aligned}
    \]
    where in the last step, we apply \(ab\leq \frac{a^2+b^2}{2}\).

    Applying Pinsker's inequality in Lemma~\ref{lem:pinsker} to each squared total variation distance, we get
\begin{equation}
    \label{tmpeq2}
    \begin{aligned}
        &
        \sum_{i=1}^{N}
        \left\langle
            \bar\pi_{h,i}^{t+1}
            -
            \pi_{h,i}^{t+1},
            Q_{h,i}^{*,\tau}
            (s,\cdot,\bar\pi_{h,-i}^{t})
            -
            Q_{h,i}^{*,\tau}
            (s,\cdot,\bar\pi_{h,-i}^{t+1})
        \right\rangle
        \\
        &\leq
        N^2H
        \KL(\pi_h^t||\bar\pi_h^t)(s)
        +
        N^2H
        \KL(\bar\pi_h^{t+1}||\pi_h^t)(s)
        \\
        &\quad+
        2N^2H
        \KL(\pi_h^{t+1}||\bar\pi_h^{t+1})(s).
    \end{aligned}
\end{equation}
    Combining this with the bound on $\mathcal E_2$, we obtain
    \[
    \begin{aligned}
        &
        \left\langle
            \log\bar\pi_h^{t+1}
            -
            \log\pi_h^{t+1},
            \bar\pi_h^{t+1}
            -
            \pi_h^{t+1}
        \right\rangle(s)
        \\
        &\leq
        \eta N^2H
        \KL(\pi_h^t||\bar\pi_h^t)(s)
        +
        \eta N^2H
        \KL(\bar\pi_h^{t+1}||\pi_h^t)(s)
        \\
        &\quad+
        3\eta N^2H
        \KL(\pi_h^{t+1}||\bar\pi_h^{t+1})(s)
        \\
        &\quad+
        4\eta N^2
        (\delta_h^t+\delta_h^{t+1})^2.
    \end{aligned}
    \]

    \paragraph{Step 4: proof of the recursion for $L_h^{t+1}$.}
    We first recall some notations: \[
    \beta=1-\eta\tau,
    \qquad
    \lambda_\eta=1-4\eta N^2H,
    \qquad
    \Gamma_\eta=
    256N^2\eta\left(1+1/\tau\right).
\]
    In this part, we formally prove that 
    \[
    \boxed{
        L_h^{t+1}
        \leq
        \beta L_h^t
        +
        \Gamma_\eta
        \left(
            \delta_h^t+2\delta_h^{t+1}
        \right)^2.
    }
    \]
    Combining the identity from Step 1 with the bounds from Steps 2 and 3, we get
\begin{equation}
    \label{tmpeq}
    \begin{aligned}
        &
        \KL(\pi_h^{*,\tau}||\pi_h^{t+1})(s)
        \\
        &\leq
        \beta
        \KL(\pi_h^{*,\tau}||\pi_h^t)(s)
        +
        \eta N^2H
        \KL(\pi_h^t||\bar\pi_h^t)(s)
        \\
        &\quad
        -
        (\beta-\eta N^2H)
        \KL(\bar\pi_h^{t+1}||\pi_h^t)(s)
        \\
        &\quad
        -
        (1-3\eta N^2H)
        \KL(\pi_h^{t+1}||\bar\pi_h^{t+1})(s)
        \\
        &\quad
        -
        \frac{3\eta\tau}{4}
        \KL(\bar\pi_h^{t+1}||\pi_h^{*,\tau})(s)
        \\
        &\quad+
        \frac{2\eta N}{\tau}
        (\delta_h^{t+1})^2
        +
        4\eta N^2
        (\delta_h^t+\delta_h^{t+1})^2.
    \end{aligned}
\end{equation}
    Since
    \[
        \eta\leq \min\left\{\frac{1}{16N^2H},\frac{1}{2\tau}\right\},
    \]
    we have
    \[
        \eta N^2H\leq \frac{1}{16},
    \]
    and
    \[
        \beta = 1-\eta\tau\geq \frac{1}{2} \geq \eta N^2 H.
    \]
    Therefore, \[\beta - \eta N^2 H\geq 0 \] is non-negative.

    Consequently, by dropping the first and third non-positive terms from the right-hand side, we have
    \begin{equation*}
    \begin{aligned}
        &
        \KL(\pi_h^{*,\tau}||\pi_h^{t+1})(s)
        \\
        &\leq
        \beta
        \KL(\pi_h^{*,\tau}||\pi_h^t)(s)
        +
        \eta N^2H
        \KL(\pi_h^t||\bar\pi_h^t)(s)
        \\
        &\quad
        -
        (1-3\eta N^2H)
        \KL(\pi_h^{t+1}||\bar\pi_h^{t+1})(s)
        \\
        &\quad+
        \frac{2\eta N}{\tau}
        (\delta_h^{t+1})^2
        +
        4\eta N^2
        (\delta_h^t+\delta_h^{t+1})^2.
    \end{aligned}
    \end{equation*}
    Using \(
        \lambda_\eta=1-4\eta N^2H\leq 1-3\eta N^2H,
    \) and rearranging terms yield
    \[
    \begin{aligned}
        &
        \KL(\pi_h^{*,\tau}||\pi_h^{t+1})(s)+
        \lambda_\eta
        \KL(\pi_h^{t+1}||\bar\pi_h^{t+1})(s)
        \\
        &\leq
        \KL(\pi_h^{*,\tau}||\pi_h^{t+1})(s)+
        (1-3\eta N^2H)
        \KL(\pi_h^{t+1}||\bar\pi_h^{t+1})(s)
        \\
        &\leq
        \beta
        \KL(\pi_h^{*,\tau}||\pi_h^t)(s)
        +
        \eta N^2H
        \KL(\pi_h^t||\bar\pi_h^t)(s)
        \\
        &\quad+
        \frac{2\eta N}{\tau}
        (\delta_h^{t+1})^2
        +
        4\eta N^2
        (\delta_h^t+\delta_h^{t+1})^2.
    \end{aligned}
    \]
    Noting that
    \[
        \beta\lambda_\eta = (1-\eta\tau)\lambda_\eta\geq \frac{1}{2}\lambda_\eta = \frac{1}{2}(1-4\eta N^2H)\geq \frac{1}{2}(1-\frac{1}{4})=\frac{6}{16}\geq\eta N^2H,
    \]
    we obtain
    \[
    \begin{aligned}
        &
        \KL(\pi_h^{*,\tau}||\pi_h^{t+1})(s)
        +
        \lambda_\eta
        \KL(\pi_h^{t+1}||\bar\pi_h^{t+1})(s)
        \\
        &\leq
        \beta
        \left[
            \KL(\pi_h^{*,\tau}||\pi_h^t)(s)
            +
            \lambda_\eta
            \KL(\pi_h^t||\bar\pi_h^t)(s)
        \right]
        \\
        &\quad+
        \frac{2\eta N}{\tau}
        (\delta_h^{t+1})^2
        +
        4\eta N^2
        (\delta_h^t+\delta_h^{t+1})^2.
    \end{aligned}
    \]
    By the definition
    \[
        \Delta_h^t=\delta_h^t+2\delta_h^{t+1},
    \]
    and since $N\leq N^2$, we have
    \[
        \frac{2\eta N}{\tau}
        (\delta_h^{t+1})^2
        +
        4\eta N^2
        (\delta_h^t+\delta_h^{t+1})^2
        \leq
        \Gamma_\eta
        (\Delta_h^t)^2,
    \]
    where
    \[
        \Gamma_\eta
        =
        256N^2\eta
        \left(1+\frac1\tau\right).
    \]
    Therefore,
    \[
    \begin{aligned}
        &
        \KL(\pi_h^{*,\tau}||\pi_h^{t+1})(s)
        +
        \lambda_\eta
        \KL(\pi_h^{t+1}||\bar\pi_h^{t+1})(s)
        \\
        &\leq
        \beta
        \left[
            \KL(\pi_h^{*,\tau}||\pi_h^t)(s)
            +
            \lambda_\eta
            \KL(\pi_h^t||\bar\pi_h^t)(s)
        \right]
        +
        \Gamma_\eta
        (\Delta_h^t)^2.
    \end{aligned}
    \]
    Taking the maximum over $s\in\gS$ yields
    \[
        L_h^{t+1}
        \leq
        \beta L_h^t
        +
        \Gamma_\eta
        \left(
            \delta_h^t+2\delta_h^{t+1}
        \right)^2.
    \]

    \paragraph{Step 5: the reverse-KL bound.}
    From the inequality (\ref{tmpeq}), we can move the reverse-KL term to the left-hand side and discard all other nonnegative terms. This gives
    \[
    \begin{aligned}
        \frac{3\eta\tau}{4}
        \KL(\bar\pi_h^{t+1}||\pi_h^{*,\tau})(s)
        &\leq
        \beta
        \KL(\pi_h^{*,\tau}||\pi_h^t)(s)
        +
        \eta N^2H
        \KL(\pi_h^t||\bar\pi_h^t)(s)
        \\
        &\quad+
        \frac{2\eta N}{\tau}
        (\delta_h^{t+1})^2
        +
        4\eta N^2
        (\delta_h^t+\delta_h^{t+1})^2.
    \end{aligned}
    \]
    Since $\eta N^2H\leq \beta\lambda_\eta$, the first two terms on the right-hand side are bounded by
    \[
        \beta
        \left[
            \KL(\pi_h^{*,\tau}||\pi_h^t)(s)
            +
            \lambda_\eta
            \KL(\pi_h^t||\bar\pi_h^t)(s)
        \right].
    \]
    Using again the definition of $\Gamma_\eta$, we obtain
    \[
    \boxed{
        \frac{\eta\tau}{2}
        \KL(\bar\pi_h^{t+1}(s)||\pi_h^{*,\tau}(s))
        \leq
        \beta L_h^t
        +
        \Gamma_\eta
        \left(
            \delta_h^t+2\delta_h^{t+1}
        \right)^2.
    }
    \]

    \paragraph{Step 6: proof of the recursion for $\bar L_h^{t+1}$.}
    In this part, we prove that
    \[
    \boxed{
        \bar L_h^{t+1}
        \leq
        4L_h^t
        +
        \Gamma_\eta
        \left(
            \delta_h^t+2\delta_h^{t+1}
        \right)^2.}
    \]
    We now bound
    \[
        \KL(\pi_h^{*,\tau}||\bar\pi_h^{t+1})(s).
    \]
    The KL identity
    \[
    \begin{aligned}
        \KL(\pi_h^{*,\tau}||\bar\pi_h^{t+1})(s)
        &=
        \KL(\pi_h^{*,\tau}||\pi_h^{t+1})(s)
        -
        \KL(\bar\pi_h^{t+1}||\pi_h^{t+1})(s)
        \\
        &\quad
        -
        \left\langle
            \pi_h^{*,\tau}
            -
            \bar\pi_h^{t+1},
            \log\bar\pi_h^{t+1}
            -
            \log\pi_h^{t+1}
        \right\rangle(s)
    \end{aligned}
    \]
    follows directly from expanding the three KL divergences.

    We bound the last inner product using the optimistic-update identity (\ref{optupdate}). Define
    \[
        \mathcal J
        :=
        -
        \left\langle
            \pi_h^{*,\tau}
            -
            \bar\pi_h^{t+1},
            \log\bar\pi_h^{t+1}
            -
            \log\pi_h^{t+1}
        \right\rangle(s).
    \]
    Then
    \[
    \begin{aligned}
        \mathcal J
        &=
        \eta
        \sum_{i=1}^{N}
        \left\langle
            \bar\pi_{h,i}^{t+1}
            -
            \pi_{h,i}^{*,\tau},
            Q_{h,i}^{t}
            (s,\cdot,\bar\pi_{h,-i}^{t})
            -
            Q_{h,i}^{t+1}
            (s,\cdot,\bar\pi_{h,-i}^{t+1})
        \right\rangle
        \\
        &=
        \eta
        \sum_{i=1}^{N}
        \left\langle
            \bar\pi_{h,i}^{t+1}
            -
            \pi_{h,i}^{*,\tau},
            Q_{h,i}^{*,\tau}
            (s,\cdot,\bar\pi_{h,-i}^{t})
            -
            Q_{h,i}^{*,\tau}
            (s,\cdot,\bar\pi_{h,-i}^{t+1})
        \right\rangle
        \\
        &\quad+
        \mathcal E_3,
    \end{aligned}
    \]
    where
    \[
    \begin{aligned}
        \mathcal E_3
        &:=
        \eta
        \sum_{i=1}^{N}
        \left\langle
            \bar\pi_{h,i}^{t+1}
            -
            \pi_{h,i}^{*,\tau},
            \left(Q_{h,i}^{t}-Q_{h,i}^{*,\tau}\right)
            (s,\cdot,\bar\pi_{h,-i}^{t})
        \right\rangle
        \\
        &\quad+
        \eta
        \sum_{i=1}^{N}
        \left\langle
            \bar\pi_{h,i}^{t+1}
            -
            \pi_{h,i}^{*,\tau},
            \left(Q_{h,i}^{*,\tau}-Q_{h,i}^{t+1}\right)
            (s,\cdot,\bar\pi_{h,-i}^{t+1})
        \right\rangle .
    \end{aligned}
    \]
    By the definitions of $\delta_h^t,\delta_h^{t+1}$,
    \[
    \begin{aligned}
        \mathcal E_3
        &\leq
        \eta(\delta_h^t+\delta_h^{t+1})
        \sum_{i=1}^{N}
        \left\|
            \bar\pi_{h,i}^{t+1}
            -
            \pi_{h,i}^{*,\tau}
        \right\|_1
        \\
        &\leq
        \eta(\delta_h^t+\delta_h^{t+1})
        \sqrt{
            2N
            \KL(
                \pi_h^{*,\tau}
                ||
                \bar\pi_h^{t+1}
            )
        }
        \\
        &\leq
        \eta N^2H
        \KL(
            \pi_h^{*,\tau}
            ||
            \bar\pi_h^{t+1}
        )
        +
        4\eta N^2
        (\delta_h^t+\delta_h^{t+1})^2.
    \end{aligned}
    \]
    The $Q_h^{*,\tau}$ part is bounded similarly to inequality (\ref{tmpeq2}):
    \[
    \begin{aligned}
        &
        \sum_{i=1}^{N}
        \left\langle
            \bar\pi_{h,i}^{t+1}
            -
            \pi_{h,i}^{*,\tau},
            Q_{h,i}^{*,\tau}
            (s,\cdot,\bar\pi_{h,-i}^{t})
            -
            Q_{h,i}^{*,\tau}
            (s,\cdot,\bar\pi_{h,-i}^{t+1})
        \right\rangle
        \\
        &\leq
        N^2H
        \KL(\pi_h^t||\bar\pi_h^t)(s)
        +
        N^2H
        \KL(\bar\pi_h^{t+1}||\pi_h^t)(s)
        \\
        &\quad+
        2N^2H
        \KL(\pi_h^{*,\tau}||\bar\pi_h^{t+1})(s).
    \end{aligned}
    \]
    Therefore,
    \[
    \begin{aligned}
        \mathcal J
        &\leq
        \eta N^2H
        \KL(\pi_h^t||\bar\pi_h^t)(s)
        +
        \eta N^2H
        \KL(\bar\pi_h^{t+1}||\pi_h^t)(s)
        \\
        &\quad+
        3\eta N^2H
        \KL(\pi_h^{*,\tau}||\bar\pi_h^{t+1})(s)
        +
        4\eta N^2
        (\delta_h^t+\delta_h^{t+1})^2.
    \end{aligned}
    \]
    Plugging this into the KL identity gives
    \[
    \begin{aligned}
        &
        \KL(\pi_h^{*,\tau}||\bar\pi_h^{t+1})(s)
        \\
        &\leq
        \KL(\pi_h^{*,\tau}||\pi_h^{t+1})(s)
        -
        \KL(\bar\pi_h^{t+1}||\pi_h^{t+1})(s)
        \\
        &\quad+
        \eta N^2H
        \KL(\pi_h^t||\bar\pi_h^t)(s)
        +
        \eta N^2H
        \KL(\bar\pi_h^{t+1}||\pi_h^t)(s)
        \\
        &\quad+
        3\eta N^2H
        \KL(\pi_h^{*,\tau}||\bar\pi_h^{t+1})(s)
        +
        4\eta N^2
        (\delta_h^t+\delta_h^{t+1})^2.
    \end{aligned}
    \]
    Moving the term involving
    $\KL(\pi_h^{*,\tau}||\bar\pi_h^{t+1})$
    to the left-hand side yields
    \[
    \begin{aligned}
        &
        (1-3\eta N^2H)
        \KL(\pi_h^{*,\tau}||\bar\pi_h^{t+1})(s)
        \\
        &\leq
        \KL(\pi_h^{*,\tau}||\pi_h^{t+1})(s)
        -
        \KL(\bar\pi_h^{t+1}||\pi_h^{t+1})(s)
        \\
        &\quad+
        \eta N^2H
        \KL(\pi_h^t||\bar\pi_h^t)(s)
        +
        \eta N^2H
        \KL(\bar\pi_h^{t+1}||\pi_h^t)(s)
        \\
        &\quad+
        4\eta N^2
        (\delta_h^t+\delta_h^{t+1})^2.
    \end{aligned}
    \]
    Now substituting the bound on
    $\KL(\pi_h^{*,\tau}||\pi_h^{t+1})(s)$
    from inequality (\ref{tmpeq}), we obtain
    \[
    \begin{aligned}
        &
        (1-3\eta N^2H)
        \KL(\pi_h^{*,\tau}||\bar\pi_h^{t+1})(s)
        \\
        &\leq
        \beta
        \KL(\pi_h^{*,\tau}||\pi_h^t)(s)
        +
        \eta N^2H
        \KL(\pi_h^t||\bar\pi_h^t)(s)
        \\
        &\quad
        -
        (\beta-\eta N^2H)
        \KL(\bar\pi_h^{t+1}||\pi_h^t)(s)
        \\
        &\quad
        -
        (1-3\eta N^2H)
        \KL(\pi_h^{t+1}||\bar\pi_h^{t+1})(s)
        \\
        &\quad
        -
        \frac{3\eta\tau}{4}
        \KL(\bar\pi_h^{t+1}||\pi_h^{*,\tau})(s)
        \\
        &\quad+
        \frac{2\eta N}{\tau}
        (\delta_h^{t+1})^2
        +
        4\eta N^2
        (\delta_h^t+\delta_h^{t+1})^2
        \\
        &\quad-
        \KL(\bar\pi_h^{t+1}||\pi_h^{t+1})(s)
        \\
        &\quad+
        \eta N^2H
        \KL(\pi_h^t||\bar\pi_h^t)(s)
        +
        \eta N^2H
        \KL(\bar\pi_h^{t+1}||\pi_h^t)(s)
        \\
        &\quad+
        4\eta N^2
        (\delta_h^t+\delta_h^{t+1})^2.\\
    \end{aligned}
    \]
    Then we combine the second line from the bottom into the previously occurred terms and discard all terms with negative coefficients.
    \[
    \begin{aligned}
        &
        (1-3\eta N^2H)
        \KL(\pi_h^{*,\tau}||\bar\pi_h^{t+1})(s)
        \\
        &\leq
        \beta
        \KL(\pi_h^{*,\tau}||\pi_h^t)(s)
        +
        2\eta N^2H
        \KL(\pi_h^t||\bar\pi_h^t)(s)
        \\
        &\quad
        -
        (\beta-2\eta N^2H)
        \KL(\bar\pi_h^{t+1}||\pi_h^t)(s)
        \\
        &\quad
        -
        (1-3\eta N^2H)
        \KL(\pi_h^{t+1}||\bar\pi_h^{t+1})(s)
        \\
        &\quad
        -
        \frac{3\eta\tau}{4}
        \KL(\bar\pi_h^{t+1}||\pi_h^{*,\tau})(s)
        \\
        &\quad+
        \frac{2\eta N}{\tau}
        (\delta_h^{t+1})^2
        +
        4\eta N^2
        (\delta_h^t+\delta_h^{t+1})^2
        \\
        &\quad-
        \KL(\bar\pi_h^{t+1}||\pi_h^{t+1})(s)
        \\
        &\quad+
        4\eta N^2
        (\delta_h^t+\delta_h^{t+1})^2\\
        &\leq
        \beta
        \KL(\pi_h^{*,\tau}||\pi_h^t)(s)
        +
        2\eta N^2H
        \KL(\pi_h^t||\bar\pi_h^t)(s)
        \\
        &\quad+
        \frac{2\eta N}{\tau}
        (\delta_h^{t+1})^2
        +
        4\eta N^2
        (\delta_h^t+\delta_h^{t+1})^2
        \\
        &\quad+
        4\eta N^2
        (\delta_h^t+\delta_h^{t+1})^2\\
    \end{aligned}
    \]
    In particular,
    \[
        \beta-2\eta N^2H\geq0,
    \]
    because
    \[
    \beta\geq \frac{1}{2}\geq \frac{2}{16}\geq 2\eta N^2H.
    \]
    under the step-size condition. Therefore, eventually we get
    \[
    \begin{aligned}
        &
        (1-3\eta N^2H)
        \KL(\pi_h^{*,\tau}||\bar\pi_h^{t+1})(s)
        \\
        &\leq
        \beta
        \KL(\pi_h^{*,\tau}||\pi_h^t)(s)
        +
        2\eta N^2H
        \KL(\pi_h^t||\bar\pi_h^t)(s)
        \\
        &\quad+
        \frac{2\eta N}{\tau}
        (\delta_h^{t+1})^2
        +
        8\eta N^2
        (\delta_h^t+\delta_h^{t+1})^2.
    \end{aligned}
    \]
    Since $\eta N^2H\leq1/16$, we have
    \[
        1-3\eta N^2H\geq\frac12.
    \]
    Hence
    \[
    \begin{aligned}
        &
        \KL(\pi_h^{*,\tau}||\bar\pi_h^{t+1})(s)
        \\
        &\leq
        2\beta
        \KL(\pi_h^{*,\tau}||\pi_h^t)(s)
        +
        4\eta N^2H
        \KL(\pi_h^t||\bar\pi_h^t)(s)
        \\
        &\quad+
        \frac{4\eta N}{\tau}
        (\delta_h^{t+1})^2
        +
        16\eta N^2
        (\delta_h^t+\delta_h^{t+1})^2.
    \end{aligned}
    \]
    Since a simple calculation shows that
    \[
        2\beta\leq4,
        \qquad
        4\eta N^2H\leq \frac14\leq 3\leq 4(1-4\eta N^2H)=4\lambda_\eta,
    \]
    and the quadratic terms are bounded by
    $\Gamma_\eta(\Delta_h^t)^2$, we obtain
    \[
        \KL(\pi_h^{*,\tau}||\bar\pi_h^{t+1})(s)
        \leq
        4
        \left[
            \KL(\pi_h^{*,\tau}||\pi_h^t)(s)
            +
            \lambda_\eta
            \KL(\pi_h^t||\bar\pi_h^t)(s)
        \right]
        +
        \Gamma_\eta
        (\Delta_h^t)^2.
    \]
    Taking the maximum over $s\in\gS$ proves
    \[
        \bar L_h^{t+1}
        \leq
        4L_h^t
        +
        \Gamma_\eta
        \left(
            \delta_h^t+2\delta_h^{t+1}
        \right)^2.
    \]

    \paragraph{Step 7: passing to square-root variables.}
    Recall that
    \[
    \omega_\eta=\sqrt{\Gamma_\eta}.
    \]
    By the definition
    \[
        X_h^t=\sqrt{L_h^t},
        \qquad
        \bar X_h^t=\sqrt{\bar L_h^t},
    \]
    and by the elementary inequality
    \[
        \sqrt{a+b}\leq \sqrt a+\sqrt b
        \qquad
        \text{for }a,b\geq0,
    \]
    the first recursion implies
    \[
    \boxed
    {\begin{aligned}
        X_h^{t+1}
        &=
        \sqrt{L_h^{t+1}}
        \\
        &\leq
        \sqrt{\beta L_h^t}
        +
        \sqrt{\Gamma_\eta}
        \left(
            \delta_h^t+2\delta_h^{t+1}
        \right)
        \\
        &=
        \alpha X_h^t
        +
        \omega_\eta
        \left(
            \delta_h^t+2\delta_h^{t+1}
        \right).
    \end{aligned}}
    \]
    Similarly, the second recursion implies
    \[
    \boxed
    {\begin{aligned}
        \bar X_h^{t+1}
        &=
        \sqrt{\bar L_h^{t+1}}
        \\
        &\leq
        2X_h^t
        +
        \omega_\eta
        \left(
            \delta_h^t+2\delta_h^{t+1}
        \right).
    \end{aligned}}
    \]
    This completes the proof.
\end{proof}

\subsection{Proof of Lemma~\ref{iter2}}

\begin{proof}
    We prove the recursion for a fixed $h\in[H]$ and $t\geq0$. Recall that
    \[
        \beta:=1-\eta\tau.
    \]
    The case $t=0$ is immediate from the initialization and the boundedness of the finite-horizon value functions:
    \[
        \delta_h^0
        =
        \max_{s,i,\va}
        \left|
            Q_{h,i}^0(s,\va)-Q_{h,i}^{*,\tau}(s,\va)
        \right|
        \leq NH
        =
        NH\beta^0,
    \]
    while the summation on the right-hand side is empty. Hence we focus on $t\geq1$.

    We first estimate the value-function error at step $h+1$. If $h=H$, then $V_{H+1,i}^t\equiv V_{H+1,i}^{*,\tau}\equiv0$, and the desired estimate holds under the convention
    \[
        \delta_{H+1}^t=0,
        \qquad
        \bar X_{H+1}^t=0.
    \]
    Hence assume $h+1\in[H]$. For every $s'\in\gS$,
    \[
        V_{h+1,i}^{t}(s')
        =
        Q_{h+1,i}^{t}
        (s',\bar\pi_{h+1}^{t}(s')),
        \qquad
        V_{h+1,i}^{*,\tau}(s')
        =
        Q_{h+1,i}^{*,\tau}
        (s',\pi_{h+1}^{*,\tau}(s')).
    \]
    Therefore,
    \[
    \begin{aligned}
        &
        \left|
            V_{h+1,i}^{t}(s')
            -
            V_{h+1,i}^{*,\tau}(s')
        \right|
        \\
        &\leq
        \left|
            Q_{h+1,i}^{t}
            (s',\bar\pi_{h+1}^{t}(s'))
            -
            Q_{h+1,i}^{*,\tau}
            (s',\bar\pi_{h+1}^{t}(s'))
        \right|
        \\
        &\quad+
        \left|
            Q_{h+1,i}^{*,\tau}
            (s',\bar\pi_{h+1}^{t}(s'))
            -
            Q_{h+1,i}^{*,\tau}
            (s',\pi_{h+1}^{*,\tau}(s'))
        \right|.
    \end{aligned}
    \]
    The first term is at most $\delta_{h+1}^t$. For the second term, Lemma~\ref{lem:multilinear-tv} with the uniform bound $NH$ gives
    \[
    \begin{aligned}
        &
        \left|
            Q_{h+1,i}^{*,\tau}
            (s',\bar\pi_{h+1}^{t}(s'))
            -
            Q_{h+1,i}^{*,\tau}
            (s',\pi_{h+1}^{*,\tau}(s'))
        \right|
        \\
        &\leq
        NH
        \sum_{m=1}^{N}
        \left\|
            \bar\pi_{h+1,m}^{t}(\cdot|s')
            -
            \pi_{h+1,m}^{*,\tau}(\cdot|s')
        \right\|_1.
    \end{aligned}
    \]
    By Cauchy--Schwarz and Pinsker's inequality in Lemma~\ref{lem:pinsker},
    \[
    \begin{aligned}
        &
        \sum_{m=1}^{N}
        \left\|
            \bar\pi_{h+1,m}^{t}(\cdot|s')
            -
            \pi_{h+1,m}^{*,\tau}(\cdot|s')
        \right\|_1
        \\
        &\leq
        \sqrt{
            2N
            \KL\left(
                \pi_{h+1}^{*,\tau}(s')
                ||
                \bar\pi_{h+1}^{t}(s')
            \right)
        }
        \leq
        \sqrt{2N}\,\bar X_{h+1}^t.
    \end{aligned}
    \]
    Combining the preceding estimates and taking the maximum over $s'\in\gS$ gives
    \begin{equation}
        \label{eq:value-error-for-q-tracking}
        \max_{s'\in\gS}
        \left|
            V_{h+1,i}^{t}(s')
            -
            V_{h+1,i}^{*,\tau}(s')
        \right|
        \leq
        \delta_{h+1}^t
        +
        NH\sqrt{2N}\,\bar X_{h+1}^t.
    \end{equation}

    We now turn to the $Q$-function recursion. Summing the separated update rule over $j\in\gN_i$ gives, for every $s\in\gS$ and $\va\in\gA$,
    \[
    \begin{aligned}
        Q_{h,i}^{t+1}(s,\va)
        &=
        \sum_{j\in\gN_i}
        Q_{h,i,j}^{t+1}(s,a_i,a_j)
        \\
        &=
        \beta
        \sum_{j\in\gN_i}
        Q_{h,i,j}^{t}(s,a_i,a_j)
        \\
        &\quad+
        \eta\tau
        \sum_{j\in\gN_i}
        \left[
            r_{h,i,j}(s,a_i,a_j)
            +
            \sK_{h,i,j}(\cdot|s,a_i,a_j)
            V_{h+1,i}^{t+1}(\cdot)
        \right]
        \\
        &=
        \beta Q_{h,i}^{t}(s,\va)
        +
        \eta\tau
        \left[
            r_{h,i}(s,\va)
            +
            \sP_h(\cdot|s,\va)V_{h+1,i}^{t+1}(\cdot)
        \right].
    \end{aligned}
    \]
    Here we used the definitions
    \[
        Q_{h,i}^{t}
        =
        \sum_{j\in\gN_i}Q_{h,i,j}^{t},
        \qquad
        r_{h,i}
        =
        \sum_{j\in\gN_i}r_{h,i,j},
    \]
    and the transition identity
    \[
        \sP_h(\cdot|s,\va)
        =
        \sum_{j\in\gN_i}
        \sK_{h,i,j}(\cdot|s,a_i,a_j).
    \]
    This identity holds in both controller cases. In particular, if
    $\gN_C=\emptyset$, each summand is
    $\sP_{h,o}(\cdot|s)/|\gN_i|$, so summing over the neighbors restores
    the entire action-independent continuation term. Thus the aggregated
    Bellman recursion above, and every estimate below, apply without
    assuming a nonempty controller set.

    We claim that for every $t\geq0$,
    \begin{equation}
        \label{eq:explicit-Qt}
        Q_{h,i}^{t}(s,\va)
        =
        r_{h,i}(s,\va)
        +
        \eta\tau
        \sum_{k=0}^{t-1}
        \beta^k
        \sP_h(\cdot|s,\va)
        V_{h+1,i}^{t-k}(\cdot).
    \end{equation}
    When $t=0$, the summation is empty and the identity follows from the initialization
    \[
        Q_{h,i}^0=r_{h,i}.
    \]
    Suppose the identity holds for $t$. Then
    \[
    \begin{aligned}
        Q_{h,i}^{t+1}(s,\va)
        &=
        \beta Q_{h,i}^{t}(s,\va)
        +
        \eta\tau
        \left[
            r_{h,i}(s,\va)
            +
            \sP_h(\cdot|s,\va)V_{h+1,i}^{t+1}(\cdot)
        \right]
        \\
        &=
        \beta
        \left[
            r_{h,i}(s,\va)
            +
            \eta\tau
            \sum_{k=0}^{t-1}
            \beta^k
            \sP_h(\cdot|s,\va)V_{h+1,i}^{t-k}(\cdot)
        \right]
        \\
        &\quad+
        \eta\tau r_{h,i}(s,\va)
        +
        \eta\tau
        \sP_h(\cdot|s,\va)V_{h+1,i}^{t+1}(\cdot)
        \\
        &=
        (\beta+\eta\tau)r_{h,i}(s,\va)
        +
        \eta\tau
        \sP_h(\cdot|s,\va)V_{h+1,i}^{t+1}(\cdot)
        \\
        &\quad+
        \eta\tau
        \sum_{k=1}^{t}
        \beta^k
        \sP_h(\cdot|s,\va)V_{h+1,i}^{t+1-k}(\cdot)
        \\
        &=
        r_{h,i}(s,\va)
        +
        \eta\tau
        \sum_{k=0}^{t}
        \beta^k
        \sP_h(\cdot|s,\va)V_{h+1,i}^{t+1-k}(\cdot),
    \end{aligned}
    \]
    because $\beta+\eta\tau=1$. This proves \eqref{eq:explicit-Qt} by induction.

    On the other hand,
    \[
        Q_{h,i}^{*,\tau}(s,\va)
        =
        r_{h,i}(s,\va)
        +
        \sP_h(\cdot|s,\va)
        V_{h+1,i}^{*,\tau}(\cdot).
    \]
    Since
    \[
        \eta\tau
        \sum_{k=0}^{t-1}\beta^k
        =
        1-\beta^t,
    \]
    we can rewrite this as
    \begin{equation}
        \label{eq:explicit-Qstar}
        \begin{aligned}
        Q_{h,i}^{*,\tau}(s,\va)
        &=
        r_{h,i}(s,\va)
        +
        \eta\tau
        \sum_{k=0}^{t-1}
        \beta^k
        \sP_h(\cdot|s,\va)
        V_{h+1,i}^{*,\tau}(\cdot)
        \\
        &\quad+
        \beta^t
        \sP_h(\cdot|s,\va)
        V_{h+1,i}^{*,\tau}(\cdot).
        \end{aligned}
    \end{equation}
    Subtracting \eqref{eq:explicit-Qstar} from \eqref{eq:explicit-Qt}, we obtain
    \[
    \begin{aligned}
        &
        \left|
            Q_{h,i}^{t}(s,\va)
            -
            Q_{h,i}^{*,\tau}(s,\va)
        \right|
        \\
        &=
        \Bigg|
            \eta\tau
            \sum_{k=0}^{t-1}
            \beta^k
            \sP_h(\cdot|s,\va)
            \left(
                V_{h+1,i}^{t-k}(\cdot)
                -
                V_{h+1,i}^{*,\tau}(\cdot)
            \right)
            \\
        &\quad
            -
            \beta^t
            \sP_h(\cdot|s,\va)
            V_{h+1,i}^{*,\tau}(\cdot)
        \Bigg|
        \\
        &\leq
        \eta\tau
        \sum_{k=0}^{t-1}
        \beta^k
        \max_{s'\in\gS}
        \left|
            V_{h+1,i}^{t-k}(s')
            -
            V_{h+1,i}^{*,\tau}(s')
        \right|
        \\
        &\quad+
        \beta^t
        \left\|
            V_{h+1,i}^{*,\tau}
        \right\|_\infty.
    \end{aligned}
    \]
    Since the value functions are bounded by $NH$,
    \[
        \left\|
            V_{h+1,i}^{*,\tau}
        \right\|_\infty
        \leq NH.
    \]
    Therefore,
    \[
    \begin{aligned}
        \left|
            Q_{h,i}^{t}(s,\va)
            -
            Q_{h,i}^{*,\tau}(s,\va)
        \right|
        \leq
        \eta\tau
        \sum_{k=0}^{t-1}
        \beta^k
        \max_{s'\in\gS}
        \left|
            V_{h+1,i}^{t-k}(s')
            -
            V_{h+1,i}^{*,\tau}(s')
        \right|
        +
        NH\beta^t.
    \end{aligned}
    \]
    Applying the value-error estimate \eqref{eq:value-error-for-q-tracking} with time index $t-k$ gives
    \[
    \begin{aligned}
        \left|
            Q_{h,i}^{t}(s,\va)
            -
            Q_{h,i}^{*,\tau}(s,\va)
        \right|
        \leq
        \eta\tau
        \sum_{k=0}^{t-1}
        \beta^k
        \left(
            \delta_{h+1}^{t-k}
            +
            NH\sqrt{2N}\,
            \bar X_{h+1}^{t-k}
        \right)
        +
        NH\beta^t.
    \end{aligned}
    \]
    Finally, taking the maximum over $s\in\gS$, $i\in[N]$, and $\va\in\gA$ yields
    \[
        \delta_h^t
        \leq
        \eta\tau
        \sum_{k=0}^{t-1}
        \beta^k
        \left(
            \delta_{h+1}^{t-k}
            +
            NH\sqrt{2N}\,
            \bar X_{h+1}^{t-k}
        \right)
        +
        NH\beta^t.
    \]
    This proves the lemma.
\end{proof}

\section{Proof of Lemma~\ref{gregapest}}
\label{app:qre-gap-proof}

\begin{proof}
    Fix $h\in[H]$ and $t\geq1$. Write $q_h:=H-h$ and
    $\alpha:=\sqrt{1-\eta\tau}$. Throughout the proof, for a fixed state $s\in\gS$, we write
    \[
        u_i(\rho)
        :=
        Q_{h,i}^{*,\tau}(s,\rho),
        \qquad
        \rho=(\rho_1,\ldots,\rho_N)\in\prod_{i=1}^{N}\Delta(\gA_i).
    \]
    We also use the shorthand
    \[
        \pi^*
        :=
        \pi_h^{*,\tau}(s),
        \qquad
        \bar\pi
        :=
        \bar\pi_h^t(s).
    \]
    For a unilateral deviation $\nu_i\in\Delta(\gA_i)$, define the entropy-regularized deviation advantage
    \[
    \begin{aligned}
        \mathsf{Adv}_i(\nu_i;\bar\pi)
        :=
        \left\langle
            \nu_i-\bar\pi_i,
            Q_{h,i}^{*,\tau}(s,\cdot,\bar\pi_{-i})
        \right\rangle
        +
        \tau\gH(\nu_i)
        -
        \tau\gH(\bar\pi_i).
    \end{aligned}
    \]
    By definition,
    \[
        \textnormal{QRE-Gap}
        (Q_h^{*,\tau}(s),\bar\pi_h^t(s))
        =
        \max_{i\in[N]}
        \max_{\nu_i\in\Delta(\gA_i)}
        \mathsf{Adv}_i(\nu_i;\bar\pi).
    \]
    Since choosing $\nu_i=\bar\pi_i$ gives zero advantage, each maximized advantage is nonnegative. Hence
    \[
    \begin{aligned}
        \textnormal{QRE-Gap}
        (Q_h^{*,\tau}(s),\bar\pi_h^t(s))
        &\leq
        \sum_{i=1}^{N}
        \max_{\nu_i\in\Delta(\gA_i)}
        \mathsf{Adv}_i(\nu_i;\bar\pi).
    \end{aligned}
    \]
    Let $\nu_i^\dagger\in\Delta(\gA_i)$ be a maximizer of the $i$-th term, and denote
    \[
        \nu^\dagger
        :=
        (\nu_1^\dagger,\ldots,\nu_N^\dagger).
    \]
    It is therefore sufficient to upper-bound
    \[
        \sum_{i=1}^{N}
        \mathsf{Adv}_i(\nu_i^\dagger;\bar\pi).
    \]

    We first expand this quantity. By Lemma~\ref{lem:qre-induced-structure}, the stage game
    $Q_h^{*,\tau}(s)$ is zero-sum networked separable. Hence, for every product profile $\rho$,
    \[
        \sum_{i=1}^{N}u_i(\rho)=0.
    \]
    In particular,
    \[
        \sum_{i=1}^{N}
        u_i(\bar\pi)=0,
        \qquad
        \sum_{i=1}^{N}
        u_i(\pi^*)=0.
    \]
    Thus
    \[
    \begin{aligned}
        \sum_{i=1}^{N}
        \mathsf{Adv}_i(\nu_i^\dagger;\bar\pi)
        &=
        \sum_{i=1}^{N}
        \left[
            u_i(\nu_i^\dagger,\bar\pi_{-i})
            -
            u_i(\bar\pi)
            +
            \tau\gH(\nu_i^\dagger)
            -
            \tau\gH(\bar\pi_i)
        \right]
        \\
        &=
        \sum_{i=1}^{N}
        \left[
            u_i(\nu_i^\dagger,\bar\pi_{-i})
            -
            u_i(\pi^*)
            +
            \tau\gH(\nu_i^\dagger)
            -
            \tau\gH(\bar\pi_i)
        \right].
    \end{aligned}
    \]
    We decompose the last display into three terms:
    \[
    \begin{aligned}
        \sum_{i=1}^{N}
        \mathsf{Adv}_i(\nu_i^\dagger;\bar\pi)
        =
        \mathrm{I}
        +
        \mathrm{II}
        +
        \mathrm{III},
    \end{aligned}
    \]
    where
    \[
    \begin{aligned}
        \mathrm{I}
        &:=
        \sum_{i=1}^{N}
        \Big[
            u_i(\nu_i^\dagger,\bar\pi_{-i})
            -
            u_i(\nu_i^\dagger,\pi_{-i}^*)
            -
            u_i(\pi_i^*,\bar\pi_{-i})
            +
            u_i(\pi^*)
        \Big],
    \end{aligned}
    \]
    \[
    \begin{aligned}
        \mathrm{II}
        &:=
        \sum_{i=1}^{N}
        \Big[
            u_i(\pi_i^*,\bar\pi_{-i})
            -
            u_i(\pi^*)
            +
            \tau\gH(\pi_i^*)
            -
            \tau\gH(\bar\pi_i)
        \Big],
    \end{aligned}
    \]
    and
    \[
    \begin{aligned}
        \mathrm{III}
        &:=
        \sum_{i=1}^{N}
        \Big[
            u_i(\nu_i^\dagger,\pi_{-i}^*)
            -
            u_i(\pi^*)
            +
            \tau\gH(\nu_i^\dagger)
            -
            \tau\gH(\pi_i^*)
        \Big].
    \end{aligned}
    \]

    We now control these three terms separately.

    For the first term, using the networked separable representation of $Q_h^{*,\tau}(s)$, we have
    \[
    \begin{aligned}
        \mathrm{I}
        &=
        \sum_{i=1}^{N}
        Q_{h,i}^{*,\tau}
        \left(
            s,
            \nu_i^\dagger-\pi_i^*,
            \bar\pi_{-i}-\pi_{-i}^*
        \right)
        \\
        &=
        \sum_{i=1}^{N}
        \sum_{j\in\gN_i}
        \left(
            \nu_i^\dagger-\pi_i^*
        \right)^\top
        Q_{h,i,j}^{*,\tau}(s,\cdot,\cdot)
        \left(
            \bar\pi_j-\pi_j^*
        \right).
    \end{aligned}
    \]
    By the edge bound in Eqs.~(\ref{eq:edge-q-bound}) and~(\ref{eq:edge-qre-target}), valid for both controller cases,
    \[
        \left\|
            Q_{h,i,j}^{*,\tau}
        \right\|_\infty
        \leq NH.
    \]
    It follows that
    \[
    \begin{aligned}
        \mathrm{I}
        &\leq
        NH
        \sum_{i=1}^{N}
        \sum_{j\in\gN_i}
        \left\|
            \nu_i^\dagger-\pi_i^*
        \right\|_1
        \left\|
            \bar\pi_j-\pi_j^*
        \right\|_1
        \\
        &\leq
        NH
        \sum_{i=1}^{N}
        \sum_{j=1}^{N}
        \left\|
            \nu_i^\dagger-\pi_i^*
        \right\|_1
        \left\|
            \bar\pi_j-\pi_j^*
        \right\|_1.
    \end{aligned}
    \]
    Applying Young's inequality with the weights
    \[
        a b
        \leq
        \frac12
        \left(
            \frac{\tau}{N^2H}a^2
            +
            \frac{N^2H}{\tau}b^2
        \right),
    \]
    we obtain
    \[
    \begin{aligned}
        \mathrm{I}
        &\leq
        NH
        \sum_{i=1}^{N}
        \sum_{j=1}^{N}
        \frac12
        \left[
            \frac{\tau}{N^2H}
            \left\|
                \nu_i^\dagger-\pi_i^*
            \right\|_1^2
            +
            \frac{N^2H}{\tau}
            \left\|
                \bar\pi_j-\pi_j^*
            \right\|_1^2
        \right]
        \\
        &=
        \frac{\tau}{2N}
        \sum_{i=1}^{N}
        \sum_{j=1}^{N}
        \left\|
            \nu_i^\dagger-\pi_i^*
        \right\|_1^2
        +
        \frac{N^3H^2}{2\tau}
        \sum_{i=1}^{N}
        \sum_{j=1}^{N}
        \left\|
            \bar\pi_j-\pi_j^*
        \right\|_1^2
        \\
        &=
        \frac{\tau}{2}
        \sum_{i=1}^{N}
        \left\|
            \nu_i^\dagger-\pi_i^*
        \right\|_1^2
        +
        \frac{N^4H^2}{2\tau}
        \sum_{j=1}^{N}
        \left\|
            \bar\pi_j-\pi_j^*
        \right\|_1^2.
    \end{aligned}
    \]
    By Pinsker's inequality in Lemma~\ref{lem:pinsker},
    \[
        \frac12
        \left\|
            \nu_i^\dagger-\pi_i^*
        \right\|_1^2
        \leq
        \KL(\nu_i^\dagger||\pi_i^*),
    \]
    and
    \[
        \frac12
        \left\|
            \bar\pi_j-\pi_j^*
        \right\|_1^2
        \leq
        \KL(\pi_j^*||\bar\pi_j).
    \]
    Therefore,
    \[
    \begin{aligned}
        \mathrm{I}
        &\leq
        \tau
        \sum_{i=1}^{N}
        \KL(\nu_i^\dagger||\pi_i^*)
        +
        \frac{N^4H^2}{\tau}
        \sum_{j=1}^{N}
        \KL(\pi_j^*||\bar\pi_j)
        \\
        &=
        \tau
        \KL(\nu^\dagger||\pi^*)
        +
        \frac{N^4H^2}{\tau}
        \KL(\pi^*||\bar\pi).
    \end{aligned}
    \]

    We next analyze $\mathrm{II}$. By Lemma~\ref{lem:qre-induced-structure}, $Q_h^{*,\tau}(s)$ is zero-sum networked separable. Applying Lemma~\ref{sumzero2} to the two product profiles $\pi^*$ and $\bar\pi$ gives
    \[
        \sum_{i=1}^{N}
        u_i(\pi_i^*,\bar\pi_{-i})
        =
        -
        \sum_{i=1}^{N}
        u_i(\bar\pi_i,\pi_{-i}^*).
    \]
    Moreover, the QRE first-order condition gives
    \[
        \tau\log \pi_i^*
        \overset{\boldsymbol{1}}{=}
        Q_{h,i}^{*,\tau}(s,\cdot,\pi_{-i}^*),
    \]
    so that
    \[
    \begin{aligned}
        u_i(\bar\pi_i,\pi_{-i}^*)
        -
        u_i(\pi^*)
        =
        \left\langle
            \bar\pi_i-\pi_i^*,
            Q_{h,i}^{*,\tau}(s,\cdot,\pi_{-i}^*)
        \right\rangle
        =
        \tau
        \left\langle
            \bar\pi_i-\pi_i^*,
            \log \pi_i^*
        \right\rangle.
    \end{aligned}
    \]
    Since $\sum_i u_i(\pi^*)=0$, we get
    \[
    \begin{aligned}
        \sum_{i=1}^{N}
        \left[
            u_i(\pi_i^*,\bar\pi_{-i})
            -
            u_i(\pi^*)
        \right]
        &=
        \sum_{i=1}^{N}
        u_i(\pi_i^*,\bar\pi_{-i})
        \\
        &=
        -
        \sum_{i=1}^{N}
        u_i(\bar\pi_i,\pi_{-i}^*)
        \\
        &=
        -
        \sum_{i=1}^{N}
        \left[
            u_i(\pi^*)
            +
            \tau
            \left\langle
                \bar\pi_i-\pi_i^*,
                \log\pi_i^*
            \right\rangle
        \right]
        \\
        &=
        -
        \tau
        \sum_{i=1}^{N}
        \left\langle
            \bar\pi_i-\pi_i^*,
            \log\pi_i^*
        \right\rangle.
    \end{aligned}
    \]
    Therefore,
    \[
    \begin{aligned}
        \mathrm{II}
        &=
        -
        \tau
        \sum_{i=1}^{N}
        \left\langle
            \bar\pi_i-\pi_i^*,
            \log\pi_i^*
        \right\rangle
        +
        \tau
        \sum_{i=1}^{N}
        \left[
            \gH(\pi_i^*)
            -
            \gH(\bar\pi_i)
        \right]
        \\
        &=
        -
        \tau
        \sum_{i=1}^{N}
        \left\langle
            \bar\pi_i-\pi_i^*,
            \log\pi_i^*
        \right\rangle
        -
        \tau
        \sum_{i=1}^{N}
        \left\langle
            \pi_i^*,
            \log\pi_i^*
        \right\rangle
        +
        \tau
        \sum_{i=1}^{N}
        \left\langle
            \bar\pi_i,
            \log\bar\pi_i
        \right\rangle
        \\
        &=
        \tau
        \sum_{i=1}^{N}
        \left\langle
            \bar\pi_i,
            \log\bar\pi_i-\log\pi_i^*
        \right\rangle
        \\
        &=
        \tau
        \KL(\bar\pi||\pi^*).
    \end{aligned}
    \]

    The third term $\mathrm{III}$ is handled similarly. By the QRE first-order condition,
    \[
    \begin{aligned}
        u_i(\nu_i^\dagger,\pi_{-i}^*)
        -
        u_i(\pi^*)
        &=
        \left\langle
            \nu_i^\dagger-\pi_i^*,
            Q_{h,i}^{*,\tau}(s,\cdot,\pi_{-i}^*)
        \right\rangle
        \\
        &=
        \tau
        \left\langle
            \nu_i^\dagger-\pi_i^*,
            \log\pi_i^*
        \right\rangle.
    \end{aligned}
    \]
    Thus
    \[
    \begin{aligned}
        \mathrm{III}
        &=
        \tau
        \sum_{i=1}^{N}
        \left\langle
            \nu_i^\dagger-\pi_i^*,
            \log\pi_i^*
        \right\rangle
        +
        \tau
        \sum_{i=1}^{N}
        \left[
            \gH(\nu_i^\dagger)
            -
            \gH(\pi_i^*)
        \right]
        \\
        &=
        \tau
        \sum_{i=1}^{N}
        \left\langle
            \nu_i^\dagger-\pi_i^*,
            \log\pi_i^*
        \right\rangle
        -
        \tau
        \sum_{i=1}^{N}
        \left\langle
            \nu_i^\dagger,
            \log\nu_i^\dagger
        \right\rangle
        +
        \tau
        \sum_{i=1}^{N}
        \left\langle
            \pi_i^*,
            \log\pi_i^*
        \right\rangle
        \\
        &=
        -
        \tau
        \sum_{i=1}^{N}
        \left\langle
            \nu_i^\dagger,
            \log\nu_i^\dagger-\log\pi_i^*
        \right\rangle
        \\
        &=
        -
        \tau
        \KL(\nu^\dagger||\pi^*).
    \end{aligned}
    \]

    Combining the estimates for $\mathrm{I}$, $\mathrm{II}$, and $\mathrm{III}$, we obtain
    \[
    \begin{aligned}
        \sum_{i=1}^{N}
        \mathsf{Adv}_i(\nu_i^\dagger;\bar\pi)
        &=
        \mathrm{I}
        +
        \mathrm{II}
        +
        \mathrm{III}
        \\
        &\leq
        \tau\KL(\nu^\dagger||\pi^*)
        +
        \frac{N^4H^2}{\tau}
        \KL(\pi^*||\bar\pi)
        +
        \tau\KL(\bar\pi||\pi^*)
        -
        \tau\KL(\nu^\dagger||\pi^*)
        \\
        &=
        \tau\KL(\bar\pi||\pi^*)
        +
        \frac{N^4H^2}{\tau}
        \KL(\pi^*||\bar\pi).
    \end{aligned}
    \]
    Since $s\in\gS$ was arbitrary, we have shown that
    \[
    \begin{aligned}
        \textnormal{QRE-Gap}_h(\bar\pi_h^t)
        \leq
        \max_{s\in\gS}
        \left[
            \tau
            \KL(
                \bar\pi_h^t(s)
                ||
                \pi_h^{*,\tau}(s)
            )
            +
            \frac{N^4H^2}{\tau}
            \KL(
                \pi_h^{*,\tau}(s)
                ||
                \bar\pi_h^t(s)
            )
        \right].
    \end{aligned}
    \]

    We now bound the two KL terms using the tracking estimates. For the reverse KL term, Lemma~\ref{iter1} gives, for every $s\in\gS$ and $t\geq1$,
    \[
        \frac{\eta\tau}{2}
        \KL(
            \bar\pi_h^t(s)
            ||
            \pi_h^{*,\tau}(s)
        )
        \leq
        \beta L_h^{t-1}
        +
        \Gamma_\eta
        \left(
            \delta_h^{t-1}+2\delta_h^t
        \right)^2.
    \]
    Therefore,
    \[
    \begin{aligned}
        \tau
        \KL(
            \bar\pi_h^t(s)
            ||
            \pi_h^{*,\tau}(s)
        )
        &\leq
        2\eta^{-1}\beta L_h^{t-1}
        +
        2\eta^{-1}\Gamma_\eta
        \left(
            \delta_h^{t-1}+2\delta_h^t
        \right)^2.
    \end{aligned}
    \]
    Since $L_h^{t-1}=(X_h^{t-1})^2$, Lemma~\ref{keygoal2} implies
    \[
    \begin{aligned}
        2\eta^{-1}\beta L_h^{t-1}
        &=
        2\eta^{-1}\alpha^2(X_h^{t-1})^2
        \\
        &\leq
        2\eta^{-1}\alpha^2
        C_h^2
        (t+q_h+1)^{2q_h}
        \alpha^{2t-2}
        \\
        &=
        2\eta^{-1}
        C_h^2
        (t+q_h+1)^{2q_h}
        \alpha^{2t}
        \\
        &\leq
        2\eta^{-1}
        C_h^2
        (t+q_h+2)^{2q_h}
        \alpha^{2t}.
    \end{aligned}
    \]
    Moreover, by Lemma~\ref{keygoal2},
    \[
        \delta_h^{t-1}
        \leq
        C_h(t+q_h+1)^{q_h}\alpha^{t-1}
        \leq
        C_h(t+q_h+2)^{q_h}\alpha^{t-1},
    \]
    and
    \[
        \delta_h^t
        \leq
        C_h(t+q_h+2)^{q_h}\alpha^t
        \leq
        C_h(t+q_h+2)^{q_h}\alpha^{t-1}.
    \]
    Hence
    \[
    \begin{aligned}
        \delta_h^{t-1}+2\delta_h^t
        &\leq
        3C_h(t+q_h+2)^{q_h}\alpha^{t-1}.
    \end{aligned}
    \]
    Since $\eta\tau\leq1/2$, we have
    \[
        \alpha^{-2}
        =
        (1-\eta\tau)^{-1}
        \leq 2.
    \]
    Therefore,
    \[
    \begin{aligned}
        \left(
            \delta_h^{t-1}+2\delta_h^t
        \right)^2
        &\leq
        9C_h^2
        (t+q_h+2)^{2q_h}
        \alpha^{2t-2}
        \\
        &\leq
        18C_h^2
        (t+q_h+2)^{2q_h}
        \alpha^{2t}.
    \end{aligned}
    \]
    Consequently,
    \[
    \begin{aligned}
        \tau
        \KL(
            \bar\pi_h^t(s)
            ||
            \pi_h^{*,\tau}(s)
        )
        &\leq
        \left(
            2\eta^{-1}
            +
            36\eta^{-1}\Gamma_\eta
        \right)
        C_h^2
        (t+q_h+2)^{2q_h}
        \alpha^{2t}.
    \end{aligned}
    \]

    For the forward KL term, Lemma~\ref{keygoal2} gives
    \[
    \begin{aligned}
        \frac{N^4H^2}{\tau}
        \KL(
            \pi_h^{*,\tau}(s)
            ||
            \bar\pi_h^t(s)
        )
        &\leq
        \frac{N^4H^2}{\tau}
        \bar L_h^t
        \\
        &=
        \frac{N^4H^2}{\tau}
        (\bar X_h^t)^2
        \\
        &\leq
        \frac{N^4H^2}{\tau}
        C_h^2
        (t+q_h+2)^{2q_h}
        \alpha^{2t}.
    \end{aligned}
    \]
    Combining the two bounds yields, for every $s\in\gS$,
    \[
    \begin{aligned}
        &
        \tau
        \KL(
            \bar\pi_h^t(s)
            ||
            \pi_h^{*,\tau}(s)
        )
        +
        \frac{N^4H^2}{\tau}
        \KL(
            \pi_h^{*,\tau}(s)
            ||
            \bar\pi_h^t(s)
        )
        \\
        &\leq
        \left(
            2\eta^{-1}
            +
            36\eta^{-1}\Gamma_\eta
            +
            \frac{N^4H^2}{\tau}
        \right)
        C_h^2
        (t+q_h+2)^{2q_h}
        \alpha^{2t}.
    \end{aligned}
    \]
    Since $\alpha^{2t}=(1-\eta\tau)^t$, the preceding estimate gives
    \[
    \begin{aligned}
        \textnormal{QRE-Gap}_h(\bar\pi_h^t)
        &\leq
        \left(
            2\eta^{-1}
            +
            36\eta^{-1}\Gamma_\eta
            +
            \frac{N^4H^2}{\tau}
        \right)
        C_h^2
        (t+q_h+2)^{2q_h}
        \alpha^{2t}
        \\
        &=
        G_h
        (t+q_h+2)^{2q_h}
        \alpha^{2t},
    \end{aligned}
    \]
    where
    \[
        G_h
        :=
        \left(
            \frac{2
            +
            36\Gamma_\eta}{\eta}
            +
            \frac{N^4H^2}{\tau}
        \right)C_h^2.
    \]
    This proves the first displayed estimate. Since
    \[
        \alpha^{2t}=(1-\eta\tau)^t
        \leq
        \exp\left(-\eta\tau t\right),
    \]
    any $t$ satisfying
    \[
        G_h(t+q_h+2)^{2q_h}
        \exp\left(-\eta\tau t\right)
        \leq
        \varepsilon_{\mathrm{qre}}
    \]
    also satisfies
    $\textnormal{QRE-Gap}_h(\bar\pi_h^t)\leq\varepsilon_{\mathrm{qre}}$.
    Solving this inequality gives
    \[
        t
        =
        \widetilde O\left(
            \frac{q_h+\log(G_h/\varepsilon_{\mathrm{qre}})}{\eta\tau}
        \right),
    \]
    where the logarithmic notation absorbs the additional logarithm generated by the polynomial factor $(t+q_h+2)^{2q_h}$.

    Now by the definition of $G_h$, we have that
    \[
        G_h = O(N^4H^2 C_h^2/\tau).
    \]
    Because there exists a constant $C$ such that $C_h \leq (CN^3H\log A/\tau)^{H-h}$, we have that
    \[
        G_h \leq N^4H^2(CN^3H\log A/\tau)^{H-h}\precsim (CN^4H^2\log A/\tau)^{H-h}.
    \]
    Hence, \[\log G_h = \widetilde O(H)\]

    
    Substituting this bound, $q_h\leq H$,
    and $\eta^{-1}=\Theta(N^2H)$ yields the stated uniform iteration
    bound with $\log(e/\varepsilon_{\mathrm{qre}})$ for
    $\varepsilon_{\mathrm{qre}}\in(0,1]$. This completes the proof.
\end{proof}

\section{Structural properties of zero-sum networked separable Markov games}
\label{sec:zero-sum-properties}

We collect the structural identities that are repeatedly used in the proof.

\begin{lemma}[Zero-sum value and $Q$-function identities]
    \label{sumzero}
    For every Markov policy profile $\pi=(\pi_1,\ldots,\pi_N)$, every $h\in[H]$, $s\in\gS$, and $\va\in\gA$,
    \[
        \sum_{i=1}^{N}V_{h,i}^{\pi}(s)=0,
        \qquad
        \sum_{i=1}^{N}Q_{h,i}^{\pi}(s,\va)=0.
    \]
    Consequently, for every $h\in[H]$ and $s\in\gS$, the decomposed payoff collection
    \[
        \left(Q_{h,i,j}^{\pi}(s,\cdot,\cdot)\right)_{(i,j)\in\gE_Q}
    \]
    defines a zero-sum networked separable normal-form game.

    Moreover, the OMWU estimates satisfy
    \[
        \sum_{i=1}^{N}Q_{h,i}^{t}(s,\va)=0
        \qquad
        \text{for all }t\geq0,
    \]
    and, whenever $V^t$ has been computed,
    \[
        \sum_{i=1}^{N}V_{h,i}^{t}(s)=0.
    \]
\end{lemma}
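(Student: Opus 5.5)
The plan is a backward induction on $h$, from $h=H+1$ down to $h=1$, proving simultaneously that $\sum_i V_{h+1,i}^{\pi}\equiv0$ and $\sum_iQ_{h,i}^\pi\equiv0$. The base case is the terminal condition $V_{H+1,i}^\pi\equiv0$.

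\textbf{Induction step for $Q^\pi$ and $V^\pi$.} Assume $\sum_iV_{h+1,i}^\pi\equiv0$. The transition $\sP_h(\cdot|s,\va)$ is the same for every player, so summing the Bellman definition over $i$ gives
\[
\sum_iQ_{h,i}^\pi(s,\va)=\sum_ir_{h,i}(s,\va)+\sP_h(\cdot|s,\va)\sum_iV_{h+1,i}^\pi(\cdot)=0.
\]
Here the reward sum vanishes by the zero-sum assumption of Definition~\ref{formaldef}. Taking expectation under the common product distribution $\pi_h(s)$ then gives $\sum_iV_{h,i}^\pi(s)=0$, which closes the induction.

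\textbf{Normal-form structure.} By \eqref{eq:edge-policy-q} and the transition identity \eqref{eq:edge-transition-sum}, $Q_{h,i}^\pi(s,\va)=\sum_{j\in\gN_i}Q^\pi_{h,i,j}(s,a_i,a_j)$. The edge collection $(Q^\pi_{h,i,j}(s,\cdot,\cdot))$ on the graph $\gG$ is therefore a networked separable normal-form game. Its aggregate utility sums to zero by the previous step, so it is zero-sum. Note that individual edges need not be zero-sum, and the definition does not require it. This identity holds in both controller cases, because \eqref{eq:edge-transition-sum} was established for both.

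\textbf{Algorithm estimates.} I would first prove the closed form \eqref{eq:explicit-Qt}: $Q_{h,i}^t=r_{h,i}+\eta\tau\sum_{k<t}\beta^k\sP_hV_{h+1,i}^{t-k}$. This is an induction on $t$ that uses only the aggregated update and \eqref{eq:edge-transition-sum}; it is self-contained and does not depend on the zero-sum claims. The claim then follows by a double induction, on $t$ outside and on $h$ backward inside, matching the order of computation in Algorithm~\ref{Algo}.

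At $t=0$, $Q^0_{h,i}=r_{h,i}$, whose sum over $i$ vanishes. For the step to $t+1$, suppose $\sum_iV^{t'}_{h+1,i}\equiv0$ for all $t'\le t+1$. This holds trivially at $h=H$, since $V_{H+1}\equiv0$. Summing the closed form over $i$ then gives $\sum_iQ^{t+1}_{h,i}\equiv0$. Next, \eqref{rule3} can be rewritten as $V^{t+1}_{h,i}(s)=Q^{t+1}_{h,i}(s,\bar\pi^{t+1}_h(s))$, the expectation of the aggregate $Q$ under the product profile, because each edge term is a bilinear form in $\bar\pi_{h,i}^{t+1}$ and $\bar\pi_{h,j}^{t+1}$. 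It follows that $\sum_iV^{t+1}_{h,i}\equiv0$, which passes the hypothesis down to stage $h-1$.

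\textbf{Main obstacle.} The key step is this last rewriting of \eqref{rule3} as an expectation of the aggregate $Q$. It is the only point where the edge-wise bookkeeping must be converted into a product-measure expectation. Once that identification is made, everything else is a linear summation.
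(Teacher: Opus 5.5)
Your proposal is correct and follows essentially the same argument as the paper. The policy identities use backward induction on $h$, and the OMWU estimates use an outer induction on $t$ with an inner backward induction on $h$, with the value update read as an expectation of the aggregate $Q$ under $\bar\pi_h^{t+1}(s)$. The only cosmetic difference is that you sum the unrolled closed form \eqref{eq:explicit-Qt}, which needs $\sum_iV^{t'}_{h+1,i}\equiv0$ for all earlier $t'\geq1$ (supplied by your outer induction); the paper instead sums the one-step aggregated recursion $Q^{t+1}_{h,i}=(1-\eta\tau)Q^{t}_{h,i}+\eta\tau(r_{h,i}+\sP_hV^{t+1}_{h+1,i})$ using $\sum_iQ^{t}_{h,i}\equiv0$ directly.
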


\begin{proof}
    Fix a policy profile $\pi$. We prove the policy identities by backward induction on $h$. At $h=H+1$, $V_{H+1,i}^{\pi}\equiv0$, so the value identity is immediate. If $\sum_iV_{h+1,i}^{\pi}\equiv0$, then for every $(s,\va)$,
    \[
    \begin{aligned}
        \sum_{i=1}^{N}Q_{h,i}^{\pi}(s,\va)
        &=
        \sum_{i=1}^{N}r_{h,i}(s,\va)
        +
        \sP_h(\cdot|s,\va)\cdot
        \sum_{i=1}^{N}V_{h+1,i}^{\pi}(\cdot)
        =0,
    \end{aligned}
    \]
    where the first term is zero by the stagewise zero-sum assumption. Taking expectation of the last display under the common product distribution $\pi_h(s)$ gives $\sum_iV_{h,i}^{\pi}(s)=0$. This completes the induction. Since $Q_{h,i}^{\pi}=\sum_{j\in\gN_i}Q_{h,i,j}^{\pi}$, the decomposed collection is zero-sum in the networked separable sense.

    For OMWU, use the common nonnegative edge components
    $\sK_{h,i,j}$, whose sum over $j\in\gN_i$ is $\sP_h$ in both
    controller cases. In the empty-controller case this follows from
    $\sK_{h,i,j}=\sP_{h,o}/|\gN_i|$.
    Set $Q_{h,i}^{t}:=\sum_{j\in\gN_i}Q_{h,i,j}^{t}$. At $t=0$, $Q_{h,i}^0=r_{h,i}$, so $\sum_iQ_{h,i}^0=0$. Suppose that $\sum_iQ_{h,i}^{t}=0$. During the computation of the $(t+1)$-st iterate, argue backward in $h$. At $h=H$, the update gives $Q_{H,i}^{t+1}=(1-\eta\tau)Q_{H,i}^{t}+\eta\tau r_{H,i}$, hence $\sum_iQ_{H,i}^{t+1}=0$, and the value update gives $\sum_iV_{H,i}^{t+1}=0$. If $\sum_iV_{h+1,i}^{t+1}=0$, then summing the separated $Q$-update over $j\in\gN_i$ gives
    \[
        Q_{h,i}^{t+1}
        =
        (1-\eta\tau)Q_{h,i}^{t}
        +
        \eta\tau\left(r_{h,i}+\sP_hV_{h+1,i}^{t+1}\right),
    \]
    and therefore
    \[
        \sum_{i=1}^{N}Q_{h,i}^{t+1}
        =
        (1-\eta\tau)\sum_{i=1}^{N}Q_{h,i}^{t}
        +
        \eta\tau\sum_{i=1}^{N}r_{h,i}
        +
        \eta\tau\sP_h\sum_{i=1}^{N}V_{h+1,i}^{t+1}
        =0.
    \]
    Taking expectation under $\bar\pi_h^{t+1}(s)$ in the value update yields $\sum_iV_{h,i}^{t+1}(s)=0$. This proves the estimate identities by induction over $t$ and backward induction over $h$.
\end{proof}

\begin{lemma}[Zero-sum networked separable identities]
    \label{sumzero2}
    Let $(u_i)_{i\in[N]}$ be the multilinear utilities of a zero-sum networked separable normal-form game. For any two product strategy profiles $\pi$ and $\pi'$, it holds that
    \[
        \sum_{i=1}^{N}
        \left[
            u_i(\pi_i,\pi'_{-i})
            +
            u_i(\pi'_i,\pi_{-i})
        \right]
        =0.
    \]
    Consequently, extending $u_i$ multilinearly to signed differences,
    \[
        \sum_{i=1}^{N}
        u_i(\pi_i-\pi'_i,\pi_{-i}-\pi'_{-i})
        =0.
    \]
\end{lemma}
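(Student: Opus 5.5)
The plan is to expand everything through the edge decomposition and then use the zero-sum identity on cleverly chosen pure-action profiles. By multilinearity, $u_i(\pi_i,\pi'_{-i})=\sum_{j\in\gN_i}\pi_i^\top u_{i,j}\pi'_j$, so
\[
    S:=\sum_{i=1}^N\left[u_i(\pi_i,\pi'_{-i})+u_i(\pi'_i,\pi_{-i})\right]
    =\sum_{(i,j)\in\gE}\left[\pi_i^\top u_{i,j}\pi'_j+{\pi'_i}^\top u_{i,j}\pi_j\right],
\]
where the sum runs over ordered pairs. Both sides are bilinear in the mixed strategies, so it suffices to prove $S=0$ when every $\pi_i=\delta_{a_i}$ and $\pi'_i=\delta_{b_i}$ are pure.

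For pure profiles, I will introduce, for each subset $T\subseteq\gN$, the mixed profile $\va^T$ with $a^T_k=a_k$ for $k\in T$ and $a^T_k=b_k$ otherwise. The zero-sum property gives $Z(T):=\sum_i u_i(\va^T)=\sum_{(i,j)}u_{i,j}(a^T_i,a^T_j)=0$ for every $T$. Each edge term $u_{i,j}(x,y)$ with $x\in\{a_i,b_i\}$ and $y\in\{a_j,b_j\}$ then appears in $Z(T)$ according to the membership of $i,j$ in $T$.

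Next I will average over $T$ drawn uniformly at random, i.e., each player is included independently with probability $1/2$. Then $0=\mathbb E_T Z(T)$, and each ordered edge contributes
\[
    \tfrac14\left[u_{i,j}(a_i,a_j)+u_{i,j}(b_i,b_j)+u_{i,j}(a_i,b_j)+u_{i,j}(b_i,a_j)\right].
\]
Summing over edges gives $0=\tfrac14\left[Z(\gN)+Z(\emptyset)+S\right]=\tfrac14 S$, since $Z(\gN)=Z(\emptyset)=0$. Hence $S=0$.

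The main obstacle is this averaging step. The edge games are not individually zero-sum, so the cross terms $u_{i,j}(a_i,b_j)$ can only be controlled collectively. The product-randomization argument handles this because every edge involves exactly two players; the same trick would fail for higher-order interactions. Extending from pure to mixed profiles is immediate, since $S$ is bilinear in $(\pi,\pi')$ once we note that each term is linear in each player's strategy.

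For the consequence, I will expand by multilinearity:
\[
    u_i(\pi_i-\pi'_i,\pi_{-i}-\pi'_{-i})=u_i(\pi)-u_i(\pi_i,\pi'_{-i})-u_i(\pi'_i,\pi_{-i})+u_i(\pi').
\]
This uses the fact that $u_i$ is linear in $\pi_{-i}$ jointly, because it is a sum of edge terms each linear in a single opponent. Summing over $i$, the first and last sums vanish by the zero-sum property applied to $\pi$ and $\pi'$, and the middle sum vanishes by the first identity.
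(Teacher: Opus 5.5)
Your proof is correct and is essentially the paper's argument: averaging $Z(T)$ over a uniformly random $T$ is the same as evaluating $\sum_i u_i$ at the product profile with marginals $\tfrac12\pi_i+\tfrac12\pi'_i$, which is the paper's interpolation $\rho^\theta_i=(1-\theta)\pi_i+\theta\pi'_i$ at $\theta=1/2$, and the pairwise edge structure then splits this into $\tfrac14\left[\sum_i u_i(\pi)+\sum_i u_i(\pi')+S\right]$. The one difference is your preliminary reduction to pure profiles, which is harmless but unnecessary: the zero-sum identity already holds on all mixed product profiles by linearity, so the same expansion can be run directly on $\pi$ and $\pi'$.
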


\begin{proof}
    Let
    \[
        U(\rho):=\sum_{i=1}^{N}u_i(\rho).
    \]
    Since the game is zero-sum, $U(\rho)=0$ for every product strategy profile $\rho$. For $\theta\in[0,1]$, define the product profile
    \[
        \rho^\theta_i:=(1-\theta)\pi_i+\theta\pi'_i,
        \qquad i\in[N].
    \]
    The networked separable structure makes $U(\rho^\theta)$ a polynomial of degree at most two in $\theta$. Expanding by bilinearity gives
    \[
    \begin{aligned}
        U(\rho^\theta)
        &=(1-\theta)^2U(\pi)+\theta^2U(\pi')
        \\
        &\quad+
        \theta(1-\theta)
        \sum_{i=1}^{N}
        \left[
            u_i(\pi_i,\pi'_{-i})
            +
            u_i(\pi'_i,\pi_{-i})
        \right].
    \end{aligned}
    \]
    The left-hand side, $U(\pi)$, and $U(\pi')$ are all zero. Taking any $\theta\in(0,1)$ proves the first identity.

    The second identity follows by multilinearity:
    \[
    \begin{aligned}
        &\sum_i u_i(\pi_i-\pi'_i,\pi_{-i}-\pi'_{-i})
        \\
        &=
        \sum_i u_i(\pi)
        +
        \sum_i u_i(\pi')
        -
        \sum_i\left[u_i(\pi_i,\pi'_{-i})+u_i(\pi'_i,\pi_{-i})\right]
        =0.
    \end{aligned}
    \]
\end{proof}

\section{Technical lemmas}
\label{sec:technical-lemmas}

\begin{lemma}[Pinsker's inequality]
\label{lem:pinsker}
Let $\mathcal A$ be a finite set, and let $p,q\in\Delta(\mathcal A)$. Then
\[
    \frac{1}{2}\|p-q\|_1^2
    \leq
    \KL(p\|q),
\]
where
\[
    \KL(p\|q)
    :=
    \sum_{a\in\mathcal A}
    p(a)\log\frac{p(a)}{q(a)}
\]
with the convention that zero-mass terms contribute zero and
$\KL(p\|q)=+\infty$ if there exists $a\in\mathcal A$ such that
$p(a)>0$ and $q(a)=0$. Equivalently,
\[
    \|p-q\|_1
    \leq
    \sqrt{2\KL(p\|q)} .
\]
\end{lemma}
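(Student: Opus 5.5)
We prove Pinsker's inequality in the standard way. First reduce to a two-point space with the data-processing (log-sum) inequality, then verify the binary case by a one-variable calculus argument.

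\textbf{Trivial cases.} If $\KL(p\|q)=+\infty$ there is nothing to prove. Otherwise $q(a)>0$ whenever $p(a)>0$, so every term is finite.

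\textbf{Reduction to two points.} Let $B:=\{a\in\mathcal A:p(a)\ge q(a)\}$, and set $P:=p(B)$ and $Q:=q(B)$. Then
\[
\|p-q\|_1=\sum_{a\in B}(p(a)-q(a))+\sum_{a\notin B}(q(a)-p(a))=2(P-Q),
\]
using $\sum_a(p(a)-q(a))=0$. Apply the log-sum inequality $\sum_k x_k\log(x_k/y_k)\ge(\sum_k x_k)\log(\sum_k x_k/\sum_k y_k)$ separately on $B$ and on $B^c$. This gives
\[
\KL(p\|q)\ge d(P\|Q):=P\log\frac{P}{Q}+(1-P)\log\frac{1-P}{1-Q},
\]
with the usual $0\log 0=0$ conventions. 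It therefore suffices to prove the binary inequality $d(P\|Q)\ge 2(P-Q)^2$ for $0\le Q\le P\le 1$.

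\textbf{Binary case.} Fix $P$ and define, for $Q\in(0,P]$,
\[
g(Q):=d(P\|Q)-2(P-Q)^2 .
\]
At the endpoint, $g(P)=0$. Differentiating,
\[
g'(Q)=-\frac{P}{Q}+\frac{1-P}{1-Q}+4(P-Q)=(Q-P)\Big(\frac{1}{Q(1-Q)}-4\Big).
\]
Since $Q(1-Q)\le 1/4$, the bracket is nonnegative, and $Q\le P$ makes the first factor nonpositive. So $g'\le 0$ on $(0,P]$, and $g(Q)\ge g(P)=0$. The degenerate values $Q=0$ (which gives $d=+\infty$ unless $P=0$), $P=1$, and $Q=1$ (which forces $P=1$) follow by continuity or directly.

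Combining, $\KL(p\|q)\ge d(P\|Q)\ge 2(P-Q)^2=\tfrac12\|p-q\|_1^2$. The equivalent form follows by taking square roots.

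\textbf{Main obstacle.} The only delicate point is bookkeeping in the reduction step. One must handle zero-mass terms in the log-sum inequality and keep the sign condition $Q\le P$, which is what makes $g'\le 0$. The binary calculus step itself is routine.
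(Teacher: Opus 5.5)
Your proof is correct and follows the same route as the paper: you coarse-grain to the binary partition $\{p\ge q\}$ via the log-sum inequality, then verify $d(u\|v)\ge 2(u-v)^2$ by one-variable calculus. The only cosmetic difference is that you fix the $p$-mass and show $g$ is nonincreasing in the $q$-mass from the sign of $g'$, whereas the paper fixes $v=q(A)$ and uses convexity in $u=p(A)$ (namely $f(v)=f'(v)=0$ and $f''\ge 0$).
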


\begin{proof}
If $\KL(p\|q)=+\infty$, the claim is immediate. Hence assume that $p$ is absolutely continuous with respect to $q$. If $p=q$, both sides are zero, so suppose $p\ne q$.

Let
\[
    \delta
    :=
    \frac{1}{2}\|p-q\|_1
\]
be the total variation distance, and define
\[
    A
    :=
    \{a\in\mathcal A: p(a)\ge q(a)\}.
\]
Then
\[
    p(A)-q(A)=\delta .
\]
By the log-sum inequality, coarse-graining the distributions into the two events
$A$ and $A^c$ cannot increase KL divergence, so
\[
\begin{aligned}
    \KL(p\|q)
    &\ge
    p(A)\log\frac{p(A)}{q(A)}
    +
    p(A^c)\log\frac{p(A^c)}{q(A^c)} .
\end{aligned}
\]
Write $u=p(A)$ and $v=q(A)$. Since $p\ne q$, we have $\delta>0$.
Absolute continuity implies $v>0$ because $u>0$, while
$u=v+\delta\leq1$ implies $v<1$. Thus $0<v<1$ and $u\in(v,1]$.
The right-hand side is the binary KL divergence
\[
    d(u\|v)
    :=
    u\log\frac{u}{v}
    +
    (1-u)\log\frac{1-u}{1-v}.
\]
We now show that
\[
    d(u\|v)\ge 2(u-v)^2 .
\]
Fix $v\in(0,1)$ and define
\[
    f(u):=d(u\|v)-2(u-v)^2 .
\]
Then
\[
    f(v)=0,\qquad f'(v)=0,
\]
and
\[
    f''(u)
    =
    \frac{1}{u}
    +
    \frac{1}{1-u}
    -
    4
    =
    \frac{1}{u(1-u)}-4
    \ge 0,
\]
for $u\in(0,1)$, because $u(1-u)\le 1/4$.
Hence $f$ is convex and attains its minimum at $u=v$.
By continuity the same lower bound holds at $u=1$, so
\[
    d(u\|v)\ge 2(u-v)^2=2\delta^2.
\]
Since $\delta=\frac12\|p-q\|_1$, we obtain
\[
    \KL(p\|q)
    \ge
    2\delta^2
    =
    \frac12\|p-q\|_1^2.
\]
This proves the lemma.
\end{proof}

\begin{lemma}[Multilinear policy-perturbation bound]
    \label{lem:multilinear-tv}
    Let $F$ be the multilinear extension of an $N$-player payoff tensor with $\|F\|_\infty\leq NH$ on pure action profiles. For any two product distributions $\rho=(\rho_1,\ldots,\rho_N)$ and $\sigma=(\sigma_1,\ldots,\sigma_N)$,
    \[
        |F(\rho)-F(\sigma)|
        \leq
        NH\sum_{m=1}^{N}\|\rho_m-\sigma_m\|_1.
    \]
    Consequently,
    \[
        |F(\rho)-F(\sigma)|
        \leq
        NH\sqrt{2N\KL(\rho||\sigma)}
    \]
    whenever $\KL(\rho||\sigma)<\infty$, and the same bound holds with $\KL(\sigma||\rho)$ in place of $\KL(\rho||\sigma)$ whenever that quantity is finite.
\end{lemma}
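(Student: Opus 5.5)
The first inequality follows from a telescoping (hybrid) argument over players. For $m=0,1,\ldots,N$ I introduce the interpolating product profiles
\[
    \xi^{(m)}:=(\sigma_1,\ldots,\sigma_m,\rho_{m+1},\ldots,\rho_N),
\]
so that $\xi^{(0)}=\rho$ and $\xi^{(N)}=\sigma$. This gives
\[
    F(\rho)-F(\sigma)=\sum_{m=1}^{N}\bigl[F(\xi^{(m-1)})-F(\xi^{(m)})\bigr].
\]
Consecutive profiles differ only in coordinate $m$. By multilinearity, the $m$-th summand therefore equals $\sum_{a_m}(\rho_m(a_m)-\sigma_m(a_m))\,g_m(a_m)$. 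Here $g_m(a_m)$ is the expectation of $F(a_m,\cdot)$ under the fixed product distribution of the other coordinates of $\xi^{(m)}$. As an average of pure-profile payoffs, $g_m$ satisfies $|g_m(a_m)|\le NH$. Hölder's inequality then bounds each summand by $NH\|\rho_m-\sigma_m\|_1$. Summing over $m$ gives the first claim.

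For the second claim, I apply Cauchy--Schwarz to get $\sum_m\|\rho_m-\sigma_m\|_1\le\sqrt{N\sum_m\|\rho_m-\sigma_m\|_1^2}$. Pinsker's inequality (Lemma~\ref{lem:pinsker}) is applied coordinatewise, giving $\|\rho_m-\sigma_m\|_1^2\le 2\KL(\rho_m\|\sigma_m)$. I then use the convention from the proof of Lemma~\ref{iter1} that KL between product policies is the sum of per-player KLs. Together these give $\sum_m\|\rho_m-\sigma_m\|_1\le\sqrt{2N\KL(\rho\|\sigma)}$, and multiplying by $NH$ yields the stated bound. Pinsker is symmetric in its $\ell_1$ side, so the same computation with $\KL(\sigma_m\|\rho_m)$ gives the version with reversed arguments, whenever that divergence is finite.

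There is no real obstacle here. The only point needing care is that $g_m$ is a convex combination of entries of $F$, so that the sup-norm bound transfers from pure profiles to mixed ones. Beyond that, the product-KL convention must match the one used in Lemmas~\ref{iter1} and~\ref{iter2}, where this bound is invoked.
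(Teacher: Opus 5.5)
Your proposal is correct and matches the paper's proof. The paper uses the same hybrid telescoping that swaps one marginal at a time, bounding each increment by $NH\|\rho_m-\sigma_m\|_1$ because the conditional expectations of $F$ are bounded by $NH$, and then applies Cauchy--Schwarz and coordinatewise Pinsker with product KL taken as the sum of per-player KLs, in either argument order.
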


\begin{proof}
    Let $\rho^0:=\rho$ and $\rho^N:=\sigma$. For $m=1,\ldots,N$, let $\rho^m$ be the product distribution obtained from $\rho^{m-1}$ by replacing only the $m$-th marginal by $\sigma_m$. By telescoping,
    \[
        |F(\rho)-F(\sigma)|
        \leq
        \sum_{m=1}^{N}|F(\rho^{m-1})-F(\rho^m)|.
    \]
    For a fixed $m$, the difference is an inner product between $\rho_m-\sigma_m$ and a vector whose entries are conditional expectations of $F$ and hence have absolute value at most $NH$. Thus
    \[
        |F(\rho^{m-1})-F(\rho^m)|
        \leq
        NH\|\rho_m-\sigma_m\|_1.
    \]
    Summing over $m$ proves the total-variation bound. The KL versions follow from Cauchy--Schwarz and Pinsker's inequality in Lemma~\ref{lem:pinsker}:
    \[
        \sum_{m=1}^{N}\|\rho_m-\sigma_m\|_1
        \leq
        \sqrt{N\sum_{m=1}^{N}\|\rho_m-\sigma_m\|_1^2}
        \leq
        \sqrt{2N\sum_{m=1}^{N}\KL(\rho_m||\sigma_m)}.
    \]
    Reversing the order of the arguments gives the analogous bound with $\KL(\sigma||\rho)$.
\end{proof}

\begin{lemma}[Policy perturbation bound for Markov-game $Q$-functions]
    \label{lem:q-policy-perturb}
    \label{lem34}
    For any two Markov policy profiles $\mu$ and $\pi$, every $h\in[H]$, and every $i\in[N]$,
    \[
        \left\|Q_{h,i}^{\mu}-Q_{h,i}^{\pi}\right\|_\infty
        \leq
        NH
        \sum_{\ell=h+1}^{H}
        \max_{s\in\gS}
        \sum_{m=1}^{N}
        \left\|
            \mu_{\ell,m}(\cdot|s)
            -
            \pi_{\ell,m}(\cdot|s)
        \right\|_1.
    \]
    In particular, with $\mu=\bar\pi^t$ and $\pi=\pi^{*,\tau}$,
    \[
        \left\|Q_{h,i}^{\bar\pi^t}-Q_{h,i}^{*,\tau}\right\|_\infty
        \leq
        NH
        \sum_{\ell=h+1}^{H}
        \sqrt{2N}\,\bar X_{\ell}^{t}.
    \]
\end{lemma}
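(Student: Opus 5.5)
We will prove the first bound by backward induction on $h$ via a performance-difference (simulation) argument, and then deduce the KL form from Lemma~\ref{lem:multilinear-tv}. Fix $i$ and write $E_h:=\|Q_{h,i}^{\mu}-Q_{h,i}^{\pi}\|_\infty$ and $d_\ell:=\max_{s}\sum_{m=1}^N\|\mu_{\ell,m}(\cdot|s)-\pi_{\ell,m}(\cdot|s)\|_1$. At $h=H$ both $Q$-functions equal $r_{H,i}$ because $V_{H+1,i}\equiv0$. Hence $E_H=0$, which matches the empty sum on the right-hand side.

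For $h<H$, the common reward cancels in the Bellman recursion. This gives
\[
Q_{h,i}^{\mu}(s,\va)-Q_{h,i}^{\pi}(s,\va)=\sP_h(\cdot|s,\va)\bigl(V_{h+1,i}^{\mu}-V_{h+1,i}^{\pi}\bigr)(\cdot),
\]
so $E_h\le\|V_{h+1,i}^{\mu}-V_{h+1,i}^{\pi}\|_\infty$, since $\sP_h$ is a probability measure. For each $s'$ we split the value difference as
\[
Q_{h+1,i}^{\mu}(s',\mu_{h+1}(s'))-Q_{h+1,i}^{\pi}(s',\mu_{h+1}(s'))+Q_{h+1,i}^{\pi}(s',\mu_{h+1}(s'))-Q_{h+1,i}^{\pi}(s',\pi_{h+1}(s')).
\]
The first part is at most $E_{h+1}$. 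The second part is bounded by Lemma~\ref{lem:multilinear-tv} with $F=Q_{h+1,i}^{\pi}(s',\cdot)$, whose sup-norm is at most $N(H-h)\le NH$; this yields $NH\sum_m\|\mu_{h+1,m}(\cdot|s')-\pi_{h+1,m}(\cdot|s')\|_1\le NH\,d_{h+1}$. We therefore obtain $E_h\le E_{h+1}+NH\,d_{h+1}$. Unrolling from $E_H=0$ gives $E_h\le NH\sum_{\ell=h+1}^H d_\ell$, which is the claimed bound.

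For the particular case we take $\mu=\bar\pi^t$ and $\pi=\pi^{*,\tau}$. By Lemma~\ref{lem:qre-induced-structure}, $Q^{*,\tau}=Q^{\pi^{*,\tau}}$. At each $(\ell,s)$, Cauchy--Schwarz and Pinsker (Lemma~\ref{lem:pinsker}), applied with the argument order $\KL(\pi^{*,\tau}_\ell(s)\|\bar\pi^t_\ell(s))$, give
\[
\sum_m\|\bar\pi^t_{\ell,m}-\pi^{*,\tau}_{\ell,m}\|_1\le\sqrt{2N\,\KL(\pi_\ell^{*,\tau}(s)\|\bar\pi_\ell^t(s))}\le\sqrt{2N}\,\bar X_\ell^t.
\]
These KL divergences are finite because $\bar\pi^t$ has full support: the multiplicative updates keep every coordinate positive. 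Substituting this into the first bound gives the stated inequality. The only step requiring care is the choice of decomposition. We must pair the true $Q^{\pi}$, whose uniform bound is known, with the policy difference, rather than $Q^\mu$. We must also check that the $NH$ bound applies uniformly at stage $h+1$. Otherwise the argument is routine.
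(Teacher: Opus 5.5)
Your proposal is correct and follows the paper's proof essentially step for step. You use the same reduction $\|Q_{h,i}^{\mu}-Q_{h,i}^{\pi}\|_\infty\le\|V_{h+1,i}^{\mu}-V_{h+1,i}^{\pi}\|_\infty$, the same split of the value gap with $Q^{\pi}$ paired against the policy difference via Lemma~\ref{lem:multilinear-tv}, the same unrolling from $Q_{H,i}^{\mu}=Q_{H,i}^{\pi}=r_{H,i}$, and the same Cauchy--Schwarz/Pinsker step with $\KL(\pi_\ell^{*,\tau}(s)\|\bar\pi_\ell^t(s))$ for the special case. (Your closing claim that you \emph{must} pair $Q^{\pi}$ rather than $Q^{\mu}$ with the policy difference is overly cautious: $Q^{\mu}$ is also a genuine $Q$-function bounded by $N(H-h)$. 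This does not affect the argument.)
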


\begin{proof}
    Define
    \[
        \Delta_{h,i}^Q:=\|Q_{h,i}^{\mu}-Q_{h,i}^{\pi}\|_\infty,
        \qquad
        \Delta_{h,i}^V:=\|V_{h,i}^{\mu}-V_{h,i}^{\pi}\|_\infty.
    \]
    Since the rewards are the same for both policy profiles,
    \[
        \Delta_{h,i}^Q
        \leq
        \Delta_{h+1,i}^V.
    \]
    For every $s\in\gS$,
    \[
    \begin{aligned}
        |V_{h,i}^{\mu}(s)-V_{h,i}^{\pi}(s)|
        &\leq
        |Q_{h,i}^{\mu}(s,\mu_h(s))-Q_{h,i}^{\pi}(s,\mu_h(s))|
        \\
        &\quad+
        |Q_{h,i}^{\pi}(s,\mu_h(s))-Q_{h,i}^{\pi}(s,\pi_h(s))|.
    \end{aligned}
    \]
    The first term is at most $\Delta_{h,i}^Q$. Since $\|Q_{h,i}^{\pi}\|_\infty\leq NH$, Lemma~\ref{lem:multilinear-tv} bounds the second term by
    \[
        NH
        \sum_{m=1}^{N}
        \left\|
            \mu_{h,m}(\cdot|s)
            -
            \pi_{h,m}(\cdot|s)
        \right\|_1.
    \]
    Hence
    \[
        \Delta_{h,i}^V
        \leq
        \Delta_{h,i}^Q
        +
        NH
        \max_{s\in\gS}
        \sum_{m=1}^{N}
        \left\|
            \mu_{h,m}(\cdot|s)
            -
            \pi_{h,m}(\cdot|s)
        \right\|_1.
    \]
    Combining this with $\Delta_{h,i}^Q\leq\Delta_{h+1,i}^V$ gives
    \[
        \Delta_{h,i}^Q
        \leq
        \Delta_{h+1,i}^Q
        +
        NH
        \max_{s\in\gS}
        \sum_{m=1}^{N}
        \left\|
            \mu_{h+1,m}(\cdot|s)
            -
            \pi_{h+1,m}(\cdot|s)
        \right\|_1.
    \]
    Since $Q_{H,i}^{\mu}=Q_{H,i}^{\pi}=r_{H,i}$, we have $\Delta_{H,i}^Q=0$. Iterating the preceding recursion backward proves the first claim.

    For the special case $\mu=\bar\pi^t$ and $\pi=\pi^{*,\tau}$, Lemma~\ref{lem:qre-induced-structure} identifies $Q^{*,\tau}$ with $Q^{\pi^{*,\tau}}$. Cauchy--Schwarz and Pinsker's inequality in Lemma~\ref{lem:pinsker} give, for every $\ell$ and $s$,
    \[
    \begin{aligned}
        \sum_{m=1}^{N}
        \left\|
            \bar\pi_{\ell,m}^{t}(\cdot|s)
            -
            \pi_{\ell,m}^{*,\tau}(\cdot|s)
        \right\|_1
        &\leq
        \sqrt{
            2N
            \KL\left(
                \pi_{\ell}^{*,\tau}(s)
                ||
                \bar\pi_{\ell}^{t}(s)
            \right)
        }
        \\
        &\leq
        \sqrt{2N}\,\bar X_{\ell}^{t}.
    \end{aligned}
    \]
    Substituting this into the first bound proves the second claim.
\end{proof}

\clearpage
\section{Supplementary numerical results}
\label{app:supplementary-experiments}

\begin{figure}[htbp]
    \centering
    \includegraphics[width=\linewidth]{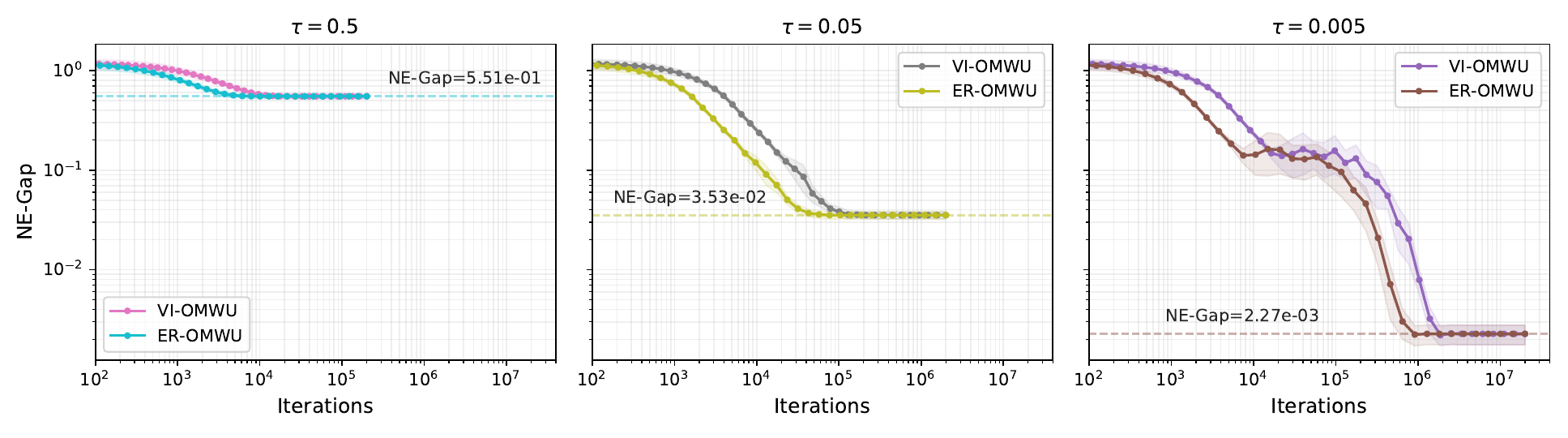}
    \caption{Convergence of the first-stage Nash-equilibrium gap for the
    complementary temperatures $\tau\in\{0.5,0.05,0.005\}$. Curves show means
    over 10 sampled game instances on log-log axes, with VI-OMWU plotted using
    the matched update-count convention.}
    \label{fig:negap-extra-tau}
\end{figure}

This supplementary figure reports the same comparison as \Cref{fig:negap} for
the remaining values in the temperature grid. The qualitative behavior is
consistent with the main panels: ER-OMWU reaches a terminal NE-gap comparable
to VI-OMWU under the matched update-count convention, and smaller $\tau$
corresponds to a lower terminal gap after a longer transient period.

\end{document}